\definecolor{darkred}{rgb}{0.9,0.1,0.1}
\newcommand{\changefont}{
    \fontsize{9}{11}\selectfont
}
\newtheorem{theorem}{Theorem}[section]
\newtheorem{proposition}[theorem]{Proposition}
\newtheorem{remark}[theorem]{Remark}
\newtheorem{lemma}[theorem]{Lemma}
 \newtheorem{corollary}[theorem]{Corollary}
      \newtheorem{assumption}{}
\theoremstyle{definition}
\newcommand{\mb}[1]{\mathbb{#1}}
\newcommand{\mc}[1]{\mathcal{#1}}
\newcommand{\mbf}[1]{\mathbf{#1}}
\newcommand{\lp}{\langle}
\newcommand{\rp}{\rangle}
\newcommand{\vp}{\varphi}
\newcommand{\ve}{\varepsilon}
\newcommand{\im}{\text{i}}
 \DeclareMathOperator{\spec}{Spec}
\DeclareMathOperator{\hess}{Hess}
\DeclareMathOperator{\supp}{supp}
\author[G. Di Ges\`u]{Giacomo Di Ges\`u}
\address{ Giacomo Di Ges\`u, Univerist\`a di Pisa, 
Largo Bruno Pontecorvo 5, 
56127 Pisa, Italy.}
\email{giacomo.digesu@unipi.it}
\title{Spectral analysis of discrete metastable diffusions}
\begin{document}

\keywords{Metastability, Semiclassical spectral theory, Spectral gap, Witten Laplacian, discrete Schr\"odinger Operators, Mean field models}

\subjclass[2010]{60J75, 82B44, 82B20, 76M45, 81Q10, 81Q20, 35J10, 47B25, 47B39}

\begin{abstract}
\noindent
We consider a discrete  Schr\"odinger operator 
$ H_\ve= -\ve^2\Delta_\ve + V_\ve$ on $\ell^2(\ve \mb Z^d)$, where $\ve>0$ is a small parameter 
and the potential $V_\ve$ is defined in terms of a multiwell energy landscape $f$ on $\mb R^d$.  
This operator can be seen as a
discrete analog of the semiclassical Witten Laplacian
of $\mb R^d$. 
It is unitarily equivalent to the generator 
of a diffusion on $\ve \mb Z^d$, 
satisfying the detailed balance condition with respect to the Boltzmann weight $\exp{(-f/\ve)}$.      
These type of diffusions exhibit metastable behavior and arise in the context of disordered mean field models in Statistical Mechanics. 
We analyze 
  the bottom of the spectrum of $H_\ve$ in the semiclassical regime $\ve\ll1$
and show that there is a one-to-one correspondence between exponentially small eigenvalues and local minima of $f$. Then we analyze in more detail the bistable case    
and compute the precise 
asymptotic splitting between the two exponentially small eigenvalues. 
Through this purely spectral-theoretical analysis of the     discrete Witten Laplacian 
we recover in a self-contained way the Eyring-Kramers formula for the metastable tunneling time of the underlying stochastic process. 
\end{abstract}

\maketitle

\section{Introduction}

\noindent
This paper derives sharp semiclassical spectral asymptotics for Schr\"odinger operators acting on $\ell^2(\ve\mb Z^d)$ of the form 
\begin{equation}\label{introH}       H_\ve    =     - \ve^2 \Delta_\ve    +   V_\ve         ,    \  \  \     0<\ve\ll1  ,  
\end{equation}
where $\Delta_\ve$ is the discrete nearest-neighbor Laplacian of $\ve \mb Z^d$ and $V_\ve$ is a possibly unbounded multiplication operator, defined in terms
of a 
multiwell energy  landscape $f$.
More precisely, given $f\in C^2(\mb R^d)$, we identify $V_\ve$
with the function 
\begin{equation}   \label{introV}  V_\ve (x)     =    e^{\frac{f(x)}{2\ve}} (\ve^2 \Delta_\ve e^{-\frac{f}{2\ve}} )(x)      . 
\end{equation}  
We shall dub $H_\ve$ the discrete semiclassical Witten Laplacian associated with $f$. This is motivated by the following observation: 
the continuous space version of $H_\ve$, i.e. the Schrödinger operator $\mc H_\ve$ on $L^2(\mb R^d)$ 
obtained from~\eqref{introH},\eqref{introV} by substituting $\Delta_\ve$ with the Laplacian $\Delta$ of $\mb R^d$, reads 
\begin{equation}\label{introHcont}
\mc H_\ve    =   - \ve^2 \Delta +   \tfrac 14 |\nabla f|^2 - 
 \tfrac{\ve}{2} \Delta f,
\end{equation}
and thus coincides with the restriction on functions of the Witten Laplacian of $\mb R^d$ \cite{Witten, HSIV, HN(LNM), Miclo}. It is well known that the latter has deep connections to problems in 
Statistical Mechanics~\cite{H(WS)}. In some situations, e.g. when considering lattice models of Statistical Mechanics as discussed below, one is 
led in a natural way 
to its discrete version~\eqref{introH},\eqref{introV}. 
The continuous space operator~\eqref{introHcont} is then rather a simplifying idealization of~\eqref{introH},\eqref{introV}: it is indeed easier to analyze $\mc H_\ve$ by exploiting the standard machinery of differential and semiclassical calculus, but the results might be a priori less accurate in making predictions. This paper shows a general
strategy which permits to obtain sharp semiclassical estimates directly in the discrete setting. 
 
\

We are mainly inspired by the analysis~\cite{HKN} on the continuous space Witten Laplacian and by the series of papers~\cite{KleinRosenbergerI, KleinRosenbergerII, KleinRosenbergerIII, KleinRosenbergerIV} by M. Klein and E. Rosenberger, 
who develop  an approach to 
 the semiclassical spectral analysis of discrete Schr\"odinger operators of the form~\eqref{introH} via microlocalization techniques. We refer also to the earlier work~\cite{HSharper} and 
 to~\cite{colinpan,colindiscrete} for semiclassical investigations in discrete settings.

\

\

\noindent
{\bf Brief description of the main results.}  

\,

\noindent
Following in particular the approach of~\cite{KleinRosenbergerII} we show that 
under mild regularity assumptions on $f$ there is a low-lying spectrum of exponentially small eigenvalues which is well separated from the rest of the spectrum. Moreover the number of exponentially small eigenvalues equals the number of local minima of $f$, see Theorem~\ref{main1F} below. 

\

Then we analyze in more detail the case of two local minima of $f$
and compute the precise 
asymptotic splitting between the two small eigenvalues. 
From a general point of view, this corresponds to a subtle 
tunneling calculation through other, non-resonant wells ~\cite{HSIII} of the Schr\"odinger potential $V_\ve$, corresponding to saddle points of $f$.

 \

  As opposed to~\cite{HKN} we work again under mild regularity assumptions on $f$ and proceed with a streamlined, direct strategy that avoids WKB expansions, a priori Agmon estimates and also the underlying complex structure of the Witten Laplacian.  Much of the simplification is obtained via a suitable choice of 
global quasimodes. 
We show that the leading asymptotic of the exponentially small eigenvalue gap is given by an Eyring-Kramers formula: \[    \lambda(\ve )         =          
   \ve    A   e^{-\frac{E}{\ve}}    (1 + o(1))     
     ,    \]
where $A,E>0$ are explicit constants depending on $f$ (see Theorem~\ref{spectralgapF} for a precise statement) that turn out to coincide with the one obtained  
in the continuous 
case for $\mc H_\ve$ in~\cite{HKN} (see also~\cite{BGK, Eckhoff}). In other terms, 
the geometric constraint imposed by the lattice turns out to be negligible in first order approximation. The vanishing rate of the remainder term depends on the regularity of $f$ around its critical points. We show  that $f\in C^3(\mb R^d)$ 
implies an error of order $O(\sqrt\ve)$. 
\

The spectral Eyring-Kramers formula in the discrete setting considered here is not new. Indeed, up to some minor variants, this type of result has been derived in the framework of discrete metastable diffusions, by analyzing mean transition times 
of Markov processes via potential theory~\cite{BEGKdiscII}. We shall discuss below more in detail the probabilistic interpretation of our results. 
The present paper 
shows  that, as in the continuous setting, also 
 in the discrete setting the 
 Eyring-Kramers formula can be obtained 
by a direct and self-contained 
spectral approach, without relying at all on probabilistic potential theory.

\

We remark that the method we use to analyze the exponentially small eigenvalues 
can be extended also to the general case with more than two local minima. The extension is based on an iterative finite-dimensional matrix procedure, very similar to the one considered in~\cite{HKN} (see also~\cite{thesis} and references therein). This procedure is independent of the rest and not related to the peculiar analytical difficulties arising from the discrete character of the setting. To not 
 obscure the exposition of the 
main ideas of this paper, the general case will be discussed somewhere else.

\

\

\noindent{\bf Connection to discrete metastable diffusions.}

\,

\noindent
Our main motivation for investigating the spectral properties of $H_\ve$ 
stems from its close connection to certain metastable diffusions with state space $\ve \mb Z^d$. These have been extensively studied in the probabilistic literature, mainly due to their paradigmatic properties and their applications to problems in Statistical Mechanics~\cite{cassandro, BEGKdiscII, bianchi, DLPeres,BeltranLandim, LMT, SS}. 
The general, continuous time version might be described in terms 
of a  
Markovian generator $L_\ve$
of the form 
\begin{equation}\label{introL}       L_{\ve} \psi(x)  =      \sum_{v\in \mb Z^d}      
      r_\ve (x, x+\ve v)   \left[ \psi(x+ \ve v )      -    \psi(x )   \right]
    ,     
\end{equation}
with $r_\ve (x, x+\ve v)$ being the rate of a jump from $x$ to $x+\ve v$.
The jump rates are assumed to satisfy the detailed balance condition with respect to the Boltzmann weight $\rho_\ve = e^{-f/\ve}$ on $\ve\mb Z^d$, so that $L_{\ve}$ 
may be realized as a selfadjoint operator acting on the weighted space $\ell^2(\ve \mb Z^d; \rho_\ve)$.  Moreover the scaling is chosen so that $ L_\ve$ formally converges for $\ve\to 0$ to a first order differential operator on $\mb R^d$, corresponding to a deterministic transport along a  vector field. One might thus think of the dynamics as a small stochastic perturbation of a deterministic motion. 
A standard choice of jump rates satisfying the above requirements is given by 
\begin{equation}\label{introRates} 
   r_\ve(x, x+ \ve v)   =   \begin{cases}   \tfrac {1}{\ve}e^{- \tfrac {1}{2\ve} [ (f(x+\ve v) - f(x)]}       &     \text{ if }     v\in \{-e_k, e_k\}_{k=1,\dots, d},  \\
       0      & \text{ otherwise} ,  
   \end{cases}
\end{equation}
where $(e_1, \dots, e_d)$ is the standard basis of $\mb R^d$. 

\

There is a direct link between the discrete Witten Laplacian and 
discrete diffusions as described above: 
up to a change of sign and multiplicative factor $\ve$,
the Markovian generator $L_\ve$ given by~\eqref{introL},\eqref{introRates} and the discrete Witten Laplacian given by~\eqref{introH},\eqref{introV} are formally unitarily equivalent.
This can be seen by the well-known ground state transformation, which turns a Schr\"odinger operator into a diffusion operator~\cite{JLMS}, see Proposition~\ref{GST} below for the precise statement. As a consequence, our spectral analysis of
$H_\ve$ can be immediately translated into analogous results on
$L_\ve$, see Corollary~\ref{MainCorollary}. The advantage of working with $H_\ve$
is that in the flat space $\ell^2(\ve \mb Z^d)$ one can exploit Fourier analysis and related microlocalization techniques.

\

We remark that discrete diffusions as described above naturally arise in the context of disordered mean field models in Statistical Mechanics.
A prominent example is the dynamical random field Curie-Weiss model~\cite{FMP, BEGKdiscII, bianchi, SS}, which is well described by a discrete diffusion on $\ve \mb Z^d$ after a suitable reduction in terms of order parameters.
The limit $\ve\to 0 $ then corresponds to the thermodynamic limit of infinite volume.

\

A characteristic feature of the dynamics $\partial_t \psi = L_\ve \psi$
for small $\ve$ is metastability: if $f$ admits several local minima the 
system remains trapped for exponentially large times in neighborhoods of local minima of $f$ before exploring the whole state space. This is due to the fact that the local minima of $f$ turn out to be exactly the stable equilibrium points of the limiting deterministic motion. 
 We refer to \cite{FW, OV, BdH} for comprehensive introductions to metastability of Markov processes and e.g. to~\cite{berglund, BODOTO, Landimreview} for shorter surveys.

A key issue in the understanding of metastability is to quantify 
the time scales at which metastable transitions between local minima occur. For discrete diffusions of type~\eqref{introL} sharp asymptotic estimates 
have been obtained in~\cite{BEGKdiscI, BEGKdiscII} in terms of average hitting times. The formula for the leading asymptotics is called
Eyring-Kramers formula. 
In~\cite{BEGKdiscII} it is also shown that  there is a very clean relationshp between the
metastable transition times and the low-lying spectrum of $-L_\ve$.
Indeed, there is a cluster of exponentially small eigenvalues, each one being asymptotically equivalent to the inverse 
of a metastable transition time.

The problem of determining the asymptotic behavior of metastable transition times can therefore be equivalently phrased as a problem of spectral asymptotics of the generator $L_\ve$ and thus of $H_\ve$. Due to these facts, one can 
view the method presented in this paper as a spectral approach to the computation of metastable transition times in discrete setting.

\

\noindent{\bf Plan of the paper.} In Section~\ref{SectionResults} we introduce the setting, provide precise definitions and basic properties for the discrete Witten Laplacian $H_\ve$, the diffusion generator $L_\ve$ and state our main results: 
Theorem~\ref{main1F}, saying that there are as many exponentially small eigenvalues  of $H_\ve$ as minima of $f$ and that there is a large gap of order $\ve$ between them and the rest of the spectrum; Theorem~\ref{spectralgapF}, giving the precise splitting between exponentially small eigenvalues due to the tunnel effect (Eyring-Kramers formula). 
In Section~\ref{SectionTools} we collect some preliminary tools which can be seen as general means for a semiclassical analysis on the lattice: the IMS formula for the discrete Laplacian which permits to localize quadratic forms
on the lattice; estimates on the discrete semiclassical Harmonic oscillator based on microlocalization techniques; and results on sharp Laplace asymptotics on the lattice $\ve \mb Z^d$ based on the Poisson summation formula. 
In Section~\ref{SectionMainProof1} and Section~\ref{SectionMainProof2} we provide the proofs of Theorem~\ref{main1F} and Theorem~\ref{spectralgapF}  respectively.

\

\section{Precise setting and main results}\label{SectionResults}

\noindent
Throughout the paper we shall use the following notation. We consider the symmetric set
\[ \mc N  =   \{e_k, - e_k:  k=1,\dots, d \} \subset \mb Z^d ,    \]
 where $(e_1, \dots, e_d)$ is the standard basis of $\mb R^d$. 
For $\ve>0$ the symbols  $\nabla_\ve$ and   $\Delta_\ve$ denote respectively the rescaled  discrete gradient and the rescaled discrete Laplacian of the lattice 
$\ve \mb Z^d$, with graph structure induced by $\ve \mc N$. More precisely,  for every   $\psi : \ve \mb Z^d \to \mb R$     we define 
\begin{equation*}
\nabla_\ve \psi    \,  (x,v)   =     \ve^{-1} \left[\psi(x+ \ve v)  - \psi(x) \right]         ,    \    \ \      \forall    x\in \ve \mb Z^d \text{ and } v\in \mc N                     ,
\end{equation*}
\begin{equation*}      \Delta_\ve \psi   \, (x)       =         \ve^{-2}      \sum_{v\in \mc N}\left[  \psi (x+\ve v)  -    \psi(x)     \right]      ,  \  \   \
  \forall    x\in \ve \mb Z^d        .
\end{equation*}
We shall work on the Hilbert space $\ell^2(\ve \mb Z^d) =\{ \psi\in  \mb R^{\ve \mb Z^d} :  \| \vp \|_{\ell^2(\ve \mb Z^d)}  < \infty\}$, 
where 
 $\|\cdot\|_{\ell^2(\ve \mb Z^d)}$ is the norm corresponding to the 
scalar product 
\begin{equation*}     \lp \psi, \psi'\rp_{\ell^2(\ve \mb Z^d)}        =            \ve^d    \sum_{x\in \ve \mb Z^d }    \psi(x) \, \psi'(x)    .    
\end{equation*}  
The discrete Laplacian $\Delta_\ve$ is a bounded linear operator on $\ell^2(\ve \mb Z^d)$. It is also selfadjoint and $-\Delta_\ve$ is nonnegative. More precisely,  for  
 $\psi, \psi'\in \ell^2(\ve \mb Z^d)$, once can check that 
\[     \lp  -   \Delta_\ve     \psi,   \psi'\rp_{\ell^2(\ve \mb Z^d)}      =        \lp       \psi,   -   \Delta_\ve    \psi'\rp_{\ell^2(\ve \mb Z^d)}    =    
     \lp \nabla_\ve  \psi , \nabla_\ve  \psi'\rp_{\ell^2(\ve \mb Z^d; \mb R^\mc N)}          ,                     \]
and in particular
\[     \lp  -   \Delta_\ve     \psi,   \psi\rp_{\ell^2(\ve \mb Z^d)}      =         
      \| \nabla_\ve  \psi \|^2_{\ell^2(\ve \mb Z^d; \mb R^\mc N)}   \geq 0        .                    \]
Here $\|\cdot\|_{\ell^2(\ve \mb Z^d;\mb R^\mc N)}$ is the norm induced by the scalar product  
\begin{equation*}   \lp \alpha, \alpha'\rp_{\ell^2(\ve \mb Z^d;\mb R^\mc N)}     =            \frac{\ve^d}{2}    \sum_{x\in \ve \mb Z^d } \sum_{v\in \mc N}    \alpha(x,v) \, \alpha'(x,v)         ,   
 \end{equation*}  
defined for $\alpha, \alpha'\in \ell^2(\ve \mb Z^d;\mb R^\mc N) 
  :=\{ \alpha \in  \mb R^{\ve \mb Z^d \times \mc N} :  \| \alpha(\cdot, v) \|_{\ell^2(\ve \mb Z^d)}  < \infty \text{ forall } v\in \mc N\}  $ 
(the space of square integrable $1$-forms on the graph $\ve \mb Z^d$).

\

\subsection{ Definition and basic properties of $H_\ve$}

\

\noindent
Given a function $f:\mb R^d\to \mb R$ and a parameter $\ve>0$, we define a new function $ V_\ve :\mb R^d\to \mb R$ by setting 
\begin{equation}\label{mainV}     V_\ve (x)     =       \sum_{v\in  \mc N}  \big[    \    e^{- \tfrac 12 
\nabla_\ve f (x, v)  }      
-    1       \big]        ,   \  \   \  \forall x\in \mb R^d     .   
\end{equation}
 Note that the expression \eqref{mainV}  for $V_\ve$  and the one given in the introduction in~\eqref{introV} are equal by definition of $\Delta_\ve$ and $\nabla_\ve$. 
We shall identify in the sequel $V_\ve$ with the corresponding multiplication operator in $\ell^2(\ve \mb Z^d)$ having dense domain 
${\textrm Dom}(V_\ve) = \{    \psi \in \ell^2(\ve \mb Z^d):   V_\ve \psi \in \ell^2(\ve \mb Z^d)     \}$. 
The restriction of $  V_\ve$
to  $C_c(\ve \mb Z^d)$ (i.e. the set of $\psi \in \mb R^{\ve\mb Z^d}$ such that $\psi(x)=0$ for all but finitely many $x$)  is essentially selfadjoint.

\

\noindent
We are interested in the  Schr\"odinger-type operator $H_\ve :  {\textrm Dom}(V_\ve)   \to   \ell^2(\ve \mb Z^d)  $ given by 
\[       H_\ve   =         -\ve^2 \Delta_\ve    +    V_\ve                  .       \]
Note that $H_\ve$ is a selfadjoint operator in $ \ell^2(\ve \mb Z^d)  $ and 
its restriction to $C_c(\ve\mb Z^d)$ is essentially selfadjoint. This follows e.g.
from the Kato-Rellich Theorem~\cite[Theorem 6.4]{Teschl}, using the 
analogous properties of $V_\ve$ and the fact that $\Delta_\ve  $ is bounded and selfadjoint. 

Moreover, from the pointwise bound $V_\ve \geq - 2d$ and the nonnegativity of $-\Delta_\ve$ it follows immediately that $H_{\ve}$ is bounded from below. 
An important observation is that the quadratic form associated with $H_{\ve}$ is not only bounded from below, but even nonnegative. This is due to the special form 
of the potential $V_\ve$. Indeed, a straightforward computation yields    
 \begin{equation}\label{Witten*}    \lp H_{\ve}    \psi , \psi   \rp_{\ell^2(\ve\mb Z^d)} 
         =            \|  \nabla_{f,\ve} \psi  \|^2_{\ell^2(\ve \mb Z^d; \mb R^\mc N)}  \geq 0       ,   \   \     \    \forall    \psi \in  {\textrm Dom}(V_\ve)                ,      
\end{equation} 
where $\nabla_{f, \ve}$ denotes a suitably weighted discrete gradient: 
\begin{equation*}
\nabla_{f,\ve}  \psi    \,  (x,v)   =        \ve    e^{-\frac{f(x)+ f(x+\ve v)}{4\ve}}   \nabla_\ve (e^{\frac{f}{2\ve}}\psi) \, (x,v)   ,           \   \   \        \    \ \  
   \forall    x\in \ve \mb Z^d \text{ and } v\in \mc N      .      
   \end{equation*}
It follows in particular that the spectrum of $H_\ve$ is contained in $[0,\infty)$.

 \

  \begin{remark}
The property~\eqref{Witten*} states that $H_\ve$ is the Laplacian associated to the distorted
gradient $\nabla_{f,\ve}$. As it is done for the continuous space Witten Laplacian~\cite{Witten, HSIV}, it is possible to give an extension of $H_\ve$ in the sense of Hodge theory.The extended operator is then defined on a suitable algebra of discrete differential forms and satisfies the usual intertwining relations. We shall not use this fact and refer to~\cite{thesis} for details. 
\end{remark}

\

\subsection{Assumptions and main results}

\

\noindent
We shall consider the following two sets of hypotheses on the function  $f$. 
Here and in the following $|\cdot|$ denotes  the standard euclidean norm on $\mb R^d$. The gradient and Hessian of a function on $\mb R^d$
are denoted by $\nabla$ and $\hess$. 
\begin{assumption} \label{H1F}$f\in C^3(\mb R^d)$ and all its critical points are nondegenerate. Moreover 
\begin{itemize}
\item[(i)]   $\liminf_{|x| \to \infty } |\nabla f (x)| >0 $.
\item[(ii)]   $\hess f$ is bounded on $\mb R^d$.
\end{itemize}
\end{assumption}
\noindent
Note that~\ref{H1F} implies that the set of critical points of $f$ is finite. Indeed, nondegenerate critical points are necessarily isolated and by~(i) the critical points of $f$ must be contained in a compact subset of $\mb R^d$. 

\

\noindent
To analyze the exponential splitting between small eigenvalues we will 
assume for simplicity the following more restrictive hypothesis.

\begin{assumption}  \label{H2}
Hyptohesis~\ref{H1F} holds true. Moreover
\begin{itemize}
\item[(i)]     $\liminf_{|x| \to \infty }   \frac{f(x)}{|x|} > 0 $.
\item[(ii)]    The function $f$ has exactly two local minimum points $m_0, m_1\in \mb R^d$.      
\end{itemize}
\end{assumption}

\noindent
The first result we present shows that under Assumption~\ref{H1F} the essential spectrum
of $H_\ve$, denoted by $\spec_{\textrm{ess}} (H_{\ve})$, is uniformly bounded away from zero
and that its discrete spectrum, denoted by $\spec_{\textrm{disc}} (H_{\ve})$, is well separated into two parts: one consists of exponentially small eigenvalues, the other of eigenvalues which are at least at distance of order $\ve$
from zero. Moreover the rank of the spectral projector corresponding to the exponentially small eigenvalues equals exactly the number of local minima of $f$: 
\begin{theorem}\label{main1F} Assume~\ref{H1F} and denote by $N_0\in \mb N_0$ the number of local minima of $f$. There exist 
constants $\ve_0 \in (0, 1)$ and $C>0$ such that
for each $\ve\in (0, \ve_0]$ the following properties hold true.
\begin{itemize}

\item[(i)]        $  \spec_{\textrm{ess}} (H_{\ve})        \subset [C, \infty) $   .

\item[(ii)]    $    |  \spec_{\textrm{disc}} (H_{\ve})   \cap [0, C\ve]   |        \leq N_0$.

\item[(iii)]   $H_\ve$ admits at least $N_0$ eigenvalues counting multiplicity. In the nontrivial case that 
$N_0\neq 0$, the $N_0$-th eigenvalue $\lambda_{N_0}(\ve)$ (according to increasing order and counting multiplicity) satisfies the bounds
\[0\leq \lambda_{N_0}(\ve)   \leq    e^{-C/\ve}.  \]
\end{itemize}
\end{theorem}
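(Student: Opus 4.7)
The plan is to prove the three items separately, combining throughout the nonnegativity identity~\eqref{Witten*} with a localization around the critical points of $f$ and harmonic approximation near each of them. The pointwise Taylor expansion of~\eqref{mainV}, together with~\ref{H1F}(ii), yields uniformly in $x$
\[V_\ve(x) = \tfrac14 |\nabla f(x)|^2 - \tfrac{\ve}{2}\Delta f(x) + O(\ve),\]
so that $V_\ve$ mimicks the Schr\"odinger potential appearing in the continuous Witten Laplacian~\eqref{introHcont}. For~(i), this expansion combined with~\ref{H1F}(i) furnishes $R_0>0$ and $c>0$ such that $V_\ve(x)\geq 2c$ whenever $|x|\geq R_0$ and $\ve$ is small enough; since $-\ve^2\Delta_\ve\geq 0$, any $\psi$ supported in $\{|x|\geq R_0\}$ satisfies $\lp H_\ve\psi,\psi\rp\geq 2c\|\psi\|^2$, and Persson's characterization of the essential spectrum gives $\inf\spec_{\textrm{ess}}(H_\ve)\geq 2c$.

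For~(ii), the core task is to show $\lambda_{N_0+1}(\ve)\geq C\ve$; together with (i) this yields the counting bound by min-max. Fix a smooth partition of unity $\chi_0^2+\sum_{u\in\mc U}\chi_u^2\equiv 1$ on $\mb R^d$, where $\mc U$ is the (finite, by~\ref{H1F}) set of critical points of $f$, each $\chi_u$ is supported in a tiny ball around $u$, and $\chi_0$ lives in the complement. The discrete IMS-type identity from Section~\ref{SectionTools} gives
\[\lp H_\ve\psi,\psi\rp = \lp H_\ve(\chi_0\psi),\chi_0\psi\rp + \sum_{u\in\mc U}\lp H_\ve(\chi_u\psi),\chi_u\psi\rp - E_\ve(\psi),\]
with $|E_\ve(\psi)|\leq C\ve^2\|\psi\|^2$. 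On $\supp\chi_0$ the potential is bounded below by a positive constant. Near a critical point $u$, a rescaled Taylor expansion reduces $H_\ve$ to a small perturbation of a discrete semiclassical harmonic oscillator attached to $\hess f(u)$, whose low-lying spectrum, by the harmonic-oscillator estimates of Section~\ref{SectionTools}, is close to that of the continuous model $-\ve^2\Delta + \tfrac14 |\hess f(u)\, y|^2 - \tfrac{\ve}{2}\mathrm{tr}\,\hess f(u)$. At a non-minimum critical point all its eigenvalues are of order $\ve$ (the ground state contributing $\ve\sum_{\mu_i<0}|\mu_i|>0$); at a local minimum $m_j$ only one eigenvalue is close to zero, with a spectral gap of order $\ve$ above it. Letting $\Pi_\ve$ be the $N_0$-dimensional projector onto the span of the lifted local ground states at the $m_j$, any $\psi\perp\mathrm{Ran}\,\Pi_\ve$ satisfies $\lp H_\ve\psi,\psi\rp\geq C\ve\|\psi\|^2$ after absorbing the $O(\ve^2)$ localization error, and min-max yields (ii).

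For~(iii), I introduce for each local minimum $m_j$ the quasi-mode $\psi_j := \chi_j\, e^{-(f-f(m_j))/(2\ve)}$, with $\chi_j\in C_c(\mb R^d)$ equal to $1$ near $m_j$ and supported in a neighborhood kept away from every other critical point. Identity~\eqref{Witten*} gives
\[\lp H_\ve\psi_j,\psi_j\rp = \tfrac{\ve^d}{2}\sum_{x,v}e^{-[f(x)+f(x+\ve v)]/(2\ve)+f(m_j)/\ve}\bigl(\chi_j(x+\ve v)-\chi_j(x)\bigr)^2.\]
The summand vanishes where $\chi_j$ is constant and is thus supported where $f\geq f(m_j)+2\delta$ for some $\delta>0$, yielding $\lp H_\ve\psi_j,\psi_j\rp\leq C e^{-2\delta/\ve}$. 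The sharp Laplace asymptotics on $\ve\mb Z^d$ of Section~\ref{SectionTools} give $\|\psi_j\|^2\geq c\ve^{d/2}$, and since the $\psi_j$ have disjoint supports the min-max principle applied to their $N_0$-dimensional span delivers $\lambda_{N_0}(\ve)\leq e^{-C/\ve}$. The main technical obstacle lies in the harmonic-oscillator reduction used in (ii): on the lattice one must approximate the spectrum of the rescaled $H_\ve$ near a critical point by that of the explicit continuous quadratic model uniformly in $\ve$, with errors smaller than the $O(\ve)$ gap; the microlocal and Poisson-summation tools advertised in Section~\ref{SectionTools} are precisely designed to produce such uniform control.
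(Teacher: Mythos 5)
Your proposal follows the paper's strategy in all three parts: bound the essential spectrum by the behavior of $V_\ve$ at infinity, then IMS localization plus discrete harmonic approximation for the lower bound, then explicit quasimodes $\chi_j e^{-(f-f(m_j))/(2\ve)}$ together with Laplace asymptotics for the exponentially small upper bound. Parts (i) and (iii) are fine (Persson's characterization is an equivalent substitute for the paper's Weyl-theorem argument, and your quasimode computation for (iii) matches the paper's). Two small inaccuracies: the expansion $V_\ve = \tfrac14|\nabla f|^2 - \tfrac\ve2\Delta f + O(\ve)$ is not \emph{uniform} in $x$ (the remainder in~\eqref{TaylorUepsF} carries a factor $e^{-\nabla f\cdot v/2}$ that may be unbounded); the paper's Lemma~\ref{lowerboundV*F} is stated as a one-sided bound for exactly this reason, but your use of it is still correct.

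The genuine gap is in part (ii). You take a partition of unity with each $\chi_u$ supported in a fixed ``tiny ball'' and claim the discrete IMS error is $O(\ve^2)$; that estimate is correct \emph{only} for $\ve$-independent cutoffs. But with a fixed radius $\delta$, the Taylor remainder in replacing $V_\ve$ by the harmonic profile $U_u(x)=\tfrac14\langle[\hess f(u)]^2(x-u),(x-u)\rangle$ is of size $O(\delta^3+\ve\delta+\ve^2)$ (Lemma~\ref{localV*F}), and the leading $O(\delta^3)$ term is a \emph{fixed} constant that completely swamps the $O(\ve)$ spectral gap you are trying to detect. The harmonic reduction therefore does not ``reduce $H_\ve$ to a small perturbation'' on a fixed ball. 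The paper resolves this, following the classical CFKS/Simon route, by using $\ve$-dependent cutoffs $\chi_{j,\ve}(x)=\chi(\ve^{-s}(x-z_j))$ with $s=\tfrac25$: the cubic error then becomes $O(\ve^{3s})=O(\ve^{6/5})=o(\ve)$ while the IMS error, now governed by $\sup|\hess\chi_{j,\ve}|\sim\ve^{-2s}$, is $O(\ve^{2-2s})=O(\ve^{6/5})=o(\ve)$ as well. Both errors are below the $O(\ve)$ gap, which is what makes the min-max step legitimate. Your version, with fixed cutoffs and the stated $O(\ve^2)$ error, leaves the cubic Taylor term unaccounted for and the argument does not close; you need to build the cutoff scale $\ve^{2/5}$ into the construction and accept the weaker $O(\ve^{6/5})$ IMS bound.
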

\noindent
The properties stated in Theorem~\ref{main1F} are well-known in the continous space setting~\cite{Si83,HSI} and have also been recently extended to certain infinite-dimensional situations~\cite{Brooks}. In the finite-dimensional continuous space setting the standard proof consists in approximating the Schr\"odinger operator with harmonic oscillators around the critical points of $f$. The error is then estimated using the IMS localization formula, which permits to connect the local estimates around the critical points 
to global estimates. The discrete case is analytically more difficult, due to the nonlocal character of the discrete Laplacian. The main idea to overcome these difficulties is
taken from~\cite{KleinRosenbergerII} and consists in localizing not only the potential $V_\ve$ 
but the full operator $H_\ve$. This amounts in localizing the symbol in phase space and is also referred to as micolocalization. The setting in~\cite{KleinRosenbergerII} is very general and requires  the machinery of pseudodifferential operators, which makes the proof rather involved and requires strong regularity assumptions on the potential $V_\ve$ which are not assumed here. 
Here we give a more elementary proof which is adapted to our special case and works well 
under Hypthesis~\ref{H1F}.

\

\noindent
We now assume the stronger Hypothesis~\ref{H2}. Then, thanks to the superlinear growth condition~\ref{H2} (i), it holds
\begin{equation*}
        \| e^{- \frac{f}{2\ve}}    \|_{\ell^2(\ve \mb Z^d)}     <     \infty  ,   \  \   \    \forall \ve>0    . 
\end{equation*}
This implies that
$e^{- \frac{f}{2\ve}}  $ is in the domain of $H_{\ve}$ and therefore, since 
$H_{ \ve} e^{- \frac{f}{2\ve}}  =0$ by direct computation, that $0$ is an eigenvalue of $H_{\ve}$. 
Moreover, due to the fact that $\mc N$ generates the group $\mb Z^d$, it follows for example from~\eqref{Witten*} that 
only multiples of $\Psi_\ve$ can be eigenfunctions corresponding to the eigenvalue $0$.  Thus we conclude that 
$0$ is an eigenvalue with multiplicity $1$ for every $\ve>0$.

Since, by assumption, there are $N_0=2$ local minima of $f$, it follows from Theorem~\ref{main1F} that, for $\ve>0$ sufficiently small, there is exactly one eigenvalue 
$\lambda_\ve$ of $H_{\ve}$, which is different from $0$ and is exponentially small in $\ve$. Moreover, by the same theorem, $\lambda_\ve$ must have multiplicity $1$. Our second 
main result provides the precise leading asymptotic behavior of $\lambda(\ve)$. 
This behavior is expressed in terms of two constants $A,E>0$, giving respectively the
prefactor and the exponential rate. More precisely one defines 
\begin{equation}   \label{exprate}
  E           :=       h^*      -       h_*      , 
\end{equation}
where $h_*  :=  \min\{f(m_0), f(m_1)\}\in \mb R$  is the lowest energy level and where $h^*\in \mb R$ is given by the height of the barrier which separates the two minima. More precisely, $h^* $ can be defined 
as follows ~\cite{HKN}. For $h\in \mb R$ we denote by $\mc S_f(h) :=    f^{-1} \left(  (-\infty, h) \right)$ the (open) sublevel set of $f$ corresponding to the height $h$
and by $ N_{f}(h)$ the number of connected components of $\mc S_f(h) $. 
Then $h^*(f)\in \mb R$ is defined as the maximal height which disconnects $\mc S_f(h)$ into two components:
\begin{equation}\label{heightofenergyF}    h^*     :=            \max \left\{     h\in \mb R :     N_f(h)      =        2      \right\}           .  
\end{equation}
By simple topological arguments, on the level set $f^{-1}(h^*)$ there must be at least one critical point of $f$ of index $1$ and at most a finite number $n$ of them, which we label 
in an arbitrary order as $s_1, \dots, s_{n}$.   We denote by $\mu(s_k)$ the only negative eigenvalue of $\hess f(s_k)$. The constant $A$ is then defined in terms of the quadratic curvature of $f$ around the two minima and the relevant saddle points. More precisely,
one defines 
\begin{equation}   \label{prefactor}
  A        : =          \begin{cases}  
  \sum_{k=1}^n       \frac{|\mu (s_k)|}{2\pi}   
   \frac{\left(\det \hess f(m_0) \right)^{\frac 12} }
  {\left| \det \hess f(s_k) \right|^{\frac 12}},      &      \text{ if }     f(m_0)< f(m_1)   ,      \\
     \sum_{k=1}^n       \frac{|\mu (s_k)|}{2\pi}   
   \frac{\left(\det \hess f(m_0) \right)^{\frac 12} + \left(\det \hess f(m_1) \right)^{\frac 12}}
  {\left| \det \hess f(s_k) \right|^{\frac 12}},   &    \text{ if }     f(m_0)= f(m_1)         .     \end{cases}
  \end{equation}
\noindent
Our second main theorem is the following.
\begin{theorem}\label{spectralgapF}
Assume~\ref{H2} and take $\ve_0>0$ as in Theorem~\ref{main1F}. Let $A,E$ be given respectively by~\eqref{exprate},~\eqref{prefactor} and let, for $\ve\in (0, \ve_0)$,  
 $\lambda(\ve)$  be the smallest non-zero eigenvalue  of $H_{\ve}$.  Then the 
error term $\mc R(\ve)$, defined for $\ve\in (0, \ve_0)$ by 
\begin{equation*}  \label{asymptoticlambda}      
 \lambda(\ve )         =          
  \ve    A   e^{-\frac{E}{\ve}}    \left( 1 + \mc R(\ve)\right), 
         \end{equation*}
satisfies the following: there exists a constant
$C>0$ such that  $| \mc R(\ve) |  \leq C\sqrt \ve$ for every $\ve\in (0, \ve_0)$.  
\end{theorem}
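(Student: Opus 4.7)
Since $\Psi^{(0)}_\ve := Z_\ve^{-1/2}e^{-f/(2\ve)}$, with $Z_\ve := \|e^{-f/(2\ve)}\|^2_{\ell^2(\ve\mb Z^d)}$, is the exact ground state of $H_\ve$ at eigenvalue $0$, and since Theorem~\ref{main1F} isolates $\lambda(\ve)$ from the rest of $\spec(H_\ve)$ by a gap of order $\ve$, I would compute $\lambda(\ve)$ via a Rayleigh quotient. Given a trial function $\psi^{(1)}_\ve\perp\Psi^{(0)}_\ve$, set $R(\ve) := \lp H_\ve\psi^{(1)}_\ve,\psi^{(1)}_\ve\rp/\|\psi^{(1)}_\ve\|^2$. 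The min-max principle yields $\lambda(\ve)\leq R(\ve)$, while a standard quasimode / spectral-gap argument — decomposing $\psi^{(1)}_\ve$ along the exact two-dimensional spectral subspace of $H_\ve$ corresponding to $\{0,\lambda(\ve)\}$ and its orthogonal, and exploiting the order-$\ve$ lower bound on the rest of $\spec(H_\ve)$ — upgrades this into $\lambda(\ve) = R(\ve)(1+o(1))$ with exponentially small relative error. It is therefore enough to compute $R(\ve)$ with relative precision $O(\sqrt\ve)$.

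\textbf{Choice of quasimode and reduction to an edge sum.} I would take the quasimode in the Witten form
\[ \psi^{(1)}_\ve(x) := (g_\ve(x) - \bar g_\ve)\,e^{-f(x)/(2\ve)},\qquad \bar g_\ve := Z_\ve^{-1}\lp g_\ve\,e^{-f/(2\ve)},\, e^{-f/(2\ve)}\rp_{\ell^2(\ve\mb Z^d)}, \]
so that $\psi^{(1)}_\ve\perp\Psi^{(0)}_\ve$ holds by construction. The factorization~\eqref{Witten*} then yields the clean identity
\[ \lp H_\ve\psi^{(1)}_\ve,\psi^{(1)}_\ve\rp = \frac{\ve^d}{2}\sum_{x\in\ve\mb Z^d}\sum_{v\in\mc N} e^{-\frac{f(x)+f(x+\ve v)}{2\ve}}\,|g_\ve(x+\ve v)-g_\ve(x)|^2, \]
together with $\|\psi^{(1)}_\ve\|^2 = \ve^d\sum_x (g_\ve(x)-\bar g_\ve)^2\, e^{-f(x)/\ve}$. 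Inspired by the construction in~\cite{HKN}, I pick $g_\ve$ equal to fixed constants in fixed neighborhoods of $m_0$ and of $m_1$ and interpolating across each saddle $s_k\in f^{-1}(h^*)$ through a Gaussian error-function profile of width $\sqrt\ve$ in the unstable direction: in Morse-type coordinates that diagonalize $\hess f(s_k)$ with first coordinate $y_1$ along the unstable axis, near $s_k$
\[ g_\ve(x)\approx \Big(\tfrac{|\mu(s_k)|}{2\pi\ve}\Big)^{1/2}\int_{-\infty}^{y_1} e^{-|\mu(s_k)|t^2/(2\ve)}\,dt, \]
with a fixed-scale cut-off gluing this to the constant values on the bulk of the wells.

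\textbf{Sharp lattice Laplace asymptotics.} With this choice, $\|\psi^{(1)}_\ve\|^2$ is dominated by bulk contributions from the wells (where the Boltzmann weight $e^{-f/\ve}$ is concentrated), whereas $\lp H_\ve\psi^{(1)}_\ve,\psi^{(1)}_\ve\rp$ is dominated by edge contributions in an $O(\sqrt\ve)$-shell around each saddle (where $g_\ve$ varies). Applying the Poisson-summation-based sharp Laplace asymptotics on $\ve\mb Z^d$ of Section~\ref{SectionTools}, combined with a second-order Taylor expansion of $f$ around each critical point, produces the saddle-sum asymptotic
\[ \lp H_\ve\psi^{(1)}_\ve,\psi^{(1)}_\ve\rp = \ve\,(2\pi\ve)^{d/2}\,e^{-h^*/\ve}\sum_{k=1}^n\frac{|\mu(s_k)|}{2\pi\,|\det\hess f(s_k)|^{1/2}}\,(1+O(\sqrt\ve)) \]
and an analogous Gaussian evaluation of the well sum $\|\psi^{(1)}_\ve\|^2$ of order $(2\pi\ve)^{d/2}\,e^{-h_*/\ve}$ with a combination of Hessians at the minima matching the prefactor~\eqref{prefactor}. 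Taking the ratio gives exactly $R(\ve) = \ve\,A\,e^{-E/\ve}(1+O(\sqrt\ve))$ with $A,E$ as in~\eqref{prefactor},~\eqref{exprate}, and the spectral-gap argument of the first paragraph transports this to $\lambda(\ve)$.

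\textbf{Main obstacle.} The crux is the execution of the sharp Laplace asymptotics on $\ve\mb Z^d$ under only the $C^3$-regularity of Hypothesis~\ref{H2}. Unlike the continuous-space setting of~\cite{HKN}, where one can invoke exact stationary-phase expansions and Morse charts, one must use the Poisson summation identity to split each lattice sum into its continuum integral plus periodic corrections, show that the latter are exponentially small, and control the Taylor remainders of $f$ on the mesoscopic scale $\sqrt\ve$, which is precisely what produces the $O(\sqrt\ve)$ error. A secondary technical point — handled by the exponential decay of $e^{-f/(2\ve)}$ away from the minima together with a careful cut-off argument — is the globalization of the locally defined error-function profiles near each saddle into a single function $g_\ve$ without introducing errors larger than $O(\sqrt\ve)$.
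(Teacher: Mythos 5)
Your overall strategy mirrors the paper's: pick a quasimode $\psi^{(1)}_\ve = (g_\ve - \bar g_\ve)e^{-f/(2\ve)}$ orthogonal to the exact ground state, with $g_\ve$ locally constant in the two basins and a Gaussian error-function transition of width $\sqrt\ve$ in the unstable direction across each relevant saddle; compute the Rayleigh quotient via the Poisson-summation lattice Laplace asymptotics of Section~\ref{SectionTools}; and pass from $R(\ve)$ to $\lambda(\ve)$ using the $O(\ve)$ gap in Theorem~\ref{main1F}. The upper bound $\lambda(\ve)\le R(\ve)$ and your predicted asymptotics for numerator and denominator coincide with Propositions~\ref{propDirichlet} and~\ref{PropNorm}. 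The quasimode is essentially identical to~\eqref{defquasimodeF}.

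There is, however, a genuine gap in your lower-bound argument, which you describe as ``a standard quasimode / spectral-gap argument'' yielding ``exponentially small relative error.'' This claim is false and — more importantly — the argument you sketch does not establish any lower bound on $\lambda(\ve)$ at all. Writing $u=Pu+(1-P)u$ with $P=\mbf 1_{[0,C\ve]}(H_\ve)$, the spectral theorem gives $\|(1-P)u\|^2\le R(\ve)/(C\ve)$, which is indeed exponentially small, but the identity $R(\ve)=\lambda(\ve)\|Pu\|^2+\lp H_\ve (1-P)u,(1-P)u\rp$ only bounds the second term from \emph{below} via the gap; it provides no upper bound on it better than the trivial $R(\ve)$. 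One must control $\lp H_\ve(1-P)u,(1-P)u\rp\le \|H_\ve u\|\cdot\|(1-P)u\|$, and hence estimate $\|H_\ve\psi_\ve\|^2$ separately. In the paper this is exactly Proposition~\ref{abstractPROP}, whose error is $R(u)^2=(C\ve)^{-1}\lp H_\ve\psi_\ve,H_\ve\psi_\ve\rp/\lp H_\ve\psi_\ve,\psi_\ve\rp$, combined with the non-trivial Proposition~\ref{PropCarre}, $\|H_\ve\psi_\ve\|^2=\mc O(\ve^3)e^{-h^*/\ve}$. The latter requires showing that the first-order terms in the Taylor expansion of $\sum_{v\in\mc N}e^{-\nabla f(x)\cdot v/2}\,\ve\nabla_\ve\kappa_\ve(x,v)$ around each saddle cancel exactly — a cancellation due to $g_\ve$ being tuned to the curvature $|\mu(s_k)|$ — leaving a residue of order $|x-s_k|^2$. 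With this, the relative error from the lower bound is $\mc O(\sqrt\ve)$, not exponentially small, and matches the error of the Rayleigh-quotient asymptotics. Your sketch never identifies the need for the $\|H_\ve\psi_\ve\|^2$ estimate, which is the genuinely delicate half of the argument.
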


\begin{remark} 
Stronger smoothness properties of $f$ ($f\in C^4(\mb R^d)$ should suffice) may lead to the improved bound $\mc R_\ve = O(\ve)$. 
A possible proof may be obtained using the underlying Witten complex structure as 
explained in the author's PhD thesis~\cite{thesis}. There it is shown that $f\in C^\infty(\mb R^d)$ implies
that  $\mc R_\ve$ admits full asymptotic expansions in powers of $\ve$. But the
proof is substantially more involved, since it requires a construction and detailed analysis of
discrete WKB expansions on the level of $1$-forms. 
\end{remark}

\noindent
As anticipated in the introduction, our main results can be easily translated into results 
on spectral properties of the class of metastable discrete diffusions with
generator~\eqref{introL},~\eqref{introRates}. Since this might be a particularly interesting application of our results, we shall spell out precisely their
consequences from the stochastic point of view.

\

\subsection{ Results on the diffusion operator $L_{\ve}$}   

\

\noindent
Given a function $f:\mb R^d\to \mb R$ and a parameter $\ve>0$, we consider the weight functions
\begin{equation*}
 \rho_\ve(x)   =     e^{-\tfrac{f(x)}{\ve}}    \   \       \text{ and }   \    \      r_\ve(x,x')  =      \tfrac 1\ve  e^{- \tfrac{f(x') -f(x)}{2\ve } }          ,    \  \  \   \forall x, x'\in \mb R^d      .
\end{equation*}
Note that $\rho_\ve$ and  $r_\ve$  are related by the identity 
\begin{equation}\label{detailedbalance}      \rho_\ve(x)    r_\ve(x,x')      =      \rho_\ve(x')    r_\ve(x',x)         ,    \   \ \    \forall    \ve>0 \text{ and }   x,x'\in \mb R^d       . 
\end{equation}
We work now in the weighted Hilbert space $\ell^2(\rho_\ve)$
obtained as subspace
of  $\mb R^{\ve \mb Z^d}$ 
by introducing the weighted 
scalar product
\begin{equation*}     \lp \psi, \psi'\rp_{\ell^2(\rho_\ve)}         =   \ve^d    \sum_{x\in \ve \mb Z^d }    \psi(x) \, \psi'(x) \, \rho_\ve(x)    ,    
\end{equation*}  
 and the corresponding induced norm $\|\cdot\|_{\ell^2(\rho_\ve)}$.
We shall denote by $L_{\ve}$ the Laplacian of the weighted graph $\ve \mb Z^d$, whose vertices are weighted by $\rho_\ve$
and whose edges (determined by $\mc N$) are weighted by $\rho_\ve r_\ve$. More precisely we define 
$L_{\ve}:  {\textrm Dom} (L_{\ve})   \to    \ell^2(\rho_\ve)     $ by setting 
\[     {\textrm Dom} (L_{ \ve})     =    \left\{   \psi \in \ell^2(\rho_\ve):   \sum_{v\in \mc N}   r_\ve\left(x, x+  \ve v\right)     
         \left[       \psi(x+ \ve v  )      -    \psi(x )     \right]   \in \ell^2(\rho_\ve)          \right\}            ,        \]
and, for each $x\in \ve \mb Z^d$,
\begin{equation*}    
 L_{\ve}   \psi    (x)          =             \sum_{v\in \mc N}   r_\ve\left(x, x+  \ve v\right)     
         \left[       \psi(x+ \ve v  )      -    \psi(x )    \right]       ,   \     \     \  \   \      \forall     \psi \in {\textrm Dom} (L_{\ve})      .       
     \end{equation*}
This provides a Hilbert space realization of the formal operator~\eqref{introL},\eqref{introRates}.

\begin{proposition}\label{GST}
For each $\ve>0$ the operators $- \ve L_\ve$ and $H_\ve$ are unitarily equivalent. 
\end{proposition}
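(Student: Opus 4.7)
The plan is to exhibit the unitary equivalence explicitly by means of the standard ground state transformation. Define the multiplication operator
\[
U: \ell^2(\rho_\ve) \to \ell^2(\ve\mb Z^d),\qquad (U\psi)(x) = e^{-\tfrac{f(x)}{2\ve}}\,\psi(x).
\]
First I would check that $U$ is a well-defined isometric isomorphism: for $\psi\in\ell^2(\rho_\ve)$ one has
\[
\|U\psi\|_{\ell^2(\ve\mb Z^d)}^2 = \ve^d \sum_{x\in\ve\mb Z^d} e^{-\tfrac{f(x)}{\ve}} \psi(x)^2 = \|\psi\|_{\ell^2(\rho_\ve)}^2,
\]
so the factor $e^{-f/(2\ve)}$ intertwines the two weighted $\ell^2$-structures; its inverse is the multiplication operator $U^{-1}\phi = e^{f/(2\ve)}\phi$.

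Next, I would directly compute the conjugated action $U L_\ve U^{-1}$ on an element $\phi\in U({\textrm Dom}(L_\ve))$. Substituting the definitions of $L_\ve$ and $r_\ve$ and writing $r_\ve(x,x+\ve v)=\tfrac{1}{\ve}e^{-(f(x+\ve v)-f(x))/(2\ve)}$, the prefactor $e^{-f(x)/(2\ve)}$ combines with the exponential weights inside the sum in such a way that the cross terms telescope: the coefficient of $\phi(x+\ve v)$ collapses to $\tfrac{1}{\ve}$, while the coefficient of $\phi(x)$ becomes $-\tfrac{1}{\ve}e^{-\frac{1}{2}\nabla_\ve f(x,v)}$. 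After adding and subtracting $\tfrac{1}{\ve}\sum_v\phi(x)$, this rearranges into
\[
(U L_\ve U^{-1}\phi)(x) = \ve\,\Delta_\ve \phi(x) \;-\; \tfrac{1}{\ve}\sum_{v\in\mc N}\bigl[e^{-\frac{1}{2}\nabla_\ve f(x,v)}-1\bigr]\phi(x) \;=\; \ve\,\Delta_\ve\phi(x) - \tfrac{1}{\ve}V_\ve(x)\phi(x),
\]
where the last equality uses the definition~\eqref{mainV} of $V_\ve$. Multiplying by $-\ve$ yields $U(-\ve L_\ve)U^{-1}\phi = -\ve^2\Delta_\ve\phi + V_\ve\phi = H_\ve\phi$, which is the claimed identity pointwise.

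Finally, I would verify that $U$ carries domains onto domains: since $U$ and $U^{-1}$ are multiplications by strictly positive functions, a formal computation shows that $\phi\in \ell^2(\ve\mb Z^d)$ lies in ${\textrm Dom}(V_\ve)$ (equivalently ${\textrm Dom}(H_\ve)$, using that $\Delta_\ve$ is bounded) if and only if $U^{-1}\phi\in{\textrm Dom}(L_\ve)$, as the pointwise identity above shows that the two defining series converge simultaneously. The operators $-\ve L_\ve$ on ${\textrm Dom}(L_\ve)\subset\ell^2(\rho_\ve)$ and $H_\ve$ on ${\textrm Dom}(V_\ve)\subset\ell^2(\ve\mb Z^d)$ are therefore related by $H_\ve = U(-\ve L_\ve)U^{-1}$, which is the desired unitary equivalence. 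There is no real obstacle here beyond bookkeeping of the exponential weights; the self-adjointness of $L_\ve$ on $\ell^2(\rho_\ve)$, which is automatic from the detailed balance relation~\eqref{detailedbalance}, is in fact a consequence of the self-adjointness of $H_\ve$ already established in the text.
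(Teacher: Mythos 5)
Your proof is correct and takes exactly the same route as the paper: it defines the ground state transformation $\Phi_\ve[\psi]=\sqrt{\rho_\ve}\,\psi$, verifies it is unitary, and checks the conjugation identity together with the matching of domains. You have simply filled in the ``direct computation'' that the paper leaves implicit, and your computation is accurate.
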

\begin{proof}
Let $\ve>0$. We consider the unitary operator 
\[\Phi_\ve:  \ell^2(\rho_\ve)   \to \ell^2(\ve \mb Z^d)     ,   \  \       \    \  \          \Phi_\ve [\psi]  (x) =  \sqrt{\rho_\ve}(x)  \psi (x)       .        \]
Then a direct computation shows that 
\begin{equation}\label{conjugation}
   H_{\ve} \psi             =         -  \ve \Phi_\ve \left[  L_{\ve}   \Phi^{-1}_\ve[\psi]  \right]         ,    \   \   \   \        \forall \psi\in {\textrm Dom(V_\ve)}, 
\end{equation}
and that $\Phi_\ve[ {\textrm Dom}(L_{\ve}) ] =   {\textrm Dom(V_\ve)}  $.
\end{proof}

\noindent
From the unitarily equivalence it follows that $L_{\ve}$ is not only symmetric 
and nonnegative (this can be checked by summation by parts and using the detailed balance 
condition~\eqref{detailedbalance}), but also selfadjoint.  
We remark also that $C_c(\ve \mb Z^d)$, which is a core for $H_{\ve}$ and is invariant under $\Phi_\ve$, is also a core of $L_{\ve}$.   
\

\noindent
Combining Proposition~\ref{GST} with Theorem~\ref{main1F} and Theorem~\ref{spectralgapF}
yields then the following result.

\begin{corollary}\label{MainCorollary}
 Assume~\ref{H1F} and denote by $N_0\in \mb N_0$ the number of local minima of $f$. There exist 
constants $\ve_0 \in (0, 1)$, $C>0$ such that
for each $\ve\in (0, \ve_0]$ the following properties hold true.
\begin{itemize}
\item[(i)]        $  \spec_{\textrm{ess}} (-L_{\ve})        \subset [\ve^{-1}C, \infty) $ and $    |  \spec_{\textrm{disc}} (-L_{\ve})   \cap [0, C]   |        \leq N_0$.
\item[(ii)]   $-L_\ve$ admits at least $N_0$ eigenvalues counting multiplicity. In the nontrivial case that 
$N_0\neq 0$, the $N_0$-th eigenvalue $\lambda_{N_0}(\ve)$ (according to increasing order and counting multiplicity) satisfies the bounds
\[0\leq \lambda_{N_0}(\ve)   \leq    e^{-\frac{C}{\ve}}.  \]
\end{itemize}
Moreover,  assuming in addition~\ref{H2},
and taking $A,E$ as in~\eqref{exprate},~\eqref{prefactor}, the
error term $\mc R(\ve)$, defined for $\ve\in (0, \ve_0)$ by 
\begin{equation}  \label{asymptoticlambda2}      
 \lambda_2(\ve )         =          
     A   e^{-\frac{E}{\ve}}    \left( 1 + \mc R(\ve)\right), 
         \end{equation}
         satisfies the following: there exists a constant
$C>0$ such that  $| \mc R(\ve) |  \leq C\sqrt \ve$ for every $\ve\in (0, \ve_0)$. 
\end{corollary}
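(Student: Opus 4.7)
The strategy is essentially a translation argument via the unitary equivalence of Proposition~\ref{GST}, which ensures that $-\varepsilon L_\varepsilon$ and $H_\varepsilon$ have identical spectra (including the decomposition into discrete and essential parts, preservation of multiplicities, and the structure of eigenfunctions via $\Phi_\varepsilon$). So the plan is to read each statement of Theorem~\ref{main1F} and Theorem~\ref{spectralgapF} as a statement about $\spec(-\varepsilon L_\varepsilon)$, and then simply divide by $\varepsilon$ to obtain the corresponding statement about $\spec(-L_\varepsilon)$.

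Concretely, I would proceed as follows. First, Proposition~\ref{GST} implies $\spec_{\textrm{ess}}(-\varepsilon L_\varepsilon) = \spec_{\textrm{ess}}(H_\varepsilon) \subset [C,\infty)$ and $\spec_{\textrm{disc}}(-\varepsilon L_\varepsilon) = \spec_{\textrm{disc}}(H_\varepsilon)$. Since $\spec(-L_\varepsilon) = \varepsilon^{-1}\spec(-\varepsilon L_\varepsilon)$, we get $\spec_{\textrm{ess}}(-L_\varepsilon)\subset [\varepsilon^{-1}C,\infty)$, and $|\spec_{\textrm{disc}}(-L_\varepsilon)\cap [0,C]| = |\spec_{\textrm{disc}}(H_\varepsilon)\cap [0,C\varepsilon]|\leq N_0$, which is (i). For (ii), the unitary equivalence gives the $N_0$-th eigenvalue of $-L_\varepsilon$ as $\varepsilon^{-1}\lambda_{N_0}^{H}(\varepsilon)$, where by Theorem~\ref{main1F} we have $0\leq \lambda_{N_0}^{H}(\varepsilon) \leq e^{-C/\varepsilon}$. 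A trivial absorption step, choosing any $C'\in(0,C)$ and using $\varepsilon^{-1}e^{-C/\varepsilon}\leq e^{-C'/\varepsilon}$ for $\varepsilon$ small enough, then yields the claimed bound $0\leq \lambda_{N_0}(\varepsilon)\leq e^{-C/\varepsilon}$ upon relabeling the constant.

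Finally, for the Eyring-Kramers asymptotic under~\ref{H2}, the smallest nonzero eigenvalue of $-L_\varepsilon$ is exactly $\varepsilon^{-1}\lambda(\varepsilon)$, where $\lambda(\varepsilon)$ is the smallest nonzero eigenvalue of $H_\varepsilon$. Theorem~\ref{spectralgapF} gives $\lambda(\varepsilon) = \varepsilon A e^{-E/\varepsilon}(1+\mathcal{R}(\varepsilon))$ with $|\mathcal{R}(\varepsilon)|\leq C\sqrt{\varepsilon}$; dividing both sides by $\varepsilon$ produces the stated formula~\eqref{asymptoticlambda2} for $\lambda_2(\varepsilon)$ with exactly the same error function $\mathcal{R}(\varepsilon)$, and hence the same bound.

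There is essentially no obstacle here: the entire corollary is a bookkeeping consequence of Proposition~\ref{GST} combined with the two main theorems. The only thing requiring a second of attention is the absorption of the factor $\varepsilon^{-1}$ into the exponentially small upper bound in (ii), which is harmless at the cost of an arbitrarily small decrease in the constant $C$.
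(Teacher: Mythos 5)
Your proof is correct and matches the paper's intent exactly: the corollary is stated in the paper as a direct consequence of Proposition~\ref{GST} together with Theorems~\ref{main1F} and~\ref{spectralgapF}, with no further argument given, and your translation via $\spec(-L_\ve)=\ve^{-1}\spec(H_\ve)$ is precisely the bookkeeping the author has in mind. Your remark about absorbing the $\ve^{-1}$ into the exponential bound in (ii) by slightly shrinking $C$ is the one small point of care, and you handled it correctly.
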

\noindent
We stress that (i) implies a quantitative scale separation between the $N_0$ slow modes, corresponding to the metastable tunneling times, and all the other modes, 
corresponding to fast relaxations to local equilibria. In principle it is also possible to refine the analysis of the fast modes revealing the full hierarchy of scales governing the dynamics 
in the small $\ve$ regime, see~\cite{DGMariani} for the continuous space setting and a $\Gamma$-convergence formulation.

  As already mentioned, the rigorous derivation of an Eyring-Kramers formula of type~\eqref{asymptoticlambda2} in the setting of discrete metastable diffusions had already been derived by a different approach
based on capacity estimates~\cite{BEGKdiscI, BdH}.  
Compared to these previous results the formula given in~\eqref{asymptoticlambda2} differs in two aspects: 
\begin{itemize}
\item[1)] The estimate on the error term
 $ \mc R(\ve)$ is improved by our approach, since in~\cite[Theorem 10.9 and 10.10]{BdH}, under the same regularity assumptions as considered here ($f\in C^3(\mb R^d)$) a logarithmic correction appears. 
More precisely our result improves the error estimate from  $\mc R(\ve)= \mc O(\sqrt{\ve [\log{1/\ve}]^3})$ to $\mc R(\ve)= \mc O(\sqrt{\ve})$. 
\item[2)] The prefactor $A$ given in~\eqref{asymptoticlambda2} differs from the one given in~\cite{BEGKdiscI, BdH}. 
This is due to our slightly different choice of jump rates, compare~\eqref{introRates} with~\cite[(10.1.2.), p. 248]{BdH}. Indeed 
 it is clear that the prefactor is sensible to the particular choice of jump rates among the infinitely many 
possible jump rates satisfying the detailed balance condition with respect to the Boltzmann weight $ e^{-f/\ve}$. This sensitivity of the prefactor is opposed to the robustness of the exponential rate $E$, 
which is universal as can be seen e.g. via a Large Deviations analysis. We remark that, while the rates chosen in~\cite{BdH} correspond to a Metropolis algorithm, our choice~\eqref{introRates} corresponds, 
in the context of the Statistical Mechanics models mentioned above,
to a heat bath algorithm. This is a very natural choice and is considered for example in~\cite{LMT}. As observed in the introduction, it is the choice which in first order approximation gives the same prefactor
as the continuous space model~\eqref{introHcont}. Furthermore,~\cite{BEGKdiscI,BdH} concerns discrete time processes, which means that the rates are normalized and thus bounded over $\mb R^d$.
Our setting includes also the case of possibly unbounded rates which requires some additional technical work for the analysis outside compact sets. 
\end{itemize}

\

\section{General tools for a semiclassical analysis on the lattice}\label{SectionTools}

\noindent
This section is devoted to some preliminary tools for a semiclassical analysis on the lattice. 

Subsection~\ref{SSIMS} concerns a discrete IMS localization formula, see~\cite[Lemma 11.3]{Teschl}
or~\cite[Theorem 3.2]{CFKS}, where also an explanation of the name can be found, for the standard continuous space setting and~\cite{KleinRosenbergerII}. The IMS formula is a simple observation based on a computation of commutators.
It will be used repeteadly for decomposing the quadratic form induced by a Schrödinger operator into localized parts. 

Subsection~\ref{SSHarmonic} provides estimates on the first two eigenvalues of the discrete semiclassical Harmonic oscillator. These estimates follow from more general results proven in~\cite{KleinRosenbergerII}. 
Nevertheless we shall include a relatively short and completely selfcontained proof, which focuses on the estimates needed to prove the separation between exponentially small eigenvalues of $H_\ve$ and the rest of its spectrum, as provided by Theorem~\ref{main1F}. The proof is based on a microlocalization which permits to separate high and low frequency actions of the operator. 

Subsection~\ref{sectiondiscreteLaplace} provides sharp asymptotic results for Laplace-type sums. These are instrumental in almost all the computations necessary for deriving the Kramers formula for the eigenvalue splitting
and for tunneling calculations in general. 
Our proofs are again based on Fourier analysis. In particular, following~\cite{thesis},  we shall use the Poisson summation formula: shifting a function by an integer vector and summing over all shifts produces the same periodization as taking the Fourier series of the Fourier transform. Compared to~\cite{thesis}, where it is shown how to get complete asymptotic expansions in the smooth setting, here we shall relax the regularity assumptions on the phase function to cover the applications we have in mind.

\subsection{The discrete IMS formula}\label{SSIMS}

\

\noindent
We say that the set $\{\chi_j\}_{j\in J}$ is a smooth quadratic partition of unity of $\mb R^d$
if $J$ is a finite set, $\chi_{j}\in C^\infty(\mb R^d)$ for every $j\in J$ and $\sum_{j\in J} \chi^2_{j} \equiv 1$. 

\begin{proposition}\label{propdiscreteIMSF}  
There exists a constant $C>0$ such that for every 
$\ve>0$, every $\psi\in \ell^2(\ve \mb Z^d)$ and every smooth quadratic partition of unity 
$\{\chi_j\}_{j\in J}$ it holds 
\[   \Big\|\Delta_\ve \psi      -           \sum_{j\in J}    \chi_j   \, \Delta_\ve \left( \chi_j  \,  \psi \right) \Big\|_{\ell^2(\ve\mb Z^d)}     \leq    C  \sup_{x,j} |\hess \chi_{j}(x)|     \| \psi\|_{\ell^2(\ve\mb Z^d)}    .   \]
\end{proposition}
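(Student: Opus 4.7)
The plan is to compute the error operator $E_\ve \psi := \Delta_\ve \psi - \sum_{j\in J}\chi_j \Delta_\ve(\chi_j \psi)$ pointwise in closed form, then exploit the partition-of-unity identity $\sum_j \chi_j^2 \equiv 1$ (once differentiated) to kill the leading order of a Taylor expansion. This is the standard idea behind the IMS formula in continuous space, adapted to the nonlocal structure of $\Delta_\ve$.

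First I would write $\Delta_\ve$ as $\ve^{-2}\sum_{v\in\mc N}(T_v - 1)$, where $T_v$ is the shift $\psi\mapsto \psi(\cdot +\ve v)$. A direct computation with $T_v(\chi_j\psi)(x) = \chi_j(x+\ve v)\psi(x+\ve v)$, followed by summing over $j$ and invoking $\sum_j \chi_j(x)^2 =1$, yields
\begin{equation*}
E_\ve\psi(x) = \ve^{-2} \sum_{v\in\mc N}\Big[\,1 - \sum_{j\in J} \chi_j(x)\chi_j(x+\ve v)\,\Big]\,\psi(x+\ve v).
\end{equation*}
Rewriting the bracket using again $\sum_j \chi_j(x)^2 = 1$ gives the equivalent symmetric form $\tfrac12 \sum_j (\chi_j(x+\ve v)-\chi_j(x))^2$; the naive Lipschitz bound on this expression would only produce the (worse) factor $\sup|\nabla\chi_j|^2$.

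The key observation is that differentiating the identity $\sum_j \chi_j^2 \equiv 1$ gives $\sum_j \chi_j\,\nabla \chi_j \equiv 0$. Writing the bracket as $-\sum_j \chi_j(x)[\chi_j(x+\ve v)-\chi_j(x)]$ and Taylor-expanding each increment to second order, the $O(\ve)$ contribution $-\ve\sum_j \chi_j(x)\nabla\chi_j(x)\!\cdot\!v$ vanishes identically, leaving
\begin{equation*}
1 - \sum_j \chi_j(x)\chi_j(x+\ve v) = -\frac{\ve^2}{2}\sum_j \chi_j(x)\,\hess\chi_j(\xi_{j,v,x})[v,v],
\end{equation*}
for suitable points $\xi_{j,v,x}$ on the segment $[x,x+\ve v]$. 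The two factors $\ve^2$ cancel with the $\ve^{-2}$ prefactor, yielding the pointwise identity
\begin{equation*}
E_\ve\psi(x) = -\tfrac12 \sum_{v\in\mc N}\sum_{j\in J}\chi_j(x)\,\hess\chi_j(\xi_{j,v,x})[v,v]\,\psi(x+\ve v).
\end{equation*}

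It remains to estimate the $\ell^2(\ve\mb Z^d)$ norm. Since $|v|=1$ for $v\in\mc N$, Cauchy--Schwarz on the $j$-sum together with $\sum_j \chi_j(x)^2 = 1$ yields $|\sum_j \chi_j(x)\hess\chi_j(\xi_{j,v,x})[v,v]|\leq \sqrt{|J|}\sup_{x,j}|\hess\chi_j(x)|$, and another application of Cauchy--Schwarz together with the unitarity of the shifts on $\ell^2(\ve\mb Z^d)$ controls $\sum_{v\in\mc N}|\psi(\cdot+\ve v)|$ in $\ell^2$ by $C_d\|\psi\|_{\ell^2(\ve\mb Z^d)}$. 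Combining these gives the claimed bound. The only subtle point of the argument is the use of $\sum_j \chi_j\nabla\chi_j = 0$ to upgrade a naive $O(|\nabla\chi_j|^2)$ bound to the sharper $O(|\hess\chi_j|)$ estimate; apart from that, the proof is a direct calculation.
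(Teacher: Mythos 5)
Your proof is correct and takes essentially the same route as the paper: compute the pointwise error $\ve^{-2}\sum_v[1-\sum_j\chi_j(x)\chi_j(x+\ve v)]\psi(x+\ve v)$, kill the $O(\ve)$ Taylor term via $\sum_j\chi_j\nabla\chi_j\equiv 0$, and bound the remaining Hessian remainder. One small remark: both your estimate and the paper's actually pick up a factor depending on $|J|$ (you get $\sqrt{|J|}$ explicitly; the paper's bound contains $\sum_j|\chi_j(y)|$, which is only controlled by $\sqrt{|J|}$ via Cauchy--Schwarz), so the constant $C$ is not literally uniform over all quadratic partitions of unity as the statement suggests — this is harmless in all applications here, where $|J|$ is fixed.
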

\begin{proof}
We have
\begin{gather*}      \Delta_\ve \psi   -    \sum_j       \chi_j   \, \Delta_\ve \left( \chi_j \,  \psi \right)   (x)      =         
\tfrac {1}{\ve^2}   \sum_{v\in \mc N}  \left[ 1 - \sum_j \chi_j(x)  \chi_j(x+\ve v)  \right]   \psi(x+\ve v)       ,  
\end{gather*}
thus
\begin{gather}      \Big\|\Delta_\ve \psi   -    \sum_j    \chi_j   \, \Delta_\ve \left( \chi_j \,  \psi \right) \Big\|_{\ell^2(\ve\mb Z^d)}     \ 
\leq    \nonumber \\     
\tfrac {1}{\ve^2}   \sum_{v\in \mc N}  \sup_{x\in \mb R^d} \left| 1 - \sum_j \chi_j(x)  \chi_j(x+\ve v)  \right|   \|   \psi(\cdot+\ve v)  \|_{\ell^2(\ve\mb Z^d)}         .  
 \label{montiF} 
 \end{gather}
Differentiating the relation $\sum_j\chi^2_j \equiv 1$ yields  $\sum_j \chi_j \nabla \chi_j  \cdot v \equiv 0 $ for every $v$ and therefore, 
by Taylor expansion, for every $x\in \mb R^d$ and $v\in \mc N$,
\begin{gather} \label{treviF}
\left| 1 - \sum_j \chi_j(x)  \chi_j(x+\ve v)  \right|   \leq         \tfrac{\ve^2}{2}    \sup_{y\in \mb R^d}  \sum_j |\chi_j(y)|   \, |  \hess \chi_j(y) v\cdot v |      . 
\end{gather}
The claim follows now from~\eqref{montiF} and~\eqref{treviF} by noting that the assumption 
$\sum_j\chi^2_j \equiv 1$ also implies $\sup_{j, x} |\chi_{j}(x)|\leq 1$,   
that  $ \|   \psi(\cdot+\ve v)  \|_{\ell^2(\ve\mb Z^d)}    =    \| \psi  \|_{\ell^2(\ve\mb Z^d)} $ for every $v$
and recalling that $\mc N$ is bounded.
 \end{proof}

\subsection{Estimates on the discrete semiclassical Harmonic oscillator}\label{SSHarmonic}

\

\noindent
We provide lower bounds for the 
first and the second eigenvalue of the semiclassical discrete Harmonic oscillator.
\begin{proposition}\label{propdiscreteharmonicoscillatorF} 
For every $x\in \mb R^d$ let $U(x) =    \lp x-\bar x, M (x-\bar x) \rp $, where
$\bar x\in \mb R^d$ and $M$ is a symmetric $d\times d$ real matrix with strictly positive eigenvalues
denoted by $\kappa_1, \dots, \kappa_d$. Moreover let 
$\lambda_0 =  \sum_j \sqrt{\kappa_j}  $ and 
$\lambda_	1 =  \sum_j \sqrt{\kappa_j} +    2 \min_j \sqrt{\kappa_j}   $.   
Then 
there exist  for every $\ve>0$ a function $\Psi_{\ve}\in \ell^2(\ve \mb Z^d)$ and constants $\ve_0, C>0$ such that
for every $\ve\in (0, \ve_0]$ and
$\psi\in C_c(\ve\mb Z^d)$ the following hold:
\begin{itemize}
\item[(i)]   $ \lp \left(-\ve^2\Delta_\ve       + U  \right) \psi, \psi   \rp_{\ell^2(\ve\mb Z^d)  }      \geq        \ve  \left( \lambda_0   -      C\ve^{\frac 15}    \right)   \|\psi\|^2_{\ell^2(\ve\mb Z^d) }  . $
\item[(ii)]   $ \lp \left(-\ve^2\Delta_\ve      + U  \right) \psi, \psi   \rp_{\ell^2(\ve\mb Z^d)  }      \geq          \ve  \left( \lambda_1   -      C\ve^{\frac 15}    \right)    \|\psi\|^2_{\ell^2(\ve\mb Z^d) }          -      \lp  \psi, \Psi_\ve   \rp^2_{\ell^2(\ve\mb Z^d)  }               .     $
\end{itemize}
\end{proposition}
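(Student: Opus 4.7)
After translating so that $\bar x = 0$, the natural approach is microlocalization via the semiclassical Fourier transform $\mc F_\ve : \ell^2(\ve\mb Z^d) \to L^2(\mb T^d_\ve)$ on the Brillouin zone $\mb T^d_\ve = [-\pi/\ve, \pi/\ve]^d$. Under $\mc F_\ve$ the operator $-\ve^2\Delta_\ve$ becomes multiplication by the symbol $p_\ve(\xi) = 2\sum_k(1-\cos(\ve\xi_k))$, which equals $\ve^2|\xi|^2 + O(\ve^4|\xi|^4)$ near the origin and is bounded below by a positive constant away from it. The quadratic potential $U$ becomes a second-order differential operator in $\xi$. The guiding idea is to reduce the problem to the continuous semiclassical harmonic oscillator $-\ve^2\Delta + U$ on $L^2(\mb R^d)$, whose eigenvalues are exactly $\ve\sum_j\sqrt{\kappa_j}(1+2n_j)$, $n_j \in \mb N$.

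To implement this I would fix a smooth cutoff $\eta \in C_c^\infty(\mb R^d;[0,1])$ equal to $1$ on $B_{1/2}$ and supported in $B_1$, set $\chi_\ve(\xi) := \eta(\ve^\alpha\xi)$ for a parameter $\alpha \in (0, 1/2)$ to be optimized, and decompose $\psi = \psi_< + \psi_>$ via $\mc F_\ve\psi_< = \chi_\ve \mc F_\ve\psi$. Since $\chi_\ve$ is a Fourier multiplier, it commutes with $-\ve^2\Delta_\ve$, so that only $U$ produces cross-terms. On the high-frequency subspace, $p_\ve(\xi) \ge c\ve^{2-2\alpha}$, which for $\alpha < 1/2$ strictly dominates $\ve\lambda_1$, and the contribution of $U \ge 0$ only helps. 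On the low-frequency subspace, $p_\ve$ agrees with $\ve^2|\xi|^2$ up to an operator-norm error $O(\ve^{4-4\alpha})$, and $\psi_<$ can be pulled back via $\mc F_\ve^{-1}$ to a function on $L^2(\mb R^d)$ where the continuous harmonic oscillator may be applied. For (i) the standard ground-state bound $-\ve^2\Delta + U \ge \ve\lambda_0$ yields the claim; for (ii) I would take $\Psi_\ve$ to be the $\ell^2$-normalized restriction to $\ve\mb Z^d$ of the Hermite-type Gaussian ground state of $-\ve^2\Delta + U$, so that on its orthogonal complement the continuous operator is bounded below by $\ve\lambda_1$.

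The main technical obstacle is the quantitative control of the cross-terms $\lp U\psi_<, \psi_>\rp$, since $U$ does not commute with $\chi_\ve$: in $\xi$-variables the commutator $[U,\chi_\ve]$ involves first and second derivatives of $\chi_\ve$ of respective sizes $O(\ve^\alpha)$ and $O(\ve^{2\alpha})$. Using integration by parts in $\xi$, Cauchy--Schwarz and the a priori bound $\|U^{1/2}\psi\|^2 \le \lp T\psi,\psi\rp$ with $T = -\ve^2\Delta_\ve + U$, these cross-terms and the associated IMS-type localization errors should be absorbable in $\lp T\psi, \psi\rp$ itself via a bootstrap, leaving a small remainder. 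Balancing the discretization error $O(\ve^{4-4\alpha})$, the commutator error, and the high-frequency gap $c\ve^{2-2\alpha}-\ve\lambda_1$ by an appropriate choice of $\alpha$ then produces the stated remainder $O(\ve^{1/5})$. The membership $\Psi_\ve \in \ell^2(\ve\mb Z^d)$ and the fact that it has approximately unit norm follow from Gaussian Poisson summation, in line with the tools developed in Subsection~\ref{sectiondiscreteLaplace}.
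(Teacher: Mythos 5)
Your overall strategy is the same as the paper's: semiclassical Fourier transform, a frequency cutoff at scale $\ve^\alpha$, comparison of the low-frequency part with the continuous harmonic oscillator on $L^2(\mb R^d)$, a gap argument on the high-frequency part, and an IMS/commutator estimate for the localization error. The paper implements this with an explicit smooth quadratic partition of unity $\{\theta_{0,\ve},\theta_{1,\ve}\}$ at scale $\ve^{2/5}$ in the periodic Brillouin zone $[-\pi,\pi]^d$, bounds the symbol error $|W-W_0|\leq C'|\xi|^3$ on the low-frequency support, bounds the IMS commutator by $\ve^{2-2s}$, and balances $\ve^{3s}$ against $\ve^{2-2s}$ at $s=2/5$, yielding the $\ve^{6/5}=\ve\cdot\ve^{1/5}$ remainder. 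So in spirit you are reproducing the paper's argument.

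That said, two of your quantitative choices are off and one of them would make the argument fail as written. First, the range of the cutoff exponent is reversed. In your convention $\chi_\ve(\xi)=\eta(\ve^\alpha\xi)$ on $[-\pi/\ve,\pi/\ve]^d$, the cutoff sits at $|\xi|\sim\ve^{-\alpha}$, i.e.\ at $|\ve\xi|\sim\ve^{1-\alpha}$ in the paper's variable. On the high-frequency region one has $p_\ve(\xi)\gtrsim\ve^{2(1-\alpha)}$, and this dominates $\ve\lambda_1$ only when $2(1-\alpha)<1$, i.e.\ $\alpha>1/2$. Your stated range $\alpha\in(0,1/2)$ gives the opposite inequality and the high-frequency lower bound would then be $o(\ve)$, which is not enough. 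The paper's choice $s=2/5$ corresponds to $\alpha=3/5\in(1/2,1)$, and the balance of the low-frequency symbol error against the IMS commutator error is what actually pins down the exponent (with the cubic bound on $W-W_0$ this gives $\ve^{1/5}$; with the more precise quartic bound you quote one would even get $\ve^{1/3}$, which makes your claim that the balance ``produces the stated $O(\ve^{1/5})$'' internally inconsistent).

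Second, your description of $\Psi_\ve$ as the ``$\ell^2$-normalized restriction to $\ve\mb Z^d$ of the Hermite-type Gaussian ground state'' is not what is needed in (ii). If $\|\Psi_\ve\|_{\ell^2(\ve\mb Z^d)}=1$, the correction $-\lp\psi,\Psi_\ve\rp^2$ can be as large as $-\|\psi\|^2$, which destroys the $\ve$-scaling of the estimate. What makes the inequality work is that on the span of the continuous ground state the spectral gap $\ve(\lambda_1-\lambda_0)=2\ve\min_j\sqrt{\kappa_j}$ is exactly what needs to be subtracted; so the correct $\Psi_\ve$ must carry a factor of order $\sqrt{\ve}$. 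In the paper this is built in by setting $\Phi_\ve=\sqrt{2\ve}\,\min_j\kappa_j^{1/4}\,\tilde g_\ve$ and then taking $\Psi_\ve=\widecheck{\theta_{0,\ve}\Phi_\ve}\big|_{\ve\mb Z^d}$, i.e.\ a frequency-truncated (not merely restricted) and $\sqrt\ve$-weighted version of the Gaussian. Fixing the exponent range and the normalization/construction of $\Psi_\ve$ would bring your sketch in line with the paper's proof.
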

\noindent
The proof is by localization around low frequencies in Fourier space and comparison with the corresponding continuous Harmonic oscillator on $\mb R^d$,
whose first and second eigenvalue are given respectively by 
$\ve  \lambda $ and $\ve   \lambda_1  $. At low frequencies, discrete
and continuous Harmonic oscillators are close, while the high frequencies do not contribute to the bottom of the spectrum.

 In the proof we shall use the following notation: for $\ve>0$ and
  $\psi \in \ell^1(\ve \mb Z^d)$ we define 
 \[    \hat \psi(\xi)      :=      (2\pi)^{-\frac{d}{2} }
\sum_{x\in \ve \mb Z^d}  \psi(x)   \,     e^{- \frac{ \im  x\cdot \xi}{\ve}}        \    \  \    \text{ for }     \xi\in \mb R^d         ,     \]
and for $\ve>0$ and
  $\phi \in L^1(\mb R^d)$ we define 
\begin{equation*}  \label{JF}     \check\phi (\xi)       :=           (2\pi)^{-\frac{d}{2} }  \int_{\mb R^d}      \phi (x)     \,     e^{ \frac{\im   x\cdot \xi}{\ve}}  \, dx    
   \    \  \    \text{ for }     \xi\in \mb R^d          .   
\end{equation*}
Then by Parseval's theorem
\begin{equation}\label{Parseval}   \|\psi\|_{\ell^2(\ve\mb Z^d)}      =  
  \|\hat{\psi}\|_{L^2([-\pi, \pi]^d)}   \   \   \    \  \        \forall \psi \in \ell^2(\ve \mb Z^d)      ,      
\end{equation}
and by Plancherel's theorem
\begin{equation}\label{Plancherel}     \|\phi\|_{L^2(\mb R^d)}     =   
 \|\check\phi\|_{L^2(\mb R^d)}   \   \   \      \  \       \forall \phi \in L^1(\mb R^d)\cap L^2(\mb R^d)     .  
\end{equation}
We recall also the inversion theorem for the Fourier transform and Fourier series, which in our notation reads as follows. Let 
$\phi\in S(\mb R^d)$, the Schwartz space on $\mb R^d$ and let $\tilde \phi (x) = \check\phi (-x)$
for every $x\in \mb R^d$. Then 
\begin{equation} \label{FinversionT}     \phi(\xi)     =     \check{ \tilde \phi} (\xi)          \  \  \    \  \    \forall \xi \in    \mb R^d.     \end{equation}    
Moreover, for every $\phi\in C^\infty(\mb R^d)$ with $\supp( \phi) \subset (-\pi, \pi)^d$ it holds $\check \phi\in \ell^1(\ve \mb Z^d)$
and 
\begin{equation} \label{FSinversionT}    
 \phi(\xi)     =     \hat{ \check \phi} (\xi)          \  \  \    \  \    \forall \xi \in    [-\pi, \pi]^d.     \end{equation}

\begin{proof}[Proof of Proposition~\ref {propdiscreteharmonicoscillatorF} ]
Let $\ve\in (0,1]$, $\psi\in C_c(\ve\mb Z^d)$ and 
let $\varphi := \left(-\ve^2\Delta_\ve     + U  \right) \psi$. Then 
$\phi\in C_c(\ve\mb Z^d)$ and  $\hat \varphi      =    \left( W      -    A_\ve \right) \hat\psi$, 
where $W:\mb R^d\to \mb R$ is a multiplication operator given by 
\begin{equation*}\label{symbolW}     W(\xi)   :=          4  \sum_{j=1}^d   \sin^2 \left( \frac{  \xi_j}{2} \right)    ,   
\end{equation*}
and $A_\ve$ is a second order differential operator given by 
                  \[   A_\ve    :=        \sum_{j,k=1}^d M_{j,k} \,  \left(   \ve^2 \partial_j\partial_k    + \ve \,  2 \bar x_k \im \partial_j  -   \bar x_k \bar x_j  \right)         .   \]
It follows then by Parseval's theorem~\eqref{Parseval} that 
\begin{equation} \label{FourierrepresentationscalarF}
\lp \left(-\ve^2\Delta_\ve   \    + U  \right) \psi, \psi   \rp_{\ell^2(\ve\mb Z^d)  }         =  
\lp \left( W      -    A_\ve \right) \hat\psi, \hat\psi   \rp_{L^2([-\pi, \pi]^d)  }      .  \end{equation}
We now consider a  cut-off function $\theta \in C^\infty(\mb R^d;[0,1])$ which equals $1$ on $\{\xi: |\xi| \leq 1\}$ and vanishes on $\{\xi: |\xi| \geq 2\}$.
For $j=1, \dots, N$ we define with $s=\tfrac 25$ the $\ve$-dependent smooth quadratic partition of unity 
$\{  \theta_{0,\ve},   \theta_{1,\ve}\}$ by setting
\[     \theta_{0,\ve} (\xi)      :=          \theta \left(\ve^{-s}   (\xi)\right)     \  \   \    \   \  
 \theta_{1, \ve} (\xi)       :=       \sqrt{ 1 -  \theta^2_{0,\ve} (\xi) }     
      .       \]
Moreover we denote by $W_0$ the leading term in the $\xi$-expansion of the function $W$ around the origin, i.e. 
\[          W_0(\xi)     :=        \tfrac12 \hess W(0)\,  \xi\cdot \xi      =        |\xi|^2     \   \ \     \ \   
 \forall \xi\in \mb R^d           .    \]
A simple rearrangement of terms gives 
\begin{gather}
\lp \left( W      -    A_\ve \right) \hat\psi, \hat\psi   \rp_{L^2([-\pi, \pi]^d)  }     =     
\lp \left( W_0      -   A_\ve \right) \theta_{0,\ve}\hat\psi,   \theta_{0,\ve}    \hat\psi   \rp_{L^2([-\pi, \pi]^d)  }   \  +   
\nonumber \\    
\lp \left( W -     A_\ve \right)  \,  \theta_{1,\ve}\hat\psi,   \theta_{1,\ve}    \hat\psi   \rp_{L^2([-\pi, \pi]^d)  }      +    
\mc E_1(\ve)    +   \mc E_2  (\ve)       ,   
\label{sumFourierImsF}
\end{gather}
where the localization errors $\mc E_1(\ve)  , \mc E_2 (\ve)$ are given by
\[    \mc E_1(\ve)     :=       \lp \left( W      -  W_0 \right) \theta_{0,\ve}\hat\psi,   \theta_{0,\ve}    \hat\psi   \rp_{L^2([-\pi, \pi]^d)  }      ,      
   \]
    \begin{equation*}  
            \mc E_2(\ve)     :=     
   -  \sum_{j=0}^1 \lp   \left(  \theta_{j,\ve } A_\ve - A_\ve \theta_{j,\ve } \right)  \hat \psi,    \theta_{j,\ve}\hat \psi    \rp_{L^2([-\pi, \pi]^d)  }   . 
  \end{equation*}
 The four terms in the right hand side of~\eqref{sumFourierImsF} are analyzed separately in the following.

\

{    \it 1) Analysis of the first term in the right hand side of~\eqref{sumFourierImsF}.}

\, 

\noindent
Using that  $\supp \theta_{0,\ve}\subset (-\pi, \pi)^d$ for $\ve\in (0,1]$, 
Plancherel's theorem~\eqref{Plancherel} and that the smallest eigenvalue of
the Harmonic Oscillator    $- \ve^2 \Delta  + U$ on $\mb R ^d$ is $\lambda_0$, gives 
\begin{gather}  
\lp \left( W_0      -    A_\ve  \right) \theta_{0,\ve}\hat\psi,   \theta_{0,\ve}    \hat\psi   \rp_{L^2([-\pi, \pi]^d)  }     =   
\lp \left( W_0      -    A_\ve  \right) \theta_{0,\ve}\hat\psi,   \theta_{0,\ve}    \hat\psi   \rp_{L^2(\mb R^d)  }         =   \nonumber \\
\lp \left( -    \ve^2 \Delta  + U    \right) \widecheck{\theta_{0,\ve}\hat\psi},  \widecheck{ \theta_{0,\ve}    \hat\psi }  \rp_{L^2(\mb R^d)  } 
  \geq  \ve   \lambda_0     \Big\| \widecheck{\theta_{0,\ve}\hat\psi} \Big \|^2_{L^2(\mb R^d) }   =     \nonumber \\
  \ve  \lambda_0  \| \theta_{0,\ve}\hat\psi  \|^2_{L^2(\mb R^d) }     =    \ve  \lambda_0     \| \theta_{0,\ve}\hat\psi \|^2_{L^2([-\pi, \pi]^d) }     \   \  \    \   \   \forall \ve\in (0,1]     .    \label{part1samF}       
\end{gather}
Moreover, considering for $\ve>0$ the ground state  
\[    g_\ve (x)     :  =          \frac{e^{-\frac{\lp x, \sqrt M x\rp}{2\ve}}}{\big\|e^{-\frac{\lp x, \sqrt M x\rp}{2\ve}}\big\|_{L^2(\mb R^d)}}          \                 ,    \]
an analogous computation gives for $\ve\in (0,1]$ the estimate
\begin{gather*}
\lp \left( W_0      -    A_\ve \right) \theta_{0,\ve}\hat\psi,   \theta_{0,\ve}    \hat\psi   \rp_{L^2([-\pi, \pi]^d)  }   
   =  
\lp \left( -    \ve^2 \Delta  + U    \right) \widecheck{\theta_{0,\ve}\hat\psi},  \widecheck{ \theta_{0,\ve}    \hat\psi }  \rp_{L^2(\mb R^d)  } 
  \geq     \nonumber   \\
  \ve   \lambda_0      \lp \widecheck{\theta_{0,\ve}\hat\psi}, g_\ve  \rp^2_{L^2(\mb R^d) }   + 
   \ve\lambda_1   
   \left(   \|\widecheck{\theta_{0,\ve}\hat\psi}\|_{L^2(\mb R^d) } ^2    -    \lp \widecheck{\theta_{0,\ve}\hat\psi}, g_\ve  \rp^2_{L^2(\mb R^d) }       \right)       =      \nonumber \\ 
    \ve \lambda_1   
      \|\widecheck{\theta_{0,\ve}\hat\psi}\|_{L^2(\mb R^d) } ^2      -    
       2 \ve  \min_j\sqrt{ \kappa_j}      \lp  \widecheck{\theta_{0,\ve}\hat\psi}, g_\ve  \rp^2_{L^2(\mb R^d) } = \\
    \ve  \lambda_1   \| \theta_{0,\ve}\hat\psi \|^2_{L^2([-\pi, \pi]^d) }  
     -  2 \ve  \min_j\sqrt{ \kappa_j}      \lp  \widecheck{\theta_{0,\ve}\hat\psi}, \check{\tilde {g}}_\ve  \rp^2_{L^2(\mb R^d) },      
 \end{gather*}
 where for the last equality the Fourier inversion theorem~\eqref{FinversionT} is used for $g_\ve$.  
Moreover using~\eqref{Plancherel},~\eqref{FSinversionT} and~\eqref{Parseval} we get
\begin{gather*}
  \lp  \widecheck{\theta_{0,\ve}\hat\psi}, \check{\tilde {g}}_\ve  \rp_{L^2(\mb R^d) }    =    
\lp  \theta_{0,\ve}\hat\psi, \tilde {g}_\ve  \rp_{L^2(\mb R^d) }    = 
\lp \hat\psi,  \theta_{0,\ve}  \tilde {g}_\ve  \rp_{L^2([-\pi, \pi]^d)  }    =  \\
\lp \hat\psi,  \widehat{\widecheck{\theta_{0,\ve}  \tilde {g}}}_\ve  \rp_{L^2([-\pi, \pi]^d)  }   =        
\lp  \psi,  \check\theta_{0,\ve} \tilde {g}_\ve  \rp_{L^2([-\pi, \pi]^d)  } =
  \lp  \psi   ,  \widecheck{\theta_{0,\ve} \tilde {g} }_\ve \rp_{\ell^2(\ve\mb Z^d) }      .       
\end{gather*}
Thus, setting for shortness 
 \begin{equation*}  \label{cJF?}
       \Phi_\ve(\xi)        :=     \sqrt{2 \ve}  \min_j  \kappa^{\frac 14}_j  \tilde {g}_\ve (\xi)
   \    \  \    \text{ for }     \xi\in \mb R^d          ,   
\end{equation*}    
we can conclude that 
  \begin{gather}
\lp \left( W_0      -    A_\ve  \right) \theta_{0,\ve}\hat\psi,   \theta_{0,\ve}  
  \hat\psi   \rp_{L^2([-\pi, \pi]^d)  }      \geq     \label{part2samF}     \\
        \ve \lambda_1  
        \| \theta_{0,\ve}\hat\psi \|^2_{L^2([-\pi, \pi]^d) }        -         
      \lp  \psi   ,  \widecheck{\theta_{0,\ve} \Phi_\ve } \rp^2_{\ell^2(\ve\mb Z^d) }      \   \      \  \    \   \forall   \ve\in (0,1]    .    \nonumber 
\end{gather}

\

{     \it 2) Analysis of the second term in the right hand side of~\eqref{sumFourierImsF}.}

\, 

\noindent
Using the inequality $\sin\tfrac t2   \geq     \tfrac t4$ for $t\in [0,\pi]$ gives
\begin{equation} \label{LBpincio}
    W(\xi)       =    
   4  \sum_{j=1}^d   \sin^2 \left( \frac{  \xi_j}{2} \right)      \geq        \frac 14 |\xi|^2   
         \   \  \    \   \   \forall \xi\in [-\pi, \pi]^d     .         \end{equation}
Since      
$\supp \theta_{1,\ve} \subset \{  \xi\in \mb R^d:  |\xi| \geq 2 \ve^{\tfrac 25}  \}$ for $\ve\in (0,1]$,
the bound~\eqref{LBpincio} implies
 \begin{equation} \label{LBawayFF} 
\lp  W    \theta_{1,\ve}\hat\psi,   \theta_{1,\ve}    \hat\psi   \rp_{L^2([-\pi, \pi]^d)  }         \geq   
 \ve^{\frac 45}  \|\theta_{1,\ve}\hat\psi\|^2_{L^2([-\pi, \pi]^d)  }       \  \    \   \  \   \forall \ve\in (0,1].
\end{equation}
Moreover, since $\hat\psi$ is periodic and $\theta_{1,\ve}$ equals $1$ around the boundary of $[-\pi, \pi]^d$
for $\ve\in (0,1]$, integration by parts gives
\begin{equation} \label{intbypartstorusF}
\lp -    A_\ve   \,  \theta_{1,\ve}\hat\psi,   \theta_{1,\ve}    \hat\psi   \rp_{L^2([-\pi, \pi]^d)  }      \geq  0
          \  \   \  \   \   \forall \xi\in (0, 1]         .         \end{equation}
In particular, it follows from~\eqref{LBawayFF} and~\eqref{intbypartstorusF}
that there exists an $\ve'_0\in (0,1]$ such that for all $\ve\in (0,\ve'_0]$
 \begin{equation}   \label{number2localizationF}
 \lp \left( W -     A_\ve \right)  \,  \theta_{1,\ve}\hat\psi,   \theta_{1,\ve}    \hat\psi   \rp_{L^2([-\pi, \pi]^d)  }  
        \geq   
  \ve \lambda_1     
     \|\theta_{1,\ve}\hat\psi\|^2_{L^2([-\pi, \pi]^d)  }     . 
 \end{equation}

\

{    \it 3) Analysis of the localization error $\mc E_1$. }

\,

\noindent
Since by Taylor expansion there exists a $C'>0$ such that $ |W(\xi)-W_0(\xi)| \leq C' |\xi|^3$
for $|\xi| \leq 2$, one gets
\begin{gather*}     
|\mc E_1 (\ve)|       =    \left| \lp \left( W      -  W_0 \right) \theta_{0,\ve}\hat\psi,   \theta_{0,\ve}    \hat\psi   \rp_{L^2([-\pi, \pi]^d)  }   \right|   \leq    \\
    \sup_{|\xi| < 2\ve^{\frac 25}} |W(\xi)-W_0(\xi)|   \,  \| \theta_{0,\ve}    \hat\psi   \|^2_{L^2([-\pi, \pi]^d)}       \leq    
     8 C'   \, \ve^{\frac 65}  \,      \| \theta_{0,\ve}    \hat\psi   \|^2_{L^2([-\pi, \pi]^d)}      \  \   \   \  \   
       \forall    \ve\in (0,1]   .  
 \end{gather*}
 In particular, since
  $  \| \theta_{0,\ve}    \hat\psi   \|^2_{L^2([-\pi, \pi]^d)} \leq   \|   \hat\psi   \|^2_{L^2([-\pi, \pi]^d)}
 =    \| \psi   \|^2_{\ell^2(\ve \mb Z^d)}$, it follows that 
 \begin{equation}       \label{localizationerror1F}
    \mc E_1(\ve)       \geq     -    8C'  \, \ve^{\frac 65}  \,  
 \| \psi   \|^2_{\ell^2(\ve \mb Z^d)}   \   \  \    \   \    \forall \ve \in (0,1]      .     
 \end{equation}

\

{     \it 4)  Analysis of the localization error $\mc E_2$. }

\,

\noindent
A straightforward computation (see  also \cite[Lemma 11.3]{Teschl}) gives the IMS localization
formula
\[      A_\ve \hat\psi     - \sum_{j=0}^1 \theta_{j, \ve}  A_\ve   (\theta_{j, \ve} \hat \psi)    =         \ve^2 \sum_{j=0}^1   
\lp \nabla \theta_{j, \ve} ,  M \nabla \theta_{j, \ve}   \rp \hat \psi  (\xi )        \   \text{ on }  \mb R^d      .      \]
Thus there exists a constant $C''>0$ such that 
 \begin{gather*}  
         | \mc E_2 (\ve)|      \leq   
      \ve^2 \sum_{j=0}^1  \sup_{\xi\in \mb R^d} 
       |\lp \nabla \theta_{j, \ve} (\xi),  M \nabla \theta_{j, \ve}   (\xi )\rp|   \ 
      \|\hat \psi    \|_{L^2([-\pi, \pi]^d)  }       \leq     \\
           C'' \, \ve^{2-2s}  \,   \|\hat \psi    \|_{L^2([-\pi, \pi]^d)  }        \  \    \  \  \  \forall \ve\in (0,1]     .
  \end{gather*}
 Recalling that $s=\tfrac 25$ and the Parseval theorem~\eqref{Parseval} we conclude that
  \begin{equation}\label{localizationerror2F}
   \mc E_2  (\ve)       \geq       -    C'' \ve^{\frac 65}  \,  
 \| \psi   \|^2_{\ell^2(\ve \mb Z^d)}   \  \  \     \  \        \forall \ve \in (0,1]        .     
 \end{equation}

\

{   \it Final step.}

\, 

\noindent
Statement (i) in the theorem follows by putting together~\eqref{FourierrepresentationscalarF}, \eqref{sumFourierImsF},
 \eqref{part1samF}, \eqref{number2localizationF}, \eqref{localizationerror1F} and \eqref{localizationerror2F}, chosing
 $C= 8C' + C''$ and $\ve_0= \ve_0'$ and observing that
 \[      \| \theta_{0,\ve}    \hat\psi   \|^2_{L^2([-\pi, \pi]^d}      +    
    \| \theta_{1,\ve}    \hat\psi   \|^2_{L^2([-\pi, \pi]^d}    =    
       \|    \hat\psi   \|^2_{L^2([-\pi, \pi]^d}  =        \| \psi   \|^2_{\ell^2(\ve \mb Z^d)}   .   \]
Statement (ii) follows similarly, but using~\eqref{part2samF} instead of~\eqref{part1samF}
 and chosing 
$\Psi_\ve =   \widecheck{\theta_{0,\ve} \Phi_\ve }_{\Big |\ve\mb Z^d} $. 
\end{proof}

\subsection{Laplace asymptotics on $\ve\mb Z^d$}\label{sectiondiscreteLaplace}

\

\noindent
Given $x_0\in \mb R^d$ and $\delta>0$ we denote by 
$B_\delta(x_0) = \{   x\in \mb R^d :    |x-x_0| <  \delta \}$ 
the open ball of radius $\delta$ around $x_0$ and, for each $\ve > 0 $, by 
$B_\delta^\ve(x_0) = B_\delta(x_0)  \cap \ve \mb Z^d$
its  intersection with $\ve \mb Z^d$ and by 
$[B^\ve_\delta( x_0)]^c  =   \ve \mb Z^d\setminus B^\ve_\delta( x_0)$ the complementary of 
$B^\ve_\delta( x_0)$.

\begin{proposition}\label{propgaussianlaplace}
Let $q(x) = \tfrac 12  x \cdot Q x $, where $Q$ is a symmetric, positive definite  $d\times d$ matrix and let $x_0\in \mb R^d$ and $m \in \mb N_0$. Then there exists a $\gamma>0$ such that
for every $\ve\in (0,1]$
\begin{equation}\label{gaussasymptoticsFF}
\ve^{\tfrac d2}      \sum_{x\in \ve\mb Z^d}  |x-x_0|^{2m}     e^{ -\frac{ q(x-x_0)    }{\ve}}    =    
      \ve^{m} \int_{\mb R^d}  |x-x_0|^{2m}     e^{ - q(x-x_0)    }  \, dx 
      +  \mc O(e^{-\frac \gamma \ve})   . 
 \end{equation}
Moreover for every $\delta>0$ there exists $\gamma(\delta)>0$  such that
for every $\ve\in (0,1]$
\begin{equation}\label{discretetailFF}   
      \sum_{x\in B^\ve_\delta( x_0)}   |x-x_0|^{2m}    e^{ -\frac{ q(x-x_0)    }{\ve}}   =
         \sum_{x\in \ve\mb Z^d}    |x-x_0|^{2m}    e^{ -\frac{ q(x-x_0)    }{\ve}}  
            \left( 1 +    \mc O(e^{-\frac {\gamma(\delta)}{ \ve}}) \right)   .
\end{equation}       
\end{proposition}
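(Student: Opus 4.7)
The proof of~\eqref{gaussasymptoticsFF} is a textbook application of the Poisson summation formula. The point is that $F_\ve(x) := |x-x_0|^{2m}e^{-q(x-x_0)/\ve}$ belongs to $\mc S(\mb R^d)$ (since $|y|^{2m}$ is a polynomial) and its Fourier transform is a Gaussian of ``width'' proportional to $1/\sqrt\ve$; summation over the dual lattice $\frac{2\pi}{\ve}\mb Z^d$ then produces the integral $\int F_\ve$ from the zero frequency, while all remaining frequencies give the $\mc O(e^{-\gamma/\ve})$ error because the Fourier transform is evaluated precisely where it has become exponentially small. Given~\eqref{gaussasymptoticsFF}, the tail estimate~\eqref{discretetailFF} will follow by an elementary quadratic-form splitting $q(y)/\ve = q(y)/(2\ve)+q(y)/(2\ve) \ge q(y)/(2\ve)+c\delta^2/(2\ve)$ valid for $|y|\ge\delta$, which turns the tail into a full sum with $\ve$ replaced by $2\ve$, times the exponentially small factor $e^{-c\delta^2/(2\ve)}$.

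\textbf{Details for~\eqref{gaussasymptoticsFF}.} Using the Fourier transform convention $\hat G(\omega):=\int G(x)e^{-\im\omega\cdot x}\,dx$, Poisson summation for $F_\ve\in\mc S(\mb R^d)$ reads
\[
 \ve^d\sum_{x\in\ve\mb Z^d}F_\ve(x)\;=\;\sum_{\omega\in\frac{2\pi}{\ve}\mb Z^d}\hat F_\ve(\omega).
\]
A direct Gaussian computation — writing $|y|^{2m}e^{-q(y)/\ve}$ as a polynomial differential operator in $\partial_\omega$ applied to the Fourier transform of $e^{-q(y)/\ve}$ — yields
\[
 \hat F_\ve(\omega)\;=\;e^{-\im\omega\cdot x_0}\,\ve^{d/2}\,P_m(\omega,\ve)\,e^{-\frac{\ve}{2}\langle Q^{-1}\omega,\omega\rangle},
\]
for a polynomial $P_m$ of degree at most $2m$ in $\omega$. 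The zero-frequency term $\hat F_\ve(0)$ equals $\ve^{d/2+m}\int|y|^{2m}e^{-q(y)}\,dy$, i.e.\ precisely the main term in~\eqref{gaussasymptoticsFF} (after dividing by $\ve^{d/2}$). For every $\omega\ne 0$ in the dual lattice one has $|\omega|\ge 2\pi/\ve$, so the Gaussian factor is bounded by $\exp(-2\pi^2\lambda_{\min}(Q^{-1})/\ve)$; absorbing the polynomial $P_m(\omega,\ve)$ at the price of a slightly smaller rate and summing the resulting convergent Gaussian series over the dual lattice gives the bound $\mc O(e^{-\gamma/\ve})$ for any $\gamma<2\pi^2\lambda_{\min}(Q^{-1})$.

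\textbf{Details for~\eqref{discretetailFF} and main obstacle.} Let $c>0$ be the smallest eigenvalue of $Q/2$, so that $q(y)\ge c|y|^2$. For every $x\in[B^\ve_\delta(x_0)]^c$ we bound
\[
 |x-x_0|^{2m}e^{-q(x-x_0)/\ve}\;\le\;e^{-c\delta^2/(2\ve)}\,|x-x_0|^{2m}e^{-q(x-x_0)/(2\ve)}.
\]
Summing over $[B^\ve_\delta(x_0)]^c$ and invoking~\eqref{gaussasymptoticsFF} with $\ve$ replaced by $2\ve$ shows that the tail is $\mc O(\ve^{m-d/2}e^{-c\delta^2/(2\ve)})$, which, compared to the full sum (of order $\ve^{m-d/2}$ by~\eqref{gaussasymptoticsFF}), yields~\eqref{discretetailFF} with any $\gamma(\delta)<c\delta^2/2$. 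The only mildly delicate point in the argument is the uniform control of the polynomial prefactor $P_m(\omega,\ve)$ in the Poisson dual sum, but this is routine bookkeeping: Gaussian decay $e^{-c'\ve|\omega|^2}$ dominates any polynomial growth in $\omega$ uniformly for $\ve\in(0,1]$, at an arbitrarily small cost in the rate $\gamma$.
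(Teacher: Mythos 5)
Your proof of~\eqref{gaussasymptoticsFF} is correct and uses exactly the paper's mechanism: Poisson summation, with the zero frequency producing the Gaussian integral and all non-zero dual-lattice frequencies contributing an exponentially small remainder because the Fourier transform is a Gaussian (times a polynomial) evaluated at distance at least $2\pi/\ve$ from the origin. The only cosmetic difference is that you apply Poisson directly on $\ve\mb Z^d$, while the paper first rescales the sum to the unit lattice $\mb Z^d$, applies Poisson there with the $e^{-2\pi\im x\cdot y}$ normalization, and observes that $\hat u$ is a linear combination of derivatives of Gaussians; these are equivalent changes of variables.

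For~\eqref{discretetailFF} your route is a minor variant of the paper's and also works, but one sentence is imprecise: ``invoking~\eqref{gaussasymptoticsFF} with $\ve$ replaced by $2\ve$'' would literally change the lattice to $(2\ve)\mb Z^d$, which is not what you want (your sum runs over $\ve\mb Z^d$). What you actually need is to apply~\eqref{gaussasymptoticsFF} on the same lattice $\ve\mb Z^d$ but with the quadratic form $q$ replaced by $q/2$ (still symmetric positive definite, so the proposition applies verbatim and gives the bound $\mc O(\ve^{m-d/2})$ for $\sum_{x\in\ve\mb Z^d}|x-x_0|^{2m}e^{-q(x-x_0)/(2\ve)}$). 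With that correction your argument is fine. For comparison, the paper instead uses $\ve\le 1$ to get the pointwise bound $e^{-(q(y)-C\delta^2)/\ve}\le e^{-(q(y)-C\delta^2)}$ on the ball complement and then estimates the $\ve$-independent sum $\ve^d\sum_{x\in\ve\mb Z^d}e^{-q(x-x_0)}$ by a constant via a second use of Poisson; both variants yield the same conclusion.
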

\begin{remark}
The Gaussian integrals appearing on the right hand side of~\eqref{gaussasymptoticsFF} can be computed explicitly.  We shall use in the sequel the explicit value only for $m=0$, in which case~\eqref{gaussasymptoticsFF} becomes  
\begin{equation*}\label{gaussasymptoticsFcorF}
\ve^{\tfrac d2}      \sum_{x\in \ve\mb Z^d}      e^{ -\frac{ q(x-x_0)    }{\ve}}    =   
     \sqrt{\tfrac{(2\pi)^{d}}{\det Q} }    +   
   \mc O(e^{-\frac \gamma \ve})       .
   \end{equation*}
We shall also use the following estimate for odd moments: 
\begin{equation}\label{gaussasymptoticsFcorF}
\ve^{\tfrac d2}      \sum_{x\in \ve\mb Z^d}  |x-x_0|^m     e^{ -\frac{ q(x-x_0)    }{\ve}}    =   
\mc O(\ve^{\frac m2})    \        \text{ for  }    m= 1, 3, \dots  \, ,     
\end{equation}
The latter follows from Proposition~\ref{propgaussianlaplace} and the Cauchy-Schwarz inequality
\begin{gather*}    \left|  \ve^{\tfrac d2}      \sum_{x\in \ve\mb Z^d}  |x-x_0|^m      e^{ -\frac{ q(x-x_0)    }{\ve}}  \right|   \leq   \\    
\left( \ve^{\tfrac d2}      \sum_{x\in \ve\mb Z^d}  |x-x_0|^{2 m}  e^{ -\frac{ q(x-x_0)    }{\ve}}    \right)^{\frac 12}    \left(    
     \ve^{\tfrac d2}      \sum_{x\in \ve\mb Z^d}      e^{ -\frac{ q(x-x_0)    }{\ve}}  \right)^{\frac 12}. 
     \end{gather*}
 \end{remark}

\begin{proof}[Proof of Proposition~\ref{propgaussianlaplace}] 
\

\noindent
The function $x\mapsto u(x) := |x|^{2m} e^{-q(x-x_0)}$ is in the Schwartz space $\mc S(\mb R^d)$
and its Fourier transform $\hat u (x) :=\int_{\mb R^d} u(y)\ 
e^{-2\pi\im x\cdot y} \ dy$ satisfies the Poisson summation formula (see e.g. \cite[Corollary 2.6, p. 252]{Steinweiss})
 \begin{equation*} 
\sum_{x\in\mb Z^d} u ( x)  =   \sum_{x\in\mb Z^d} 
\hat u(x)  . 
\end{equation*}
It follows that
 \begin{gather*}
  \ve^{\tfrac d2}      \sum_{x\in \ve\mb Z^d}  |x-x_0|^{2m}     e^{ -\frac{ q(x-x_0)    }{\ve}}    =     
   \ve^{\tfrac d2 +  m}      \sum_{x\in   \mb Z^d}  |\sqrt \ve ( x -    \tfrac{x_0}{ \ve})|^{2m}      \,  e^{  q(\sqrt \ve ( x -    \tfrac{x_0}{ \ve})   )  }    =        \\
  =      \ve^{\tfrac d2 + m}      \sum_{x\in   \mb Z^d}  u(\sqrt \ve ( x -    \tfrac{x_0}{ \ve}) )    =     
          \ve^{m}         \sum_{x\in   \mb Z^d}  e^{ -    \tfrac{2\pi i x \cdot x_0}{ \ve} }     \hat u(  \tfrac {x}{ \sqrt {\ve}} )       =   
         \ve^{m}      \int_{\mb R^d}   u(x)  \, dx   +  R_\ve     ,
 \end{gather*}
 with 
 \[    R_\ve   :=    \ve^{m}         \sum_{x\in   \mb Z^d \setminus\{0\}}  e^{ -    \tfrac{2\pi i x \cdot x_0}{ \ve} }     \hat u(  \tfrac {x}{ \sqrt {\ve}} )        .     \]
Since $\hat u$ is a linear combination of derivatives of Gaussian functions, there exist constants
$C, \gamma>0$ such that
\begin{equation*} \label{FourierdecayFF}    
       |\hat u(x) |      \leq    C e^{-2\gamma |x|^2}   
     \  \  \   \  \ 
  \forall    x\in \mb R^d          .
   \end{equation*}
It follows that for every $\ve\in (0,1]$
\[     |R_\ve|   \leq   C    \ve^{m}   \sum_{x\in   \mb Z^d \setminus\{0\}}       e^{-\frac{2\gamma |x|^2}{\ve}}     =     C    \ve^{m}   e^{-\frac{\gamma}{\ve}}     \sum_{x\in   \mb Z^d \setminus\{0\}}       e^{-\frac{\gamma }{\ve}(2|x|^2 - 1)} 
\leq C'     e^{-\frac{\gamma}{\ve}}    ,   \]
with  $C'   :=    C \sum_{x\in   \mb Z^d \setminus\{0\}}       e^{-\gamma (2|x|^2 - 1)}    $ 
which concludes the proof of~\eqref{gaussasymptoticsFF}.

\

\noindent
In order to prove~\eqref{discretetailFF}, fix $\delta>0$ and note that, due to the positive definiteness of $Q$, there exists a constant $C>0$ such that   $q(x) >    C\delta^2$ for every 
$x\in [B^\ve_\delta( x_0)]^c $. Thus, for $\ve\in (0, 1]$,
 \begin{gather}\sum_{x\in [B^\ve_\delta( x_0)]^c}     e^{ -\frac{ q(x-x_0)    }{\ve}}         =    
 e^{ -\frac{ C\delta^2 }{\ve}}      \sum_{x\in [B^\ve_\delta( x_0)]^c}     e^{ -\frac{ q(x-x_0)  - C\delta^2   }{\ve}}      \leq     \nonumber \\
 \leq   \ve^{-d} e^{ -\frac{ C\delta^2 }{\ve}}           e^{  C\delta^2}     \ve^d   \sum_{x\in [B^\ve_\delta( x_0)]^c}     e^{ - q(x-x_0) }     
 \leq      \ve^{-d} e^{ -\frac{ C\delta^2 }{\ve}}  K    ,   
 \label{ghyuF} 
   \end{gather} 
with  $K =   e^{ C\delta^2} \left( \int_{\mb R^d}   
 e^{ - q(x-x_0) }    dx  +1 \right)$. To see the last inequality one can 
 use e.g. the Poisson summation formula for 
 $  \ve^d   \sum_{x\in \ve \mb Z^d}     e^{ - q(x-x_0) }       $. 
  From~\eqref{ghyuF}, chosing 
   $\gamma >0$ sufficiently small and $C'>0$ sufficiently large we obtain 
 \[  \sum_{x\in [B^\ve_\delta( x_0)]^c}     e^{ -\frac{ q(x-x_0)    }{\ve}}     \leq    C'    e^{-\frac{\gamma}{\ve}} .   \]
 The estimate~\eqref{discretetailFF} for $m=0$ follows then 
 using~\eqref{gaussasymptoticsFF} with $m=0$. The case of positive 
 $m$ can be proven in the same way. 
 \end{proof}
\noindent
The following proposition concerns more general, not necessarily quadratic phase functions. 
\begin{proposition}\label{GenPropositionLaplaceF}
Let $x_0\in \mb R^d$, $\delta>0$, $k\in \{3,4\}$ and  $\vp \in C^k(\overline{B_\delta(x_0)})$ s.t. 
\begin{equation}\label{laplaceassumptionvpF}
\vp(x_0)=0,    \    
 \hess \vp (x_0)>0  \text{ and 
 }   \vp(x)>0  \text{ for every }  
x\in    \overline {B_\delta(x_0)}  .   
\end{equation}
Moreover let $m \in \mb N_0$. Then  for $\ve \in (0,1]$ it holds
\begin{gather}
\ve^{\tfrac d2}      \sum_{x\in B^\ve_\delta( x_0) }  |x-x_0|^{2m}     e^{ -\frac{ \vp(x)    }{\ve}}    =   \nonumber  \\ 
\ve^{m} \int_{\mb R^d}  |x-x_0|^{2m}     e^{ - q(x-x_0)  }  \, dx 
   \left( 1    +    \mc O(\ve^{\tfrac{k-2}{2}}) \right)    ,    \label{discLaplacestatementF}
\end{gather}       
where $q(x)  =     \tfrac 12 \hess \vp(x_0) x\cdot x$ for all $x\in \mb R^d$.  
\end{proposition}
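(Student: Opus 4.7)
The plan is to reduce to Proposition~\ref{propgaussianlaplace} by Taylor-expanding $\vp$ around $x_0$ and splitting $B^\ve_\delta(x_0)$ into (i) an outer shell $\delta_1\leq|x-x_0|\leq\delta$ whose contribution is genuinely exponentially small in $\ve$ (by compactness, continuity, and the strict positivity of $\vp$ away from $x_0$ on $\overline{B_\delta(x_0)}$), (ii) a shrinking ball $B^\ve_{r(\ve)}(x_0)$ with $r(\ve)=\ve^\beta$ and $\beta\in(1/3,1/2)$, where Gaussian comparison works, and (iii) an intermediate annulus $r(\ve)\leq|x-x_0|\leq\delta_1$ on which super-polynomial decay takes over. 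Here $\delta_1$ is fixed so small that $\hess\vp>0$ throughout $\overline{B_{\delta_1}(x_0)}$ and in particular $\vp(x)\geq c|x-x_0|^2$ there. In the intermediate annulus both $e^{-\vp/\ve}$ and $e^{-q(x-x_0)/\ve}$ are dominated by $e^{-c\ve^{2\beta-1}}$, which decays faster than any power of $\ve$, so the $O(\ve^{-d})$ lattice points sitting in that region contribute negligibly relative to the target rate $\ve^{m+(k-2)/2}$.

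On the shrinking ball $B^\ve_{r(\ve)}(x_0)$ I Taylor-expand $\vp=q+R_k$ with $|R_k(x)|\leq C|x-x_0|^k$, using $\vp(x_0)=0$ and $\nabla\vp(x_0)=0$ (the latter being forced by the strict-minimum structure). For $k=3$ the ratio $R_3/\ve=O(\ve^{3\beta-1})$ is uniformly small on the ball, so the elementary bound $|e^{-t}-1|\leq|t|e^{|t|}$ gives $e^{-\vp/\ve}=e^{-q(x-x_0)/\ve}(1+O(R_3/\ve))$. Summing against $|x-x_0|^{2m}$ and invoking Proposition~\ref{propgaussianlaplace} together with~\eqref{gaussasymptoticsFcorF} for odd moments, the $O(R_3/\ve)$-correction is seen to be of order $\ve^m\cdot O(\sqrt\ve)$, matching the claim.

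For $k=4$ this direct expansion still produces only an $O(\sqrt\ve)$ rate, so I refine it by peeling off the homogeneous cubic Taylor polynomial: $\vp=q+\tau+R_4$ with $|R_4|\leq C|x-x_0|^4$. Expanding $e^{-\tau/\ve}$ to second order (justified since $\tau/\ve=O(\ve^{3\beta-1})$ is small) and $e^{-R_4/\ve}$ to first order, the term linear in $\tau$ drops out: $\int_{\mb R^d}\tau(y)\,|y|^{2m}e^{-q(y)}\,dy=0$ by odd symmetry, and the associated lattice sum matches this integral up to exponentially small error via the same Poisson-summation argument used in Proposition~\ref{propgaussianlaplace}. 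The remaining $O(\tau^2/\ve^2)$ and $O(R_4/\ve)$ terms yield, via Gaussian moment bounds of degrees $2m+6$ and $2m+4$ respectively, a combined correction of order $\ve^m\cdot O(\ve)$, as required. The main delicate point will be precisely this cancellation of the cubic term: it forces me to run the Poisson-summation argument of Proposition~\ref{propgaussianlaplace} not only for radial moments but also for the non-radial polynomial factor $\tau(y)|y|^{2m}$; once this mild extension is recorded, the rest is routine bookkeeping with the moment estimates already available.
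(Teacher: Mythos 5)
Your proposal is correct, and the overall strategy (Taylor-expand $\vp$ around $x_0$, reduce to Gaussian lattice sums, compare with the continuous Gaussian integral, control moments via Proposition~\ref{propgaussianlaplace}) coincides with the paper's. The technical handling of the Taylor remainder near $x_0$, however, is genuinely different. The paper works on a single \emph{fixed} ball $B_{\delta'}(x_0)$, splits $e^{-r/\ve}$ as $1-\ve^{-1}r+(e^{-r/\ve}-1+\ve^{-1}r)$ with $r=\vp-q$, and controls the last piece via $|e^{-t}-1+t|\leq\tfrac12 t^2 e^{|t|}$ together with the absorption $e^{|r|/\ve}e^{-q/\ve}\leq e^{-\tilde q/\ve}$, where $\tilde q(x)=\alpha|x|^2$ is a fixed, slightly smaller Gaussian chosen once and for all so that $q-|r|\geq\tilde q$ near $x_0$; this sidesteps any $\ve$-dependent localization scale. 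Your three-region decomposition with shrinking radius $\ve^\beta$, $\beta\in(1/3,1/2)$, instead makes the expansion of the exponential immediate (since $R_k/\ve\to0$ uniformly on the small ball) at the cost of an extra intermediate annulus whose contribution must be dismissed as super-polynomially small; this is a more mechanical but also more elementary route to the same bounds. For $k=4$ you are in fact more careful than the paper: the paper asserts that $\sum_{x\in B^\ve_{\delta'}(x_0)}t_3(x)u_\ve(x)$ vanishes identically by the reflection symmetry of the cubic $t_3$ about $x_0$, but this exact cancellation would require the lattice $B^\ve_{\delta'}(x_0)$ to be symmetric under $x\mapsto 2x_0-x$, which generically fails when $2x_0\notin\ve\mb Z^d$. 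Your version---running the Poisson-summation argument of Proposition~\ref{propgaussianlaplace} on the non-radial integrand $\tau(y)|y|^{2m}e^{-q(y)}$ and using that the \emph{continuous} integral vanishes by oddness, leaving only an exponentially small remainder---is the correct reading of this step and the one needed to make the cancellation airtight.
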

\begin{remark} Under the stronger regularity assumption  $\vp\in C^\infty(B_\delta(x_0))$ one can show that the error term 
in~\eqref{discLaplacestatementF} admits a complete asymptotic expansion in powers of $\ve$, see~\cite[Appendix C]{thesis} for details. 
\end{remark}
\begin{proof}
We reduce the problem to the quadratic case of Proposition~\ref{propgaussianlaplace}.
For   
 $x\in \overline{B_\delta(x_0) } $ let for short
 $r(x)=   \vp(x)   -   q(x-x_0)$ and note that 
 there exist $\alpha, \delta'>0$ such that 
 $\tilde q(x) :=  \alpha  |x|^2 $   satisfies
 \begin{equation}    \label{LBtildephiF}
   \tilde \vp(x)    :=    q(x-x_0)   -      |r(x)|   \geq   \tilde q(x-x_0)    
 \     \   \ \   \       \forall x\in \overline{ B_{\delta'}(x_0)}    
,      
\end{equation}
and also  $ \vp(x)   \geq   \tilde q(x-x_0) $
for all  $x\in \overline{ B_\delta(x_0)}$. Indeed, 
the assumption $\vp\in C^3(\overline{B_\delta(x_0)})$ implies the
existence of a constant
$C>0$ such that  $|r(x)| \leq C |x-x_0|^3$ for all $x\in 
\overline{B_\delta(x_0)}$. It follows that, 
denoting by $\lambda>0$ the smallest eigenvalue of $\hess \vp(x_0)$
and taking e.g. $\delta' =   \tfrac{\lambda}{4C}$ and $\alpha' =   \tfrac{\lambda}{4}    $, 
\[     \tilde \vp (x)      \geq  
    (\tfrac{\lambda}{2}    -     C\delta')   |x-x_0|^2  \geq  
    \tfrac{\lambda}{4}     |x-x_0|^2      \  \   \  \  \       
    \forall  x\in \overline{B_{\delta'}(x_0) }   .  \]
Note that, a fortiori, also $ \vp(x)\geq \tilde q(x)$ for every $x
\in B_{\delta'}(x_0)$. Moreover, since $\tfrac{\vp(x)}{|x|^2}$ is continuous
and stricly positive on the compact set 
$\overline{B_{\delta}(x_0)} \setminus B_{\delta'}(x_0)$ we can take e.g.
\[    \alpha     =     \min\{\alpha', \inf_{x\in \overline{B_{\delta}(x_0)} \setminus B_{\delta'}(x_0)}\tfrac{\vp(x)}{|x|^2}\}      .     \]

\

\noindent
It will be enough to prove~\eqref{discLaplacestatementF}
with the sum  on the left hand side restricted
to  $B_{\delta'}^\ve(x_0)$, since by
Proposition~\ref{propgaussianlaplace} there exists a 
$\gamma>0$ s.t. for $\ve\in (0,1]$
\begin{gather*}
    \sum_{x\in B^\ve_\delta( x_0) \setminus B^\ve_{\delta'}( x_0) }  |x-x_0|^{2m}     e^{ -\frac{ \vp(x)    }{\ve}}    \leq   
     \sum_{x\in [B^\ve_{\delta'}( x_0)]^c } 
 |x-x_0|^{2m}     e^{ -\frac{ \tilde q(x-x_0)    }{\ve}}     =   
 \mc O(e^{-\gamma/\ve})     .    
\end{gather*} 
We shall  consider the decomposition  
\begin{equation}\label{decompositionLaplaceF}
\ve^{\tfrac d2}      \sum_{x\in B^\ve_{\delta'}( x_0) }  |x-x_0|^{2m}      e^{ -\frac{ \vp(x)    }{\ve}}   =     I_0(\ve)    + I_1(\ve) + I_3(\ve)   , 
\end{equation}
with, setting for short  $u_\ve(x) = |x-x_0|^{2m}      e^{ -\frac{ q(x)    }{\ve}} $, 
\begin{equation*}
 I_0(\ve)   =      \ve^{\tfrac d2}      \sum_{x\in B^\ve_{\delta'}( x_0) }
  u_\ve(x)    ,     \   \      I_1(\ve)   =      \ve^{\tfrac d2}      \sum_{x\in B^\ve_{\delta'}( x_0) }  
     \ve^{-1} r( x) u_\ve(x)    
\end{equation*}
and
\begin{equation*}
 I_2(\ve)   =      \ve^{\tfrac d2}      \sum_{x\in B^\ve_{\delta'}( x_0) } 
      \left(   e^{-\frac{r(x)}{\ve}}  -1  - \ve^{-1} r(x)  \right) 
      u_\ve(x)     . 
\end{equation*}
It follows from Proposition~\ref{propgaussianlaplace} that there exists a 
$\gamma>0$ s.t. for $\ve\in (0,1]$
\begin{equation}\label{LLL0}   I_0(\ve)    =        \ve^{m} \int_{\mb R^d}  |x-x_0|^{2m}     e^{ - q(x-x_0)    }  \, dx 
      +  \mc O(e^{-\frac \gamma \ve})  .   \end{equation}
Morever, using $|r(x)| \leq C |x-x_0|^3$ for all 
$x\in  \overline{B_\delta(x_0)}$  and~\eqref{LBtildephiF} gives
 \begin{gather}
 |I_2(\ve) |   \leq      \tfrac 12  \ve^{\tfrac d2-2}     
  \sum_{x\in B^\ve_{\delta'}( x_0) }     |r(x)|^2   
   e^{\frac{|r(x)|}{\ve}}  u_\ve(x)      \leq   \nonumber  \\
    \tfrac C2  \ve^{\tfrac d2-2}     
  \sum_{x\in B^\ve_{\delta'}( x_0) }  |x-x_0|^{2m+6}  
     e^{ -\frac{ \tilde q(x-x_0)    }{\ve}}       =     \mc O(\ve^{m+1})     , \label{LLL2} 
\end{gather}
with the last estimate being a consequence of Proposition~\ref{propgaussianlaplace}. Finally, in order to 
analyze the term $I_1(\ve)$, we consider first the case $k=3$.
We then have by~\eqref{gaussasymptoticsFcorF} 
\begin{gather*}
 |I_1(\ve)|   \leq      \ve^{\tfrac d2-1}    C  \sum_{x\in B^\ve_{\delta'}( x_0) }  
|x-x_0|^{2m +3}   e^{ -\frac{ q(x-x_0)    }{\ve}}     = 
 \mc O(\ve^{m+\frac 12})     , 
 \end{gather*}
which together with~\eqref{decompositionLaplaceF}, \eqref{LLL0}
and~\eqref{LLL2} finishes the proof for $k=3$.
For the case $k=4$ we write
 $r(x) =    t_3(x) +
 \rho(x)$, 
where $t_3: \overline{B_{\delta'}( x_0)} \to \mb R$ is the cubic term in the Taylor expansion of $\vp$ around $x_0$, thus satisfying $t_3(x_0+x) = t_3(x_0-x)$, 
and 
$\rho: \overline{B_{\delta'}( x_0)} \to \mb R$ 
 satisfies $|\rho(x)| \leq C' |x-x_0|^4$ for some $C'>0$.
We then have
\begin{gather*}
 I_1(\ve)   =      \ve^{\tfrac d2-1}      \sum_{x\in B^\ve_{\delta'}( x_0) }  
     \left( t_3(x) +
 \rho(x) \right)   u_\ve(x)   =  
  \ve^{\tfrac d2-1}      \sum_{x\in B^\ve_{\delta'}( x_0) }  
     \rho(x)   u_\ve(x) , 
     \end{gather*}
     and therefore  by Proposition~\ref{propgaussianlaplace}
\begin{gather*}    
| I_1(\ve) | \leq   \ve^{\tfrac d2-1}  C' \sum_{x\in B^\ve_{\delta'}( x_0) }  |x-x_0|^{2m+4}   
  e^{ -\frac{ q(x-x_0)    }{\ve}}    =   \mc O(\ve^{m+1}) , 
    \end{gather*}
which finishes the proof in the case $k=4$.
\end{proof}

\section{Proof of Theorem~\ref{main1F}}\label{SectionMainProof1}

\noindent
Recall the definition of $V_\ve$ given in~\eqref{mainV}. To  prove Theorem~\ref{main1F} we shall reduce to suitable localized problems and 
then exploit basic pointwise estimates on $V_\ve$ as stated in the following two complementary lemmata. The first one gives a uniform strictly positive lower bound 
on $V_\ve$ away from critical points. The second one concerns the local behavior of $V_\ve$ around critical points. Note that these bounds are almost immediate to obtain, even under weaker assumptions, if instead of $V_\ve$ one considers the corresponding continuous space potential $ \tfrac 14 |\nabla f|^2 - 
 \tfrac{\ve}{2} \Delta f$ appearing in~\eqref{introHcont}. The discrete case follows from straightforward Taylor expansions and elementary estimates. We shall give the details of the arguments 
 at the end of this section for completeness.
\begin{lemma} \label{lowerboundV*F}
Assume $f\in C^2(\mb R^d)$ and that  $\hess f$ is bounded on $\mb R^d$.
Let $S\subset \mb R^d$ and  $a>0$ such that $|\nabla f(x)|> a$
 for every $x\in S$.
 Then there exist constants $\ve_0,C>0$ such that
 \begin{equation*}\label{lowerboundUF}      
V_\ve(x)        \geq          C       \   \  \     \   \  \    \forall x\in S   \   \text{ and  }   \   \forall \ve \in (0, \ve_0]               .  
\end{equation*}
\end{lemma}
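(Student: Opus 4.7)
The plan is to reduce the discrete potential $V_\ve$ to its ``continuous'' counterpart via a uniform first-order Taylor expansion and then to invoke the elementary inequality $\cosh(t)\geq 1+t^2/2$. The key point is to absorb the Taylor error \emph{into the exponent} as a multiplicative constant rather than additively, so that the estimate does not deteriorate when $|\nabla f|$ is large on $S$.

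First I would write, for every $x\in\mb R^d$ and $v\in\mc N$, the second-order Taylor formula
\[
f(x+\ve v) - f(x) \;=\; \ve\,\nabla f(x)\cdot v \,+\, \ve^2 \int_0^1 (1-t)\,\hess f(x+t\ve v)\,v\cdot v\,dt,
\]
which together with $|v|=1$ and the boundedness of $\hess f$ on $\mb R^d$ yields
\[
\bigl|\nabla_\ve f(x,v) - \nabla f(x)\cdot v\bigr| \;\leq\; \tfrac{\ve}{2}\,\|\hess f\|_\infty ,
\]
uniformly in $x\in\mb R^d$ and $v\in\mc N$. Setting $M := \tfrac14\|\hess f\|_\infty$, this translates into the uniform pointwise multiplicative bound
\[
e^{-\frac12\nabla_\ve f(x,v)} \;\geq\; e^{-M\ve}\, e^{-\frac12\nabla f(x)\cdot v}.
\]

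Summing over $v\in\mc N=\{\pm e_k:k=1,\dots,d\}$ and using that the pair $(e_k,-e_k)$ contributes $e^{-\frac12\partial_k f}+e^{\frac12\partial_k f}=2\cosh(\tfrac12\partial_k f(x))$, one obtains
\[
V_\ve(x) + 2d \;=\; \sum_{v\in\mc N} e^{-\frac12\nabla_\ve f(x,v)} \;\geq\; 2\,e^{-M\ve}\sum_{k=1}^d \cosh\!\bigl(\tfrac12\partial_k f(x)\bigr).
\]
The elementary bound $\cosh(t)\geq 1+t^2/2$ combined with the hypothesis $|\nabla f(x)|>a$ on $S$ then gives, for every $x\in S$,
\[
2\sum_{k=1}^d\cosh\!\bigl(\tfrac12\partial_k f(x)\bigr)\;\geq\;2d+\tfrac14|\nabla f(x)|^2\;\geq\;2d+\tfrac{a^2}{4},
\]
and therefore
\[
V_\ve(x) \;\geq\; \tfrac{a^2}{4}\,e^{-M\ve} \,-\, 2d\bigl(1-e^{-M\ve}\bigr), \qquad x\in S.
\]
Choosing $\ve_0\in(0,1]$ small enough that the right-hand side is larger than $a^2/8$ for every $\ve\in(0,\ve_0]$ yields the conclusion with $C=a^2/8$.

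The only subtle point, and the one the argument is designed around, is that the hypothesis on $S$ does \emph{not} force $|\nabla f|$ to be bounded, so a naive additive expansion $e^{-\frac12\nabla_\ve f(x,v)}=e^{-\frac12\nabla f(x)\cdot v}(1+O(\ve))$ would carry a remainder of size $e^{|\nabla f(x)|/2}O(\ve)$ that blows up on $S$. Keeping the Taylor error on the exponent and bounding it by the single factor $e^{-M\ve}$ bypasses this obstruction completely and uses only the $C^2$ regularity of $f$ with bounded Hessian, exactly matching the hypotheses of the lemma.
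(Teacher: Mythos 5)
Your proof is correct. Every step checks out: the Taylor formula with integral remainder gives $|\nabla_\ve f(x,v)-\nabla f(x)\cdot v|\le\tfrac{\ve}{2}\|\hess f\|_\infty$ uniformly in $x$ and $v\in\mc N$ since $|v|=1$; the exponential then inherits the uniform multiplicative factor $e^{-M\ve}$; the pairing $v\leftrightarrow -v$ turns the sum into $2\sum_k\cosh(\tfrac12\partial_k f)$; and $\cosh t\ge 1+t^2/2$ plus $|\nabla f|>a$ on $S$ close the argument.

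The paper reaches the same conclusion via a slightly longer route. It writes $e^{-\frac12\nabla_\ve f(x,v)}=e^{-\frac12\nabla f(x)\cdot v}\bigl(1+\ve R_\ve(x,v)\bigr)$ with $|R_\ve|\le R$ uniformly (this uses the boundedness of $\hess f$ exactly as you do), producing an \emph{additive} decomposition $V_\ve=\sum_v\bigl[\cosh\tfrac{\nabla f\cdot v}{2}-1\bigr]+\ve\sum_v e^{-\frac{\nabla f\cdot v}{2}}R_\ve$. The error term here does carry the potentially unbounded factor $e^{-\nabla f\cdot v/2}$ — the very obstruction you identified — but the paper then absorbs it by noticing that $\sum_v e^{-\nabla f\cdot v/2}=\sum_v\cosh\tfrac{\nabla f\cdot v}{2}$, which lets it rewrite the lower bound as $\sum_v\bigl[(1-\ve R)\bigl(\cosh\tfrac{\nabla f\cdot v}{2}-1\bigr)-\ve R\bigr]$ and proceed from there. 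Your version of the same idea is cleaner: by never leaving the exponent you obtain the uniform multiplicative constant $e^{-M\ve}$ directly, avoiding the algebraic recombination step entirely. Both arguments hinge on the same insight — the Taylor error must be compared multiplicatively to the leading term, not bounded additively — and both use only $f\in C^2$ with bounded Hessian, so they are equivalent in scope; yours is shorter and makes the key point explicit.
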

\begin{lemma}\label{localV*F}     
Assume $f\in C^3(\mb R^d)$. Let $z\in \mb R^d$ such that $\nabla f(z)=0$, $R>0$ and 
\[   U(x)    :=   \frac 14\lp \left[ \hess f (z)\right]^2  (x-z), (x-z)\rp   .    \] 
Then there exists a constant $C>0$ such that for all $x\in B_R(z)$ and $ \ve>0$
\begin{equation*}
|V_\ve(x)       -     U(x)   +   \frac{\ve}{2}\Delta f(z)  |   \leq   
 C  \left(  |x-z|^3  +    \ve \, |x-z|    +    \ve^2 \right).
 \end{equation*}
 \end{lemma}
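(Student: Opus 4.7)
\medskip

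\noindent\textbf{Proof proposal for Lemma~\ref{localV*F}.} The strategy is direct Taylor expansion: expand first the increments of $f$ entering $V_\ve$ and then the exponential, and finally use the symmetries of $\mc N$ to collect terms. Writing $g_\ve(x,v) = \frac{f(x+\ve v) - f(x)}{2\ve}$, the definition~\eqref{mainV} reads
\[
V_\ve(x) \;=\; \sum_{v\in \mc N}\bigl[\,e^{-g_\ve(x,v)}-1\,\bigr] \;=\; -\sum_{v\in \mc N} g_\ve(x,v) + \tfrac{1}{2}\sum_{v\in \mc N} g_\ve(x,v)^2 + R_\ve(x),
\]
where the pointwise remainder of the exponential satisfies $|R_\ve(x)| \lesssim \sum_{v\in \mc N} |g_\ve(x,v)|^3$. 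Since $\nabla f(z)=0$ and $f\in C^3$, one gets $|\nabla f(x)| \leq C|x-z|$ on $B_R(z)$ and hence $|g_\ve(x,v)| \leq C(|x-z|+\ve)$ uniformly in $v \in \mc N$ for $\ve$ small, so $R_\ve(x) = \mathcal{O}(|x-z|^3 + \ve|x-z| + \ve^2)$ on $B_R(z)$ after expanding the cube.

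\smallskip

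\noindent For the linear term I use $\sum_{v\in \mc N} g_\ve(x,v) = \tfrac{\ve}{2}\Delta_\ve f(x)$ directly from the definition of $\Delta_\ve$. Taylor expanding $f(x+\ve v)-f(x)$ to second order in $\ve$ (with a $C^3$ remainder) and exploiting $\sum_{v\in \mc N} v = 0$ together with $\sum_{v\in \mc N} (v\otimes v) = 2\mathrm{Id}$ gives $\Delta_\ve f(x) = \Delta f(x) + \mathcal{O}(\ve)$, uniformly on the compact set $B_R(z)$. Since $\Delta f \in C^1$ one then has $\Delta f(x) = \Delta f(z) + \mathcal{O}(|x-z|)$, so that $-\sum_{v\in \mc N} g_\ve(x,v) = -\tfrac{\ve}{2}\Delta f(z) + \mathcal{O}(\ve|x-z| + \ve^2)$, which matches the claimed correction term and its error size.

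\smallskip

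\noindent For the quadratic term I Taylor-expand each $g_\ve(x,v)$ around $x=z$ and $\ve=0$: setting $a(x,v) = \tfrac{1}{2}\hess f(z)(x-z)\cdot v$, one writes $g_\ve(x,v) = a(x,v) + b(x,v)$ with $|b(x,v)| \lesssim |x-z|^2 + \ve$ uniformly. Expanding $g_\ve^2 = a^2 + 2ab + b^2$ and using the Cauchy-Schwarz-type bounds $|ab| \lesssim |x-z|(|x-z|^2+\ve)$ and $|b|^2 \lesssim |x-z|^4 + \ve^2$, everything except $\sum_v a^2$ is absorbed into an $\mathcal{O}(|x-z|^3 + \ve|x-z| + \ve^2)$ error on $B_R(z)$. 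The leading piece is computed via $\sum_{v\in \mc N}(w\cdot v)^2 = 2|w|^2$ applied to $w = \hess f(z)(x-z)$, yielding $\tfrac{1}{2}\sum_{v\in \mc N} a(x,v)^2 = \tfrac{1}{4}|\hess f(z)(x-z)|^2 = U(x)$. Combining the three pieces gives the claim. No step presents a genuine obstacle; the only point that needs some care is keeping all error terms uniform on $B_R(z)$ (which works because $\hess f$ and the $C^3$-modulus of $f$ are bounded there) and making sure cross terms like $|x-z|^2 \ve$ and $|x-z|\ve^2$ are bounded by the three allowed error types via elementary inequalities.
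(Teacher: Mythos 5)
Your proof is correct and takes essentially the same approach as the paper: a direct Taylor expansion of the exponential, followed by an expansion of $f$ around the critical point $z$ and use of the symmetry identities $\sum_{v\in\mc N} v = 0$, $\sum_{v\in\mc N}(w\cdot v)^2 = 2|w|^2$ and $\sum_{v\in\mc N}\hess f\,v\cdot v = 2\Delta f$. The paper organizes the same computation via the $\cosh$/$\sinh$ symmetrization of the sum over $\mc N$ and the expansion $\cosh t = 1 + \tfrac12 t^2 + \mc O(t^4)$, whereas you expand $e^{-g_\ve}$ directly in powers of $g_\ve$ and split $g_\ve = a + b$; both bookkeeping schemes produce the same leading terms and the same $\mc O(|x-z|^3 + \ve|x-z| + \ve^2)$ error.
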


\noindent
After these preliminary estimates on $V_\ve$ we turn to the proof of  Theorem~\ref{main1F}. 
We first show that 
 the essential spectrum of 
$H_{\ve}$ is bounded from below by a constant, as claimed in Theorem~\ref{main1F} (i).  
\begin{proposition}[Localization of the essential spectrum]  \label{PropessF}
Under Assumption~\ref{H1F} 
 there exist constants $\ve_0, C>0$ such that
\begin{equation*}\label{LocEss2}            \spec_{\textrm{ess}} (H_{\ve})      \subset [C, \infty)     \    \   \          \forall  \ve \in (0,\ve_0]      .   
\end{equation*}
\end{proposition}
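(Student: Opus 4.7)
The plan is to exploit the uniform lower bound on $V_\ve$ outside a bounded set coming from Assumption~\ref{H1F}(i) and Lemma~\ref{lowerboundV*F}, localize the quadratic form of $H_\ve$ via the discrete IMS formula, and then apply Weyl's criterion. First, Assumption~\ref{H1F}(i) provides $a>0$ and $R_0>0$ such that $|\nabla f(x)|\geq a$ for every $x \in \mb R^d \setminus B_{R_0}$; enlarging $R_0$ if necessary so that all critical points of $f$ lie inside $B_{R_0}$, Lemma~\ref{lowerboundV*F} yields $\ve_1,C_1>0$ with
\[
V_\ve(x) \geq C_1 \qquad \forall x\in \mb R^d \setminus B_{R_0},\ \forall \ve\in(0,\ve_1].
\]

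Next I would fix a smooth quadratic partition of unity $\{\chi_0,\chi_1\}$ of $\mb R^d$ with $\supp\chi_0 \subset B_{2R_0}$ and $\supp\chi_1 \subset \mb R^d\setminus B_{R_0}$. Since $\sum_j \chi_j^2\equiv 1$, the potential splits exactly: $\sum_j \lp V_\ve \chi_j\psi,\chi_j\psi\rp = \lp V_\ve \psi,\psi\rp$. Combining this with Proposition~\ref{propdiscreteIMSF} applied to $\Delta_\ve$ (after Cauchy--Schwarz to pair the IMS remainder with $\psi$), there exists $C_2>0$ depending only on the partition such that for all $\ve>0$ and all $\psi$ in the form domain
\[
\lp H_\ve \psi,\psi\rp \ \geq\ \sum_{j=0}^1 \lp H_\ve(\chi_j\psi),\chi_j\psi\rp \,-\, C_2\,\ve^2\,\|\psi\|^2_{\ell^2(\ve\mb Z^d)}.
\]
On $\supp\chi_1$ we have $V_\ve\geq C_1$ and $-\ve^2\Delta_\ve\geq 0$, so $\lp H_\ve(\chi_1\psi),\chi_1\psi\rp \geq C_1\|\chi_1\psi\|^2$; on $\supp\chi_0$ the global pointwise bound $V_\ve\geq -2d$ gives $\lp H_\ve(\chi_0\psi),\chi_0\psi\rp \geq -2d\,\|\chi_0\psi\|^2$.

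To conclude, I would take $\lambda \in \spec_{\textrm{ess}}(H_\ve)$ and a singular Weyl sequence $\{\psi_n\}$ with $\|\psi_n\|=1$, $\psi_n \rightharpoonup 0$, and $\lp H_\ve \psi_n,\psi_n\rp \to \lambda$. The key observation is that $\supp\chi_0 \cap \ve\mb Z^d$ is a finite set (since $B_{2R_0}$ is bounded), so $\chi_0$ acts as a finite-rank multiplication operator on $\ell^2(\ve\mb Z^d)$. Hence $\psi_n \rightharpoonup 0$ forces $\|\chi_0\psi_n\|\to 0$ and, by the partition-of-unity identity, $\|\chi_1\psi_n\|^2 = 1 - \|\chi_0\psi_n\|^2 \to 1$. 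Passing to the limit in the IMS decomposition,
\[
\lambda \ \geq\ C_1\cdot 1 \,-\, 2d \cdot 0 \,-\, C_2\,\ve^2 \ =\ C_1 - C_2\,\ve^2,
\]
so choosing $\ve_0\leq \ve_1$ small enough that $C_2\ve_0^2 \leq C_1/2$ yields $\lambda \geq C_1/2$, which proves the claim with $C:=C_1/2$.

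I expect no serious obstacle: all the analytic work was already packaged into Proposition~\ref{propdiscreteIMSF} and Lemma~\ref{lowerboundV*F}. The only point requiring a moment of care is the passage from weak convergence in $\ell^2(\ve\mb Z^d)$ to strong convergence of $\chi_0\psi_n$, which is however immediate because on the discrete lattice any bounded region intersects $\ve\mb Z^d$ in finitely many points, so that multiplication by a compactly supported cutoff is a finite-rank operator and hence compact.
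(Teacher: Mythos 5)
Your proof is correct, but it takes a genuinely different and somewhat more laborious route than the paper's. The paper proceeds by adding a compactly supported bounded function $\chi = \alpha\,\mbf 1_K$ to $H_\ve$; since multiplication by $\chi$ is finite-rank on $\ell^2(\ve\mb Z^d)$, Weyl's theorem gives $\inf\spec_{\textrm{ess}}(H_\ve)=\inf\spec_{\textrm{ess}}(H_\ve+\chi)$, and then one simply bounds $\inf\spec(H_\ve+\chi)$ below by $\inf_{x,\,\ve\leq\ve_0}(V_\ve+\chi)(x)$ after dropping the nonnegative term $-\ve^2\Delta_\ve$. No singular sequences and no localization formula are needed. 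Your approach instead works directly with the Weyl criterion (an orthonormal singular sequence $\psi_n$), and argues that $\chi_0\psi_n\to 0$ strongly by the same finite-rank observation. Both arguments rest on the identical two ingredients — Lemma~\ref{lowerboundV*F} plus the uniform bound $V_\ve\geq -2d$, and the fact that compactly supported multipliers are finite rank on the lattice — so the real difference is bookkeeping. What your route buys is perhaps a more hands-on feel; what it costs is an extra layer of machinery.

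One concrete simplification to your argument: the invocation of the IMS formula (Proposition~\ref{propdiscreteIMSF}) is unnecessary here. The only operator you are localizing nontrivially is the multiplication operator $V_\ve$, and for a quadratic partition of unity the identity
$\lp V_\ve\psi,\psi\rp = \sum_j \lp V_\ve\,\chi_j\psi,\chi_j\psi\rp$
holds exactly, with no remainder. Since $-\ve^2\Delta_\ve\geq 0$, you can simply drop it:
\[
\lp H_\ve\psi,\psi\rp_{\ell^2(\ve\mb Z^d)} \;\geq\; \lp V_\ve\psi,\psi\rp_{\ell^2(\ve\mb Z^d)}
\;\geq\; C_1\|\chi_1\psi\|^2_{\ell^2(\ve\mb Z^d)} - 2d\,\|\chi_0\psi\|^2_{\ell^2(\ve\mb Z^d)},
\]
and the rest of your passage to the limit along the Weyl sequence goes through unchanged, with no $C_2\ve^2$ error at all. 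IMS becomes necessary only when one needs to compare $-\ve^2\Delta_\ve$ to \emph{different} local model operators on the different partition pieces (as is done later in the proof of Proposition~\ref{ProproughF}); for the essential-spectrum bound it is overkill.
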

\begin{remark}
The proof given below shows that the claim of Proposition~\ref{PropessF} still holds without assuming that the critical points of $f$ are nondegenerate. Also the regularity assumption on $f$
can be relaxed by assuming $f\in C^2(\mb R^d)$ instead of $f\in C^3(\mb R^d)$. 
\end{remark}
\begin{proof}
Let $\chi := \alpha \mbf 1_{K}$, where $\mbf 1_{K}$ is the indicator function
 of a bounded set $K\subset \mb R^d$ and $\alpha\in \mb R$.
Then $\chi$, seen as a multiplication operator in $\ell^2(\ve\mb Z^d)$, 
is  of finite rank 
(in particular compact) for every $\ve>0$. It follows from Weyl's
theorem that for fixed $\ve>0$,
\begin{equation}\label{step1harnmonicapprox}
 \inf \spec_{\textrm{ess}}
\big( H_{\ve}\big)  = 
\inf  \spec_{\textrm{ess}}
\big( H_{\ve} +  \chi  \big)  . 
\end{equation}
 Moreover  
\begin{gather*}\inf \ \spec_{\textrm{ess}}
\big( H_{\ve}  +  \chi  \big)  \geq 
\inf\spec 
\big( H_{\ve}  +  \chi  \big) 
=     \\  
 \inf_{\substack{\psi\in {\textrm Dom}(V_\ve) \\
\psi\neq 0}} 
\frac{\lp (H_{\ve}+\chi)\psi,
\psi\rp_{\ell^2(\ve\mb Z^d)}}{\lp\psi,\psi\rp_{\ell^2(\ve\mb Z^d)}} 
\geq   \inf_{\substack{\psi\in {\textrm Dom}(V_\ve) \\
\psi\neq 0}} 
\frac{\lp (V_{\ve}+\chi)\psi,
\psi\rp_{\ell^2(\ve\mb Z^d)}}{\lp\psi,\psi\rp_{\ell^2(\ve\mb Z^d)}}   .               
\end{gather*}
The claim follows by  chosing $\alpha$ and $K$ large enough so that for some constants
$\ve_0,C>0$ the inequality $V_\ve(x)+\chi(x)\geq C$
 holds for every
$x\in\mb R^d$ and $\ve\in (0,\ve_0]$.
To see that this choice is possible recall the uniform bound $V_\ve \geq -2d$
and note that 
by Assumption~\ref{H1F} (i) there exist $a>0, R>0$ such that $|\nabla f(x)| >a$ for $|x| > R$. 
It follows then by Lemma~\ref{lowerboundV*F} that for suitable $C, \ve_0>0$
it holds $V_\ve(x) \geq C$ for $|x|>R$ and $\ve\in (0, \ve_0)$.  
\end{proof}
\noindent
The next proposition provides the crucial estimate for the proof of statement (ii) in Theorem~\ref{main1F}.

\begin{proposition}\label{ProproughF}
Assume~\ref{H1F} and denote by $N_0\in \mb N_0$ the number of local minima of $f$. Then 
there exist constants $\ve_0, C>0$ and, for every $\ve>0$, functions $\Psi_{1,\ve},\dots, \Psi_{N_0, \ve}\in \ell^2(\ve\mb Z^d)$
such that   for every   $ \psi\in {\textrm Dom}(V_\ve) $  it holds 
\begin{equation}\label{boundmodulodimF}
\lp  H_{\ve} \psi, \psi    \rp_{\ell^2(\ve\mb Z^d)}      \geq   
 C  \ve \  \|\psi\|^2_{\ell^2(\ve\mb Z^d)}       -      \sum_{k=1}^{N_0}   \lp    \psi   , \Psi_{k,\ve}   \rp^2_{\ell^2(\ve\mb Z^d)}        \   \  \  
\forall  \ve\in (0,\ve_0] .
\end{equation}
\end{proposition}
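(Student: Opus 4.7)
The plan is to combine the IMS localization formula (Proposition~\ref{propdiscreteIMSF}) with the pointwise estimates on $V_\ve$ (Lemma~\ref{lowerboundV*F} and Lemma~\ref{localV*F}) and the Harmonic oscillator bounds (Proposition~\ref{propdiscreteharmonicoscillatorF}). Since by~\ref{H1F} the set of critical points of $f$ is finite, say $\{z_1,\dots,z_p\}\supset\{m_1,\dots,m_{N_0}\}$, I would fix a small radius $R>0$ (to be chosen later) and build a smooth quadratic partition of unity $\{\chi_0,\chi_{z_1},\dots,\chi_{z_p}\}$ with $\supp\chi_{z_i}\subset B_{2R}(z_i)$, $\chi_{z_i}\equiv 1$ on $B_R(z_i)$, and $|\nabla f|$ bounded from below on $\supp\chi_0$ (the latter is possible thanks to~\ref{H1F}(i)). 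Applying Proposition~\ref{propdiscreteIMSF} yields
\begin{equation*}
\lp H_\ve \psi,\psi\rp_{\ell^2(\ve\mb Z^d)} \;\geq\; \lp V_\ve \chi_0\psi,\chi_0\psi\rp + \sum_{i=1}^p \lp H_\ve(\chi_{z_i}\psi),\chi_{z_i}\psi\rp - K\ve^2\|\psi\|^2,
\end{equation*}
for some constant $K$ depending on the partition.

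For the piece $\chi_0\psi$, Lemma~\ref{lowerboundV*F} gives $V_\ve\geq C_0>0$ uniformly on $\supp\chi_0$ for small $\ve$, hence a contribution bounded below by $C_0\|\chi_0\psi\|^2$. For each piece localized at $z=z_i$, Lemma~\ref{localV*F} lets me replace $V_\ve$ on $B_{2R}(z)$ by $U_z(x)-\tfrac{\ve}{2}\Delta f(z)$ with $U_z(x)=\tfrac14\lp[\hess f(z)]^2(x-z),x-z\rp$, at the cost of an error of order $R^3+\ve R+\ve^2$ on the support of $\chi_z$. Writing $\mu_1(z),\dots,\mu_d(z)$ for the eigenvalues of $\hess f(z)$, the matrix $M=\tfrac14[\hess f(z)]^2$ appearing in Proposition~\ref{propdiscreteharmonicoscillatorF} has eigenvalues $\kappa_j=\mu_j(z)^2/4$, so that $\sqrt{\kappa_j}=|\mu_j(z)|/2$ and $\lambda_0=\tfrac12\sum_j|\mu_j(z)|$, $\lambda_1=\lambda_0+\min_j|\mu_j(z)|$. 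The key cancellation is
\begin{equation*}
\ve\lambda_0 - \tfrac{\ve}{2}\Delta f(z) \;=\; \tfrac{\ve}{2}\sum_j\bigl(|\mu_j(z)|-\mu_j(z)\bigr) \;=\; \ve\!\!\sum_{j:\mu_j(z)<0}\!\!|\mu_j(z)|,
\end{equation*}
which is strictly positive whenever $z$ is \emph{not} a local minimum, while at a local minimum it vanishes and one must use instead $\ve\lambda_1-\tfrac{\ve}{2}\Delta f(z)=\ve\min_j\mu_j(z)>0$.

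Concretely, I would apply Proposition~\ref{propdiscreteharmonicoscillatorF}(i) at every non-minimum critical point $z$ to get $\lp H_\ve(\chi_z\psi),\chi_z\psi\rp\geq c_z\ve\|\chi_z\psi\|^2$, and Proposition~\ref{propdiscreteharmonicoscillatorF}(ii) at every local minimum $m_k$ to get
\begin{equation*}
\lp H_\ve(\chi_{m_k}\psi),\chi_{m_k}\psi\rp \;\geq\; c_{m_k}\ve\|\chi_{m_k}\psi\|^2 \;-\; \lp\chi_{m_k}\psi,\Psi_{m_k,\ve}\rp^2,
\end{equation*}
for some $c_z,c_{m_k}>0$. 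Setting $\Psi_{k,\ve}:=\chi_{m_k}\Psi_{m_k,\ve}$ converts the projection onto $\chi_{m_k}\psi$ into one onto $\psi$. Summing over all pieces, using $\sum_\alpha\|\chi_\alpha\psi\|^2=\|\psi\|^2$, and choosing $R$ small enough to absorb the cubic error $R^3$ into the $\ve$-term, then $\ve_0$ small to absorb the $\ve R$, $\ve^2$, $\ve^{6/5}$ and IMS errors, yields~\eqref{boundmodulodimF} with a suitable $C>0$.

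The main obstacle is bookkeeping the various error terms so that they are all dominated by $c\ve\|\psi\|^2$: in particular, the cubic Taylor error in Lemma~\ref{localV*F} must be controlled by shrinking $R$ before $\ve\to 0$, while the IMS error $K\ve^2$ and the $O(\ve^{6/5})$ error in Proposition~\ref{propdiscreteharmonicoscillatorF} force the final constant in the lower bound to be of order $\ve$ and not better. The delicate point is that the lower bound from the ground-state energy of the discrete harmonic oscillator is \emph{not} strong enough at local minima, which is exactly why $N_0$ rank-one corrections must be subtracted.
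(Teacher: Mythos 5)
Your overall structure (IMS localization, the potential estimates from Lemma~\ref{lowerboundV*F} and Lemma~\ref{localV*F}, and the discrete harmonic oscillator bounds from Proposition~\ref{propdiscreteharmonicoscillatorF}) matches the paper's proof, and your algebraic cancellation $\ve\lambda_0-\tfrac{\ve}{2}\Delta f(z)=\ve\sum_{\mu_j(z)<0}|\mu_j(z)|$ at critical points is correct, as is the idea to take $\Psi_{k,\ve}=\chi_{m_k}\Psi_{m_k,\ve}$. However, there is a genuine gap in your treatment of the localization scale.

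You propose a partition of unity with a \emph{fixed} radius $R$, to be ``chosen small'' before sending $\ve\to 0$. This cannot work. After replacing $V_\ve$ by the local quadratic $U_z-\tfrac{\ve}{2}\Delta f(z)$, Lemma~\ref{localV*F} produces an error of size $\mc O(R^3+\ve R+\ve^2)$ on $\supp\chi_z$, while the good term coming from the harmonic oscillator has size $c_z\ve$. If $R$ is a constant, then $R^3$ is a constant, and a bound of the form $(c_z\ve-\mathrm{const}\cdot R^3)\|\chi_z\psi\|^2$ becomes negative for small $\ve$, no matter how small $R$ is fixed in advance: the required inequality $R^3\lesssim \ve$ forces $R=R(\ve)\to 0$. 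But then the IMS commutator error, which is of order $\|\hess\chi\|_\infty\,\ve^2\sim \ve^2/R^2$ by Proposition~\ref{propdiscreteIMSF}, must \emph{also} remain $o(\ve)$, which forces $R\gg\sqrt\ve$. These two competing constraints $\sqrt\ve\ll R\ll\ve^{1/3}$ are exactly why the paper takes an $\ve$-dependent partition $\chi_{j,\ve}(x)=\chi(\ve^{-s}(x-z_j))$ with $s=2/5$, yielding both errors of size $\mc O(\ve^{6/5})$. There is a further consistency point you should note: with $R\sim\ve^{2/5}$ the support of $\chi_{0,\ve}$ reaches within distance $\ve^{2/5}$ of each critical point, so on $\supp\chi_{0,\ve}$ one only has $V_\ve\gtrsim\ve^{4/5}$ from the quadratic lower bound (rather than a uniform constant as your $\chi_0$ argument suggests); this is still $\gg\ve$, so the argument closes, but the uniform bound from Lemma~\ref{lowerboundV*F} alone does not apply on this shrinking annulus.

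In short, the decisive missing idea is that the partition scale must be chosen as $\ve^s$ with $\tfrac13<s<\tfrac12$; without it, the cubic Taylor remainder is a fixed constant that dominates the $\mc O(\ve)$ gain and the proof collapses.
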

\noindent
Statement (ii) in Theorem~\ref{main1F} is then a simple consequence of the Max-Min principle (see e.g. \cite[Theorem 11.7]{Helffer2013}):
\begin{corollary} \label{cormain2}
Assume~\ref{H1F} and denote by $N_0\in \mb N_0$ the number of local minima of $f$. Then 
there exist constants $\ve_0, C>0$ such that
 \[   |  \spec_{\textrm{disc}} (H_{\ve})   \cap [0, C\ve]   |        \leq N_0         \   \  \   \forall \ve\in (0, \ve_0] .   \]
\end{corollary}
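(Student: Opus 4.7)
The plan is to deduce the corollary directly from Proposition~\ref{ProproughF} via the Min--Max (Courant--Fischer--Weyl) characterization of the discrete spectrum, combined with the essential-spectrum bound of Proposition~\ref{PropessF} to ensure that the relevant spectral window sits strictly below $\spec_{\mathrm{ess}}(H_\ve)$.

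First, I would fix $\ve_0\in(0,1)$ small enough so that both Proposition~\ref{PropessF} and Proposition~\ref{ProproughF} apply on $(0,\ve_0]$, with respective constants $C_{\mathrm{ess}}>0$ and $C_{\mathrm{rough}}>0$; shrinking $\ve_0$ further if necessary, one may moreover assume $C_{\mathrm{rough}}\ve_0<C_{\mathrm{ess}}$. For such $\ve$, the interval $[0,C_{\mathrm{rough}}\ve]$ is strictly separated from $\inf\spec_{\mathrm{ess}}(H_\ve)$, so every spectral value lying in it automatically belongs to $\spec_{\mathrm{disc}}(H_\ve)$.

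Next, I would invoke the Min--Max principle: for every $n\in\mb N$ the quantity
$$\mu_n(\ve):=\sup_{\substack{V\subset\ell^2(\ve\mb Z^d)\\ \dim V=n-1}}\inf_{\substack{\psi\in V^\perp\cap\mathrm{Dom}(V_\ve)\\ \psi\neq 0}}\frac{\lp H_\ve\psi,\psi\rp_{\ell^2(\ve\mb Z^d)}}{\|\psi\|^2_{\ell^2(\ve\mb Z^d)}}$$
either equals the $n$-th eigenvalue of $H_\ve$ (in increasing order, counted with multiplicity), or equals $\inf\spec_{\mathrm{ess}}(H_\ve)$ when fewer than $n$ eigenvalues exist below the essential spectrum. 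Choosing the test subspace $V_0:=\mathrm{span}(\Psi_{1,\ve},\ldots,\Psi_{N_0,\ve})$ (enlarged, if necessary, to an arbitrary subspace of dimension exactly $N_0$), every $\psi\perp V_0$ satisfies $\lp\psi,\Psi_{k,\ve}\rp=0$ for all $k$, and Proposition~\ref{ProproughF} yields $\lp H_\ve\psi,\psi\rp\geq C_{\mathrm{rough}}\ve\|\psi\|^2$. Hence $\mu_{N_0+1}(\ve)\geq C_{\mathrm{rough}}\ve$. Setting $C:=C_{\mathrm{rough}}/2$ in the statement, one obtains $\mu_{N_0+1}(\ve)>C\ve$, which means that strictly fewer than $N_0+1$ elements of $\spec_{\mathrm{disc}}(H_\ve)$ can lie in $[0,C\ve]$, i.e.\ $|\spec_{\mathrm{disc}}(H_\ve)\cap[0,C\ve]|\leq N_0$. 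I do not anticipate any substantive obstacle: the essential work has already been carried out in Proposition~\ref{ProproughF}, and the corollary reduces to this one-paragraph Min--Max argument with the quasimodes $\Psi_{k,\ve}$ furnishing the required $N_0$-dimensional test subspace.
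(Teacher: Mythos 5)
Your argument is correct and is essentially the same as the paper's: both combine the essential-spectrum bound of Proposition~\ref{PropessF} with the form lower bound of Proposition~\ref{ProproughF}, and then apply the Min--Max principle with $\operatorname{span}(\Psi_{1,\ve},\dots,\Psi_{N_0,\ve})$ as the $N_0$-dimensional test subspace. The only difference is purely stylistic — the paper phrases the Min--Max step as a proof by contradiction (supposing $\lambda_{N_0+1}(\ve_*)\le C\ve_*/2$ and deriving $\lambda_{N_0+1}(\ve_*)\ge C\ve_*$), whereas you argue directly that $\mu_{N_0+1}(\ve)\ge C_{\mathrm{rough}}\ve$ and then halve the constant, which is the same quantitative content.
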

\begin{proof}[Proof of Corollary~\ref{cormain2}]
By Proposition~\ref{PropessF} and Proposition~\ref{ProproughF} we can find 
 $\ve_0,C>0$ 
 such that 
\begin{equation}\label{LocEss}      
      \spec_{\textrm{ess}} (H_{\ve})      \subset [C\ve_0, \infty)     \    \   \          \forall  \ve \in (0,\ve_0]         
\end{equation}
and such that~\eqref{boundmodulodimF} holds.     
If for every $\ve\in (0, \ve_0)$ it happens that 
\[     |  \spec_{\textrm{disc}} (H_{\ve})   \cap [0,  \tfrac{C \ve}{2} ]   |        \leq N_0      ,  \]
the claim is proven. 
 Thus, we only have to check the case in which there exists $\ve_*\in (0,\ve_0)$ such that 
\begin{equation}\label{auxpropcorollminimax}    |  \spec_{\textrm{disc}} (H_{\ve_*})   \cap [0,  \tfrac{C \ve_*}{2} ]   |        > N_0      .  
\end{equation}
But this case is impossible.    
 Indeed~\eqref{auxpropcorollminimax}
implies that  there exist at least $N_0+1$ distinct eigenvalues of 
$H_{\ve_*}$ in $[0,\tfrac{C \ve_*}{2}]$ and thus in particular the $N_0+1$-th eigenvalue $\lambda_{N_0+1}(\ve_*)$(in increasing order and counting multiplicity) exists and satisfies 
\begin{equation}\label{6783one} 
\lambda_{N_0+1}(\ve_*) \leq \tfrac{C \ve_*}{2} .   
\end{equation}
In particular $\lambda_{N_0+1}(\ve_*) \leq \tfrac{C \ve_0}{2} $ and therefore, by~\eqref{LocEss}, $\lambda_{N_0+1}(\ve_*)$ is smaller than the bottom of the essential spectrum. 
 From this, the Max-Min principle and~\eqref{boundmodulodimF} 
it follows that
\begin{equation}\label{6783two}          \lambda_{N_0+1}(\ve_*)     \geq   \inf_{\psi}   \lp  H_{\ve_*} \psi, \psi    \rp_{\ell^2(\ve\mb Z^d)}     \geq     C\ve_*    ,    \end{equation}
where the infimum is taken over all normalized $\psi \in \mc V_\ve^{\perp} \cap {\textrm Dom(V_\ve)}$, 
with $\mc V_\ve$  being the linear span of the set $\{\Psi_{1,\ve},\dots, \Psi_{N_0, \ve}\}\subset \ell^2(\ve\mb Z^d)$ appearing in~\eqref{boundmodulodimF}.  
But~\eqref{6783one} and~\eqref{6783two} are in contradiction.
\end{proof}

\begin{proof}[Proof of~Proposition~\ref{ProproughF}]
We label by $z_1,\dots, z_N$ the critical points of $f$, the ordering being chosen such that
$z_1, \dots, z_{N_0}$ are the local minima. 
Then we take a function $\chi\in C^\infty(\mb R^d;[0,1])$ which equals $1$ on $\{x: |x| \leq 1\}$ and vanishes on $\{x: |x| \geq 2\}$. We shall consider a smooth quadratic partition of unity by defining with $s=\tfrac 25$
\[     \chi_{j,\ve} (x)       :=       \chi \left(\ve^{-s}   (x-z_j )\right)    ,   \   \   \  
 \chi_{0, \ve} (x)       :=       \left( 1 -   \sum_j  \chi^2_{j,\ve} (x) \right)^{\frac 12}     
           \]
for $j=1, \dots, N$ and $\ve\in (0, \overline\ve]$, where $\overline \ve \in (0,1]$ is sufficiently small so that $ \chi_{0, \ve} \in C^\infty(\mb R^d) $. 
We set moreover for $x\in \mb R^d$ and $j=1, \dots, N$
\begin{gather*} 
  U_{j} (x)      :=         \tfrac 14 \lp \left[ \hess f (z_j)\right]^2  (x-z_j), (x-z_j)\rp          .
\end{gather*}
Let $\psi\in {\textrm Dom}(V_\ve)$. It follows from $\sum_{j=0}^N   \chi^2_{ j,\ve}  \equiv 1$ that we can write
\begin{gather}
\lp   H_\ve \psi, \psi    \rp_{\ell^2(\ve\mb Z^d)}      =               
   \sum_{j=1}^N \lp   \left( -\ve^2 \Delta_\ve + U_{j}    -      \tfrac{\ve}{2} \Delta f (z_j)     \right)   \chi_{j, \ve }\psi,   \chi_{j,\ve } \psi    \rp_{\ell^2(\ve\mb Z^d)} 
   +   \nonumber \\           
     \lp   \left( -\ve^2 \Delta_\ve + V_\ve \right)  \chi_{0,\ve}\psi,    \chi_{ 0,\ve}   \psi    \rp_{\ell^2(\ve\mb Z^d)}   
      +             \mc E_1       +               \mc E_2     ,     \label{sumImsF}
\end{gather}
with the localization errors given by 
\begin{equation*}
 \mc E_1         =       \mc E_1(\ve)     :=     \sum_{j=1}^N   \lp   \left( V_\ve  - U_{j} +  \tfrac{\ve}{2} \Delta f (z_j)       \right)  \chi_{j,\ve}\psi,    \chi_{j,\ve}   \psi    \rp_{\ell^2(\ve\mb Z^d)}       ,
  \end{equation*}
  \begin{equation*}  
  \mc E_2         =          \mc E_2(\ve)   :=     
   -\ve^2 \sum_{j=0}^N \lp   \left(  \chi_{j,\ve }\Delta_\ve - \Delta_\ve \chi_{j,\ve } \right)  \psi,  \chi_{j,\ve }   \psi    \rp_{\ell^2(\ve\mb Z^d)}  . 
  \end{equation*}

The four terms in the right hand side of~\eqref{sumImsF} are now analyzed separately. 
 
 \
 
 {\it 1) Analysis of the first term in the right hand side of~\eqref{sumImsF}. }
 
 \, 

\noindent
We apply Proposition~\ref{propdiscreteharmonicoscillatorF}: let $\kappa_1(z_j)\dots, \kappa_d(z_j)$ be the eigenvalues of $\tfrac 12\hess f(z_j)$, so that in particular $\tfrac 12 \Delta f(z_j)= \sum_{i}\kappa_i (z_j)$
and $\kappa^2_1(z_j)\dots, \kappa^2_d(z_j)$ are the eigenvalues of $\tfrac 14\left[\hess f(z_j) \right]^2$. 

\

\   \   \  \   \     {\it Case 1: $j=1, \dots, N_0$ (i.e. $z_j$ is a local minimum of $f$)} 

\

\noindent
In this case $\sum_i  \left( |\kappa_i(z_j)|   -   \kappa_i(z_j)\right) =0$ and according to Prop.~\ref{propdiscreteharmonicoscillatorF} (ii)
there exist  for every $\ve>0, j=1,\dots, N_0$ a function $\Phi_{j,\ve}\in \ell^2(\ve \mb Z^d)$ and constants $\ve'_{0}, C'>0$ such that
for every $\ve\in (0, \ve'_{0}]$, $j=1, \dots, N_0$ and
$\psi\in C_c(\ve\mb Z^d)$ 
\begin{gather}
\lp \left(-\ve^2\Delta_\ve      +  U_j  - \tfrac{\ve}{2} \Delta f(z_j)     \right) \chi_{j, \ve}\psi, \chi_{j, \ve} \psi   \rp_{\ell^2(\ve\mb Z^d)  }    \geq      \nonumber \\       
C'   \ve     \| \chi_{j, \ve }\psi\|^2_{\ell^2(\ve\mb Z^d) }          -      \lp   \chi_{j, \ve }\psi , \Phi_{j,\ve}   \rp^2_{\ell^2(\ve\mb Z^d)  }   .      \label{minimumcaseF}    
\end{gather}

\

\   \   \  \   \     {\it Case 2: $j=N_0+1, \dots, N$ (i.e. $z_j$ is not a local minimum of $f$)} 

\

\noindent
In this case $\sum_i  \left( |\kappa_i(z_j)|   -   \kappa_i(z_j)\right) > 0$ and according to Prop.~\ref{propdiscreteharmonicoscillatorF} (i), possibly taking 
the constants $\ve'_{0}, C'>0$ smaller, the following holds:
for every $\ve\in (0, \ve'_{0}]$, $j=N_0+1, \dots, N$ and
$\psi\in C_c(\ve\mb Z^d)$ 
\begin{equation} \label{nonminimumcaseF}
\lp \left(-\ve^2\Delta_\ve      +  U_j  - \tfrac{\ve}{2} \Delta f(z_j)     \right) \chi_{j, \ve}\psi, 
\chi_{j, \ve} \psi   \rp_{\ell^2(\ve\mb Z^d)  }      \geq   
C'   \ve  \   \| \chi_{j, \ve }\psi\|^2_{\ell^2(\ve\mb Z^d) }          .    
\end{equation}

\

{      \it 2) Analysis of the second term in the right hand side of~\eqref{sumImsF}.}

\, 

\noindent
According to Lemma~\eqref{localV*F} there exist constants $r, C''>0$ and $\ve''_0\in (0,\overline\ve]$ such that
for every $j=1, \dots, N$
\begin{equation}\label{pescaF}
V_\ve (x)      \geq        C''  |x-z|^2    -    \tfrac {\ve}{ C''}      \  \ \   \ \    \forall x\in   B_r(z_j)    \text{ and }   \forall\ve\in (0,\ve''_0]     .  
\end{equation}
Moreover, according to Lemma~\eqref{lowerboundV*F}, possibly taking the constants $\ve_0'', C''>0$ smaller, it holds also
\begin{equation}\label{pesca2F}
V_\ve (x)     \   \geq   \    C''     \  \   \     \     \        \forall  x\in \mb R^d\setminus \bigcup_{j=1}^N B_r(z_j) 
  \text{ and }      \forall\ve\in (0,\ve''_0]         . 
\end{equation}
 Since      
$\supp \chi_{0,\ve} \subset \{  x\in \mb R^d:  |x-z_j| \geq 2 \ve^{\tfrac 25}  \text{ for all } j=1, \dots, N   \}$,
the lower bounds~\eqref{pescaF},~\eqref{pesca2F} imply (with possibly reducing further the constant
$\ve_0''>0$)  
 \begin{equation*} 
\lp  V_\ve    \chi_{0,\ve} \psi,   \chi_{0,\ve}   \psi   \rp_{\ell^2(\ve \mb Z^d)  }        \geq  
 C''\ve^{\frac 45}  \|\chi_{0,\ve}\psi\|^2_{\ell^2(\ve \mb Z^d)  }        \    \    \   \  \   \forall \ve\in (0,\ve_0'']
                .         \end{equation*}
Using $-\Delta_\ve \geq 0$  we conclude that  
 \begin{equation}   \label{pesca3F}
  \lp   \left( -\ve^2 \Delta_\ve + V_\ve \right)  \chi_{0,\ve}\psi,    \chi_{ 0,\ve}   \psi    \rp_{\ell^2(\ve\mb Z^d)} 
     \geq    
 C''\ve        \|\chi_{0,\ve}\psi\|^2_{\ell^2(\ve \mb Z^d)  }         \  \    \  \  \  \forall \ve\in (0,\ve''_0]
                 .         \end{equation}

\

{    \it 3) Analysis of the localization error $\mc E_1$.}

\,

\noindent
Let $R_{j,\ve}(x) :=   V_\ve(x) - U_j(x) +\tfrac \ve 2 \Delta f(z_j)$. By Lemma~\ref{localV*F} 
there exist constants $C''', \ve_0'''>0$ such that  
\[    \sup_{x: |x-z_j| \leq 2\ve^{\frac 25} } |R_{j,\ve}(x)|    \leq         C'''   \ve^{\frac 65}  \ \  
\   \  \      \forall \ve\in (0, \ve_0''']    \text{ and }     \forall j=1, \dots, N      .     \]
Thus, since     
$\supp \chi_{j,\ve} \subset \{  x\in \mb R^d:  |x-z_j| \leq 2 \ve^{\tfrac 25}\}$
  for all $j=1, \dots, N$,
\begin{gather*}     
|\mc E_1 (\ve)|       =        \left| \sum_{j=1}^N \lp R_{j,\ve} \chi_{j,\ve}\psi,  \chi_{j,\ve}   \psi   \rp_{\ell^2(\ve \mb Z^d)  }   \right|     \leq    
     C'''    \, \ve^{\frac 65}  \,    \sum_{j=1}^N   \| \chi_{j,\ve} 
   \psi   \|^2_{\ell^2(\ve \mb Z^d)  }      \  \   \  \   \     \forall    \ve\in (0,\ve'''_0]  ,  
 \end{gather*}
 and we conclude that
 \begin{equation}  \label{2localizationerror1FF}
    \mc E_1(\ve)       \geq       -   C''' \ve^{\frac 65}  \,   
 \|  \psi   \|^2_{\ell^2(\ve \mb Z^d)  }  \   \  \  \   \    \forall \ve \in (0,\ve_0''']         .     
 \end{equation}

\

{    \it 4)   Analysis of the localization error $\mc E_2$.}

\,

\noindent
Using $\sum_{j=0}^N \chi_{j, \ve}^2 \equiv 1$ gives 
 \begin{equation*}  
    \mc E_2(\ve)    =   
   -\ve^2   \lp   \left(  \Delta_\ve -  \sum_{j=0}^N  \chi_{j,\ve }\Delta_\ve \chi_{j,\ve } \right)  \psi,   \psi    \rp_{\ell^2(\ve\mb Z^d)}     . 
  \end{equation*}
Since there is a constant $K>0$ such that $\sup_{x\in \mb R^d} |\hess \chi_{j,\ve}(x)| \leq K \ve^{-2s}$
for every $\ve\in (0, \overline \ve]$ and $j=0, \dots, N$, 
it follows from Lemma~\ref{propdiscreteIMSF} that there exists a constant $C''''>0$  such that  
 \begin{gather*}  
           | \mc E_2 (\ve)|   
         \leq       C''''\, \ve^{2-2s}  \,   \| \psi    \|_{\ell^2(\ve\mb Z^d)  }  
        =      C''''\, \ve^{\frac 65}  \,   \| \psi    \|_{\ell^2(\ve\mb Z^d)  }      \  \   \  \  \  \forall \ve \in (0, \overline\ve]    .
  \end{gather*} 
  In particular we shall use that 
  \begin{equation}  \label{appdiscreteIMSF}
           \mc E_2 (\ve)  
         \geq      -
         C''''\, \ve^{\frac 65}  \,   \| \psi    \|_{\ell^2(\ve\mb Z^d)  }      \   \   \  \  \   \forall \ve \in (0, \overline\ve]    .
  \end{equation}

\

{    \it Final step.}

\,

\noindent
Taking $\Psi_{j,\ve}:= \chi_{j,\ve} \Phi_{j, \ve}$, $\tilde\ve_0 :=\min \{\ve_0', \ve_0'', \ve_0'''\}$
and $\tilde C:=\min\{C', C''\}$ gives, according to~\eqref{sumImsF},~\eqref{minimumcaseF},~\eqref{nonminimumcaseF},~\eqref{pesca3F},~\eqref{2localizationerror1FF},~\eqref{appdiscreteIMSF} 
the lower bound
\begin{gather*}\lp   H_\ve \psi, \psi    \rp_{\ell^2(\ve\mb Z^d)}      \geq    \\
\left( \tilde C   \ve     -     (C'''' + C'''') \, \ve^{\frac 65}  \right) \   \| \psi\|^2_{\ell^2(\ve\mb Z^d) }     
      -      \sum_{j=1}^{N_0} \lp  \psi , \Psi_{j,\ve}   \rp^2_{\ell^2(\ve\mb Z^d)  }     \   \ \  \  \  \forall \ve\in (0, \tilde\ve_0]       ,    
      \end{gather*}
 which implies the desired estimate~\eqref{boundmodulodimF}, by taking $C =\tilde C/2$ and 
 a sufficiently small $\ve_0\in (0,\tilde \ve_0)$ . 
   \end{proof}
\noindent
It remains to show part (iii) of Theorem~\ref{main1F} to complete the proof. 
In order to do so, we can assume $N_0\neq 0$, since otherwise there is nothing to prove. 
By the Max-Min principle~\cite[Theorem 11.7 and Proposition 11.9]{Helffer2013}
together with the bound on the essential spectrum given by Proposition~\eqref{PropessF}
it is sufficient to show that for each $\ve>0$ there exist $N_0$ orthonormal
functions in the domain ${\textrm Dom}(V_\ve)$  of $H_\ve$  
such that the quadratic form associated with $H_\ve$ is exponentially small for each of these functions. We shall now exhibit such a family of orthonormal functions.

\

\noindent
Let $\{z_1,\dots, z_{N_0}\}$ be the set of local minima of $f$. We fix $\delta>0$ such that $B_{3\delta}(z_k)\cap B_{3\delta}(z_j)$ is the empty set 
 for $k\neq j$ and such that  $f> f(z_k)$ on $B_{3\delta}(z_k)\setminus\{z_k\}$. 
 Moreover we fix for each $k=1, \dots , N_0$ a cutoff function $\chi_k\in C^\infty(\mb R^d;[0,1])$, 
satisfying  $\chi\equiv 1$ on $B_\delta(z_k)$
$\chi \equiv  0$ on  $\mb R^d\setminus B_{2\delta}(z_k)$.
We consider then for each $\ve>0$ and for each 
$k=1, \dots, N_0$ the functions $\psi_{k,\ve}:\mb R^d \to \mb R$ given by 
\begin{equation}\label{quasimodesforroughF}
\psi_{k,\ve}   (x)    =       \frac{\chi_k( x)   e^{- f(x)/(2\ve)}  }{\|\chi_k  
 e^{- f/(2\ve)}\|_{\ell^2(\ve\mb Z^d)} }   . 
\end{equation}
Then for each $\ve>0$ the (restrictions to $\ve\mb Z^d$ of the)  functions 
$\psi_{1,\ve}, \dots, \psi_{N_0, \ve}$ are in the domain of $H_\ve$ and orthonormal 
in $\ell^2(\ve\mb Z^d)$. Moreover the following proposition shows that
the quadratic form associated with $H_\ve$ is exponentially small for each of these functions
and thus concludes the proof of Theorem~\ref{main1F}.
 
\begin{proposition}Assume~\ref{H1F} and that the set $\{z_1,\dots, z_{N_0}\}$ of local minima of $f$ is not empty. Then there exist $C, \ve_0>0$ such that for each $\ve\in (0,\ve_0]$ the functions $\psi_{1,\ve}, \dots, \psi_{N_0, \ve}$ defined in~\eqref{quasimodesforroughF} satisfy 
the estimate
\begin{equation*}
     \lp H_\ve \psi_{k,\ve}, \psi_{k, \ve}\rp_{\ell^2(\ve\mb Z^d)}    \leq   e^{- C/\ve }    . 
\end{equation*}
\end{proposition}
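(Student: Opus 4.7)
The plan is to exploit the quadratic form identity~\eqref{Witten*}, which reduces the problem to bounding $\|\nabla_{f,\ve}\psi_{k,\ve}\|^2_{\ell^2(\ve\mb Z^d;\mb R^{\mc N})}$. The central observation is that $e^{-f/(2\ve)}$ lies in the (formal) kernel of the distorted gradient $\nabla_{f,\ve}$: indeed, by definition
\[
\nabla_{f,\ve}(e^{-f/(2\ve)})(x,v) \;=\; \ve\, e^{-(f(x)+f(x+\ve v))/(4\ve)} \,\nabla_\ve(1) \;=\; 0 .
\]
Applying the definition of $\nabla_{f,\ve}$ to $\psi_{k,\ve}$ and using $e^{f/(2\ve)}\psi_{k,\ve} = c_{k,\ve}\chi_k$ therefore gives
\[
\nabla_{f,\ve}\psi_{k,\ve}(x,v) \;=\; c_{k,\ve}\, e^{-(f(x)+f(x+\ve v))/(4\ve)}\,\bigl[\chi_k(x+\ve v) - \chi_k(x)\bigr],
\]
with $c_{k,\ve} := 1/\|\chi_k e^{-f/(2\ve)}\|_{\ell^2(\ve\mb Z^d)}$.

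The key point is then that the bracket localizes the $\ell^2$-sum to a region where $f$ is strictly above $f(z_k)$. Since $\chi_k \equiv 1$ on $B_\delta(z_k)$ and $\chi_k \equiv 0$ off $B_{2\delta}(z_k)$, for all $\ve$ small enough the factor $[\chi_k(x+\ve v)-\chi_k(x)]$ vanishes unless both $x$ and $x+\ve v$ lie in the compact annular set $A := \overline{B_{3\delta}(z_k)}\setminus B_{\delta/2}(z_k)$. By the choice of $\delta$, $f$ strictly exceeds $f(z_k)$ on $A$, so compactness provides $\eta_0>0$ with $f \geq f(z_k)+2\eta_0$ on $A$. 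Combining this with the trivial bound $|\chi_k(x+\ve v)-\chi_k(x)|\leq 1$ and $|A\cap \ve\mb Z^d| = O(\ve^{-d})$, summation over $x\in \ve\mb Z^d$ and $v\in \mc N$ yields
\[
\|\nabla_{f,\ve}\psi_{k,\ve}\|^2_{\ell^2(\ve\mb Z^d;\mb R^{\mc N})} \;\leq\; C\, c_{k,\ve}^2\, e^{-f(z_k)/\ve}\, e^{-2\eta_0/\ve}.
\]

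For the normalizing denominator I invoke the discrete Laplace asymptotics of Proposition~\ref{GenPropositionLaplaceF}, applied with $m=0$ and $\vp := f-f(z_k)$ on $B_\delta(z_k)$; the hypotheses~\eqref{laplaceassumptionvpF} hold because $z_k$ is a nondegenerate local minimum of $f\in C^3(\mb R^d)$ by~\ref{H1F}. Since $\chi_k\equiv 1$ on $B_\delta(z_k)$, this gives the lower bound
\[
\|\chi_k e^{-f/(2\ve)}\|^2_{\ell^2(\ve\mb Z^d)} \;\geq\; \ve^d \sum_{x\in B^\ve_\delta(z_k)} e^{-f(x)/\ve} \;\geq\; c'\, \ve^{d/2}\, e^{-f(z_k)/\ve}
\]
for some $c'>0$ and all $\ve$ small enough. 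Inserting this into the previous display yields
\[
\lp H_\ve\psi_{k,\ve}, \psi_{k,\ve}\rp_{\ell^2(\ve\mb Z^d)} \;\leq\; \frac{C}{c'}\, \ve^{-d/2}\, e^{-2\eta_0/\ve},
\]
which, after absorbing the polynomial prefactor into the exponential for small $\ve$, gives the desired bound with any constant $C < \eta_0$ and a suitably chosen $\ve_0>0$.

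There is no substantial analytical obstacle: the vanishing of $\nabla_{f,\ve}$ on $e^{-f/(2\ve)}$ reduces everything to a geometric localization on $\supp \nabla \chi_k$, and the buffer provided by the nested balls $B_\delta \subset B_{2\delta} \subset B_{3\delta}$ ensures this support is uniformly separated from $z_k$. The only points requiring a little care are the precise identification of the discrete support of $\chi_k(x+\ve v)-\chi_k(x)$ for small $\ve$, and the clean extraction of the $\ve^{d/2}$ factor in the denominator via Proposition~\ref{GenPropositionLaplaceF}; both are immediate given the tools of Section~\ref{SectionTools}.
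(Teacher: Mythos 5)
Your proof is correct and takes essentially the same route as the paper: it invokes the Witten identity~\eqref{Witten*} so that the distorted gradient annihilates the factor $e^{-f/(2\ve)}$, leaving only the discrete gradient of $\chi_k$, localizes the sum to the annular region where $\nabla_\ve\chi_k\neq 0$ and where $f$ exceeds $f(z_k)$ by a fixed positive margin, and uses the discrete Laplace asymptotics of Proposition~\ref{GenPropositionLaplaceF} to lower-bound the normalizing denominator by a constant times $\ve^{d/2}e^{-f(z_k)/\ve}$.
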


\begin{proof} Fix $k=1, \dots, N_0$. Then, applying Proposition~\ref{GenPropositionLaplaceF}
with $\vp=f- f(z_k)$, $k=3$ and $m=0$, gives for a suitable constant $K>0$ and for every 
$\ve\in (0,1]$
\begin{equation} \label{minimaxnormF}
\|\chi_k  
 e^{- f/(2\ve)}\|^2_{\ell^2(\ve\mb Z^d)}   \geq   \ve^d  e^{-f(z_k)/\ve} \sum_{x\in B_\delta^\ve(z_k)} 
  e^{- (f(x)-f(z_k)/(2\ve)}  \geq    K \ve^{\frac d2} e^{-f(z_k)/\ve} . 
\end{equation}
Further, using~\eqref{Witten*} and the notation $F_\ve(x,v) =
\frac 12 [ f(x) + f(x+\ve v) ] $, 
\begin{equation}\label{minmaxWittF}
     \lp H_\ve (\chi_k  e^{- f/(2\ve)}) , \chi_k   e^{- f/(2\ve) } \rp_{\ell^2(\ve\mb Z^d)} = 
     \ve^2 \|e^{-F_\ve/(2\ve)}\nabla_\ve \chi_{k} \|^2_{\ell^2(\ve \mb Z^d; \mb R^\mc N)}    . 
\end{equation}
We take $\ve'_0\in (0,1]$ small enough such that
for all $\ve\in (0, \ve_0]$ it holds
\begin{equation*}
 \nabla_\ve \chi_{k} (x, v) = 0        \   \  \  \    \    \forall x\in  B_{\delta/2}(z_k)  
 \text{ and }  \forall v\in \mc N. 
\end{equation*}
Moreover we take $\gamma>0$ small enough such that 
for all $\ve\in (0, \ve_0]$ and for all $k=1, \dots, N_0$ it holds
\begin{equation*}
F_\ve (x, v)   - f(z_k)  \geq   \gamma         \   \  \  \    \ 
   \forall x\in  \Omega_\ve :=  B_{3\delta}(z_k) \setminus B_{\delta/2} \text{ and }
   \forall  v\in \mc N. 
\end{equation*}
It follows then from~\eqref{minmaxWittF}, the uniform bound $\ve^2|\nabla_\ve \chi_k|\leq 2$ 
and the existence of a $\tilde K>0$ with $\ve^d |\Omega_\ve| \leq \tilde K$ that
\begin{equation}\label{minmaxWittF2}
     \lp H_\ve (\chi_k  e^{- f/(2\ve)}) , \chi_k   e^{- f/(2\ve) } \rp_{\ell^2(\ve\mb Z^d)} \leq 
     2^{d+1}  \tilde K  e^{-[f(z_k)+\gamma]/\ve}  .
     \end{equation}
Putting together~\eqref{minimaxnormF} and~\eqref{minmaxWittF2}
gives the claim with e.g. $C=\gamma/2$ and $\ve_0\in( 0, \ve_0']$ sufficiently small. 
\end{proof}

\noindent
In the remainder of this section we provide the proofs of the basic estimates on $V_\ve$ given in Lemma~\ref{lowerboundV*F} and Lemma~\eqref{localV*F}.

\begin{proof}[Proof of Lemma~\ref{lowerboundV*F}]
A Taylor expansion gives for every $x\in \mb R^d$ the representation
\begin{equation}\label{TaylorUepsF}   V_\ve(x)         =          2 \sum_{v\in \mc N}     \sinh^2    \tfrac{\nabla f(x) \cdot v}{4}      +         \ve           \sum_{v\in \mc N}    e^{-\tfrac {\nabla f(x) \cdot v}{2}}  R_\ve(x,v)          ,    
\end{equation}
where, thanks to the boundedness of $\hess f$,
\[      \exists R>0   \   \   \text{s.t.}      \   \     |R_\ve(x,v)|    \leq R \        \  \forall   x \in \mb R^d,  v\in \mc N   \text{ and } \ve\in (0,1]          .   \]
In fact, one may write				
\begin{gather*}
V_\ve(x)        =   
  \sum_{v\in \mc N}   \left [ e^{-\tfrac {\nabla f(x) \cdot v}{2}}        -1 \right]        +         \ve       \sum_{v\in \mc N}   
     e^{-\tfrac {\nabla f(x) \cdot v}{2}}     \   
  \tfrac{1}{\ve} \left[   e^{-  \tfrac{f(x+\ve v) - f(x) -  \ve \nabla f(x) \cdot v }{2\ve}}   -       1          \right] ,     
\end{gather*}
and, using $\cosh 2t - 1   =     2\sinh^2 t$, 
\begin{gather*}
   \sum_{v\in \mc N}   \left [ e^{-\tfrac {\nabla f(x) \cdot v}{2}}        -1 \right]      =  
  \sum_{v\in \mc N}   \left [ \tfrac 12 e^{\tfrac {\nabla f(x) \cdot v}{2}}     +     
  \tfrac 12 e^{-\tfrac {\nabla f(x) \cdot v}{2}}           -         1 \right]        =  \\  
    \sum_{v\in \mc N}   \left [\cosh    \tfrac{\nabla f(x) \cdot v}{2}    -1 \right]    =   
    2 \sum_{v\in \mc N}   \sinh^2    \tfrac{\nabla f(x) \cdot v}{4}         .     
\end{gather*}
Moreover, for
\begin{gather*}
R_\ve(x,v)       :=       \tfrac{1}{\ve} \left[   e^{-  \tfrac{f(x+\ve v) - f(x) -  \ve \nabla f(x) \cdot v }{2\ve}}   -       1          \right], 
\end{gather*}
using $|e^{t}-1| \leq |t|e^{|t|}$ with 
\[   t      :=         -  \tfrac{f(x+\ve v) - f(x) -  \ve \nabla f(x) \cdot v }{2\ve}        ,     \]
and noting that,  due to the boundedness of $\hess f$, there exists a constant $A>0$ such that       
\[   |t|      \leq               \tfrac \ve4   \sup_x | \hess f(x) v\cdot v |   
 \leq   \     \tfrac \ve4   A |v|^2                     ,     \]
 one gets for $\ve  \in (0,1]$ and every $x\in \mb R^d$
 \begin{equation}\label{hessrF}  |R_\ve(x,v)|    \   \leq    \      \tfrac { A |v|^2}{4}   e^{\tfrac { A|v|^2  }{4}    }     \     \leq     \   
   \tfrac { \max_{v\in \mc N}A|v|^2  }{4}        e^{\tfrac { \max_{v\in \mc N}A|v|^2  }{4}    }     =:      R     >        0         .         
   \end{equation}
It follows from~\eqref{TaylorUepsF} and~\eqref{hessrF} that for $\ve\in (0,1]$ and every $x\in \mb R^d$
\begin{gather*}  V_\ve(x)          \geq          \sum_{v\in \mc N}   \left [\cosh    \tfrac{\nabla f(x) \cdot v}{2}    -1 \right]        -       \ve  R        \sum_{v\in \mc N}    e^{-\tfrac {\nabla f(x) \cdot v}{2}}     \
=    \\
=        \sum_{v\in \mc N}   \left [   (1 - \ve R )  \,   ( \cosh    \tfrac{\nabla f(x) \cdot v}{2}    -1)
   -       \ve R \right]        .
\end{gather*}
Using that    $\cosh t -1  \geq    t^2$  with $t=  \tfrac{\nabla f(x) \cdot v}{2}$ and
$ \sum_{v\in \mc N}    |\nabla f(x) \cdot v|^2         =     2  |\nabla f(x)|^2  $ we get  
for $\ve\in (0, \min\{1,\tfrac 1 R\})$ and every $x\in \mb R^d$ the lower bound
\begin{gather*}\label{estimateU1}  V_\ve(x)        \geq          \left[     \frac{(1-   \ve R)    
 }{2}   s |\nabla f(x)|^2     -            \ve R    \right]        .
\end{gather*}  
In particular
\[      V_\ve(x)       \geq              \frac{a^2}{2}       -         \ve   \,
 \left( \frac{ R a^2}{2}   +   R \right)  \   \  \     \   \  \    
      \forall x\in S   \   \text{ and  }   \   \forall \ve \in (0, \min\{1,\tfrac 1 R\})        .    \]
 The claim follows by chosing an $\ve_0 \in (0,  \min\{1, \tfrac 1R,   \frac{a^2}{Ra^2 + 2R} \})$ and $C =    \frac{a^2}{2}     \  -    \     \ve_0   \,
 \left( \frac{ R a^2}{2}   +   R \right)    $.
\end{proof}

\begin{proof}[Proof of Lemma~\ref{localV*F}]
This follows from a straightforward Taylor expansion. Indeed,
fixing $z\in \mb R^d$ such that $\nabla f(z)=0$ and $R>0$, 
we have on $B_R(z)$ the uniform estimate 
\begin{gather*}
-\tfrac{1}{2\ve} \left[ f(\cdot + \ve v) - f\right]   =    -  \tfrac 12 \nabla f \cdot v    - 
   \tfrac \ve 4 \hess f v \cdot v +  \mc O(\ve^2) .
\end{gather*}
Using the inequality $|e^t-1-t| \leq \tfrac 12 t^2 e^{|t|}$ with $t =  \tfrac \ve 4 \hess f \, v \cdot v +  \mc O(\ve^2) $ then gives
\begin{gather*}
V_{\ve}     =    
 \sum_{v\in \mc N}    \left\{  e^{-\tfrac 12 \nabla f \cdot v}    -  1          - 
 e^{-\tfrac 12 \nabla f\cdot v} 
\tfrac {\ve}{4} \hess f \, v \cdot v      +      \mc O(\ve^2)                  \right\}      =   \\  
 \sum_{v\in \mc N}    \left\{  \cosh{[\tfrac 12 \nabla f \cdot v]}    -  1            -  
 \cosh{[\tfrac 12 \nabla f \cdot v]} 
\tfrac {\ve}{4} \hess f \, v \cdot v      +      \mc O(\ve^2)                  \right\}     .   
\end{gather*}
The expansion $\cosh x = 1 + \frac 12 x^2 + \mc O(x^4)$ and the equalities 
$\sum_{v} |\nabla f\cdot v|^2 =   2|\nabla f|^2$ and $\sum_{v} \hess f \, v \cdot v = 
2\Delta f$ give 
\begin{gather*}
V_{\ve}     =  
 \frac 14 |\nabla f|^2      +    \mc O( \sum_{k} |\partial_k f|^4  )            -  
\tfrac {\ve}{2} \Delta f  +        \mc O(\ve |\nabla f|^2)     +     \mc O(\ve^2)     . 
\end{gather*}
Expanding all terms in $x$ around $z$, which gives in particular $|\nabla f (x)|^2 = [\hess f(z)]^2 (x-z)\cdot( x-z) + \mc O(|x-z|^3) $ and 
$\Delta f(x) = \Delta f (z) +   \mc O(|x-z|)$,   finishes the proof.  
\end{proof}

\section{Proof of Theorem~\ref{spectralgapF}}  \label{SectionMainProof2}

\subsection{General strategy}

\

\noindent
In order to compute the precise asymptotics of the smallest non-zero eigenvalue $\lambda(\ve)$ of $H_\ve$ we shall consider a suitable choice of an $\ve$-dependent test function $\psi_\ve$. The latter will be referred to as quasimode and its precise construction
will be given in Subsection~\ref{ssquasimodedefF}. Since $\psi_\ve$ will be chosen orthogonal to the ground state $e^{-f/(2\ve)}$ for every $\ve$, the upper bound
on $\lambda(\ve)$ given in Theorem~\ref{spectralgapF} will follow immediately from the Max-Min principle, giving
\begin{equation} \label{UBminimaxlambda}       \lambda(\ve)     \leq   \frac{\lp H_\ve \psi_\ve, \psi_\ve\rp_{\ell^2(\ve\mb Z^d)}}{\|\psi_\ve\|^2_{\ell^2(\ve\mb Z^d)}}    ,          
\end{equation}
and from the precise computation of the right hand side in the above formula by using the 
Laplace asymptotics on $\ve\mb Z^d$ given in Subsection~\ref{sectiondiscreteLaplace}. The result of these computations is the content of Proposition~\ref{PropNorm} and Proposition~\ref{propDirichlet}. 

\

\noindent
The proof of the lower bound on $\lambda(\ve)$ given in Theorem~\ref{spectralgapF} is more subtle. We shall derive it as a corollary of Theorem~\ref{main1F}
and the following abstract estimate, which was used in~\cite{DGLP} in a similar way. 
\begin{proposition} \label{abstractPROP} 
Let $(T, \mc D(T))$ be a nonnegative selfadjoint operator on a Hilbert space $(X,\lp \cdot, \cdot\rp)$. Moreover let $\tau>0$ and $P = \mbf 1_{[0,\tau]}(T)$ be the spectral projector of $T$ corresponding to the interval $[0,\tau]$ and let $\lambda = \sup ([0,\tau]\cap \spec(T))$. Then for every normalized $u\in \mc D(T)$ with 
$\lp T u, u\rp \neq 0 $ it holds 
\begin{equation*}
\lambda    \geq    \lp  T u, u     \rp    \left(  1   -    R(u)\right) , 
\end{equation*}
where $R(u) \geq 0 $ satisfies      
\[         [R(u)]^2    =    \tau^{-1}     \frac{\lp Tu, Tu \rp}{   \lp  T u, u \rp}     .  \] 
\end{proposition}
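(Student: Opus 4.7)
The plan is to prove this via the orthogonal spectral decomposition of $X$ associated with $T$ and the threshold $\tau$. Write $Q = I - P = \mbf 1_{(\tau,\infty)}(T)$, and split $u = u_1 + u_2$ with $u_1 = Pu$ living in the ``low spectrum'' subspace (where $T \leq \lambda$ by the functional calculus and the definition of $\lambda$) and $u_2 = Qu$ in the ``high spectrum'' subspace (where $T \geq \tau$). Since $T$ commutes with $P$ and $Q$, both the scalar product and the sesquilinear form $(v,w)\mapsto \lp Tv,w\rp$ split orthogonally along this decomposition, and so does $\|Tu\|^2$.

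The main inequality would come from the $u_1$-component: since $T \leq \lambda$ on $\ran P$, the Rayleigh quotient of $u_1$ gives $\lp T u_1, u_1\rp \leq \lambda \|u_1\|^2 \leq \lambda$, and after rewriting $\lp T u_1, u_1\rp = \lp Tu, u\rp - \lp T u_2, u_2\rp$ this reads
\[
\lambda \;\geq\; \lp Tu, u\rp - \lp T u_2, u_2\rp.
\]
Thus the task reduces to bounding the error term $\lp T u_2, u_2\rp$ above by $\lp Tu, u\rp \, R(u)$.

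For this I would combine two ingredients coming from the functional calculus of $T$. First, Cauchy--Schwarz together with the orthogonal split of $Tu$ gives $\lp T u_2, u_2\rp \leq \|T u_2\|\,\|u_2\| \leq \|Tu\|\,\|u_2\|$. Second, the elementary scalar inequality $\mbf 1_{(\tau,\infty)}(t) \leq t/\tau$ on $[0,\infty)$ yields, via functional calculus, the operator bound $Q \leq \tau^{-1} T$ and hence $\|u_2\|^2 = \lp Qu, u\rp \leq \tau^{-1} \lp Tu, u\rp$. Multiplying these two estimates produces
\[
\lp T u_2, u_2\rp \;\leq\; \|Tu\| \sqrt{\tau^{-1} \lp Tu, u\rp} \;=\; \lp Tu, u\rp \, R(u),
\]
by the very definition of $R(u)$, and plugging this into the previous display closes the proof.

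There is no serious obstacle in the argument; the only point to get right is to pair the two estimates on $\ran Q$ in the order that forces the error to factorize as $\lp Tu, u\rp$ times the dimensionless ratio $R(u)$. Note that a slightly stronger bound $\lambda \geq \lp Tu, u\rp(1 - R(u)^2)$ is actually available from the same scheme (using $\|T u_2\|^2 \geq \tau \lp T u_2, u_2\rp$ in place of Cauchy--Schwarz), but the factorized form with $R(u)$ is what makes the statement convenient in practice: combined with the upper bound on $\lambda$ provided by the Max-Min principle, an estimate $R(u) = o(1)$ on a well-chosen quasimode immediately upgrades that upper bound to a sharp two-sided asymptotic.
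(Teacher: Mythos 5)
Your proof is correct and uses essentially the same ideas as the paper: split $u$ via the spectral projector $P$, bound $\lp TPu, Pu\rp$ by $\lambda$, control the remainder via Cauchy--Schwarz, and invoke the Chebyshev-type spectral estimate $\|(I-P)u\|^2 \leq \tau^{-1}\lp Tu, u\rp$. The orthogonal-decomposition framing is a cosmetic variant of the paper's inequality chain (e.g.\ $\lp u, TPu\rp = \lp Tu_1,u_1\rp$ and $\lp Tu, u-Pu\rp = \lp Tu_2,u_2\rp$, so the intermediate quantities coincide), and your side remark that the sharper bound $\lambda \geq \lp Tu,u\rp(1-R(u)^2)$ also follows is correct.
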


\begin{proof} We denote by $\|\cdot\|$ the Hilbert space norm and fix a $u\in \mc D(T)$
such that $\|u\|=1$ and $\lp T u, u\rp \neq 0 $. Then we have the estimates 
\begin{gather*}
\lambda    \geq      \|Pu\|^2   \lambda    =  \lp  u, \lambda   Pu\rp    \geq 
\lp  u, T Pu\rp    =  \\
   \lp  T u, u\rp    \left( 1    -     \frac{\lp Tu, u - Pu \rp}{   \lp  T u, u \rp}   \right)     \geq      \lp  T u, u\rp
   \left( 1    -     \frac{ \| Tu\| \| u - Pu\| }{   \lp  T u, u \rp}   \right)   . 
\end{gather*}
The claim follows now from the estimate  
\[    \| u - Pu\|^2      \leq       \tau^{-1}   \lp Tu, u   \rp  ,        \]
which is a consequence of the spectral theorem. 
\end{proof}
\noindent
We shall apply Proposition~\ref{abstractPROP} to the case  $T = H_\ve$, $\tau = 
C\ve$, where $C$ is the constant appearing in Theorem~\ref{main1F} and 
$u=   (\|\psi_\ve\|_{\ell^2(\ve\mb Z^d)})^{-1} \psi_\ve $, where $\psi_\ve$ is the same quasimode  used for the upper bound on $\lambda(\ve)$. By 
Theorem~\ref{main1F} (ii) we thus obtain a lower bound on $\lambda(\ve)$. 
The fact that this lower bound coincides with the lower bound given in Theorem~\ref{spectralgapF} is a 
consequence of the precise computation of the right hand side of~\eqref{UBminimaxlambda}, 
which we already mentioned (see Prop.~\ref{PropNorm} and Prop.~\ref{propDirichlet}), and
the estimate 
\begin{equation*}  (C\ve)^{-1}     \frac{\lp H_\ve \psi_\ve, H_\ve \psi_\ve \rp_{\ell^2(\ve\mb Z^d)}}{   \lp  H_\ve \psi_\ve, \psi_\ve \rp_{\ell^2(\ve\mb Z^d)}}       =          \mc O(\ve)    .   
\end{equation*}
The latter estimate will be a consequence of Proposition~\ref{propDirichlet}
and Proposition~\ref{PropCarre}, which is proven again by analyzing the Laplace asymptotics 
of a sum over $\ve\mb Z^d$.

\

\noindent
We shall assume throughout the rest of this section that the Assumption~\ref{H2} is satisfied.

\subsection{Definition of the quasimode $\psi_\ve$}\label{ssquasimodedefF}

\

\noindent
Let $s_1,\dots, s_n$ be the relevant saddle points of $f$, i.e. the critical points of index one of $f$ appearing in formula~\eqref{prefactor} 
defining the prefactor $A$. 
Given $x\in \mb R^d$ we associate to it a linear ``reaction coordinate'' $\xi_k= \xi_k(x)$ around the saddle point $s_k$, which parametrizes the unstable direction of $\hess f(s_k)$.
More precisely, we chose
one of the two normalized eigenvectors corresponding to the only negative eigenvalue
$\mu(s_k)$ of $\hess f(s_k)$, denote it by $\tau_k$, and set
\begin{equation}\label{definitionreaction}
     \xi_k(x)   =    \lp x- s_k , \tau_k \rp              \   \  \   \  \   \forall k=1,\dots, n  .      \end{equation}
Recalling our notation 
$\mc S_f(h) =    f^{-1} \left(  (-\infty, h) \right)$ for the    open sublevel set of $f$ corresponding to the height $h\in\mb R$, we consider for $\rho>0$ and $k=1, \dots, n$ the closed set 
\[     \mc R_k  =       \left\{   x\in  \overline{\mc S_f(h^*+ \rho)}: |\xi_k(x)| \leq \rho   \right \}    ,   \]
and the open set $\mc B = \mc S_f(h^*+\rho)   \setminus \left( \bigcup_k \mc R_{k} \right)$.

\noindent
Henceforth the parameter $\rho>0$ appearing in the definition of $\mc R_k$
and $\mc B$ is fixed sufficiently small such that
  the following properties  hold:
\begin{itemize}
\item[-]
the set $\mc B$ has exactly 
two connected components $\mc B^{(0)}$
and $\mc B^{(1)} $,  containing respectively $m_0$ and $m_1$.
\item[-]  $\mc R_{k}$ is disjoint from  $\mc R_{k'}$ for $k\neq k'$. 
\item[-]  For each $k=1, \dots, n$ the function $\vp_k   \   :  =   \   f   \   +   \    |\mu(s_k)| \xi_k^2 $ satisfies $\vp_k(x) > f(s_k) $ for every $x\in \mc R_{k}
\setminus \{s_k\}$.  
\end{itemize}
Note that $\hess \vp_k (s_k) =   |\hess f(s_k)|$. In other terms the quadratic approximation of $\vp_k$ around $s_k$ is obtained from that of $f$ by flipping the sign 
of the only negative eigenvalue of $\hess f(s_k)$.

\

\noindent
Let $\ve\in(0, 1]$.   
The quasimode $\psi_\ve$ for the  spectral gap is defined as follows.
We define first on the sublevel set  $\mc S_f( h^* + \rho)$ 
\[     \kappa_{\ve}(x)     =      \begin{cases}      +1      &    \text{ for }      x\in
\mc B^{(1)} ,        \\
-1       &      \text{ for }        x\in
\mc B^{(0)} ,    \\
C_{k,\ve}  \int_0^{\xi_k(x)}   \,  \chi( \eta) \, e^{- \frac{  |\mu(s_k)| \eta^2}{2\ve}}   \, d\eta  
 \    &      \text{ for }     x\in  \bigcup_{k}\mc R_{k}         .    
\end{cases}
   \]
The constant $C_{k,\ve}$ appearing above is defined as 
 \[     C_{k,\ve}       :=         \left[  \   \frac 12    \  \int_{-\infty}^{\infty} \chi( \eta) \, e^{- \frac{ |\mu(s_k)| \eta^2}{2\ve}}   \, d\eta    \right]^{-1},    \] 
 and $\chi \in C^\infty(\mb R;[0,1])$ satisfies $\chi\equiv 1$ on $[- \frac{\rho}{3},\frac\rho 3]$, $\chi(\eta)= 0$ 
 for $|\eta|\geq \frac 23 \rho$ and $\chi(\eta) =\chi(-\eta)$. Note that   
 \begin{equation}\label{cepsF}     
 \exists \gamma>0   \  \text{ such that }   \      C_{k,\ve}        =          2  \sqrt{  \frac{|\mu(s_k)|}{2\pi\ve}}   \    \left(  1 +   \mc O(e^{- \frac{\gamma}{\ve}}) \right)       .      
 \end{equation}
Note also that for each $k=1, \dots n$ the sign of the vector $\tau_k$ defining $\xi_k$ (see~\eqref{definitionreaction}) can be chosen 
such that 
$\kappa_\ve$ is $C^\infty$ on $\mc S_f( h^* + \rho)$, which we shall assume in the sequel.     
In order to extend $  \kappa$ to a smooth function defined on the whole $\mb R^d$
we introduce another cutoff function $\theta\in C^\infty(\mb R^d;[0,1])$ by setting for 
$x\in \mb R^d$     
\[     \theta(x)    =             \begin{cases}      1      &    \text{ for }      x \in \mc S_f(h^*+\frac \rho 2)     \\
0      &      \text{ for }     x\in \mb R^d \setminus  \mc S_f(h^* +  \frac 34 \rho)      \\
      \end{cases}.    
\]
Finally we define the quasimode $\psi_\ve$ by setting for $x\in \mb R^d$
\begin{equation}\label{defquasimodeF}   
 \psi_{\ve}(x)       =      
\left(  \frac 12 \theta(x)        \, \kappa_\ve(x)      -             \frac 12 \tfrac{ \lp  \theta    \kappa_{\ve} , e^{- f/\ve}   \rp_{\ell^2(\ve\mb Z^d)}   }{   \|e^{- f/(2\ve)}\|^2_{\ell^2(\ve\mb Z^d)} }     \right)  
  e^{- f/(2\ve)}      . 
\end{equation}
Note that $\psi_\ve \in C^\infty(\mb R^d)$ with compact support. In particular its restiction to 
$\ve\mb Z^d$, which we still denote by $\psi_\ve $, is in $C_c(\ve \mb Z^d) \subset {\textrm Dom}(V_\ve) $.  
Moreover, it follows from its very definition that $\psi_\ve$ is orthogonal 
to the ground state $e^{-f/(2\ve)}$ with respect to the scalar product $\lp \cdot, \cdot \rp_{\ell^2(\ve \mb Z^d)}$.

\subsection{Quasimode estimates}

\

\noindent
We now state the crucial estimates concerning the quasimode $\psi_\ve$. 
The proofs follow from straightforward computations exploiting the results of Subsection~\ref{sectiondiscreteLaplace} on the Laplace asymptotics for sums over $\ve\mb Z^d$.
We shall give the details in Subsection~\ref{ssectionproofsquasimode}.

\begin{proposition}
\label{PropNorm}
Assume ~\ref{H2} and let $\ve\in (0,1]$.
The function $\psi_\ve$ defined in~\eqref{defquasimodeF} satisfies
\begin{equation*}\label{propnorm}
 \|     \psi_\ve  \|^2_{\ell^2(\ve\mb Z^d)}        =      (2\pi\ve)^{\frac d2} J      
      e^{- h_*/\ve}    \left(   1     +     \mc O(\sqrt{\ve})       \right)        ,
\end{equation*}
where $h_*=\min\{f(m_0),f(m_1)\}$ is the minimum of  $f$ and 
\[    J   =  \begin{cases}    \left( \left(\det \hess f(m_1) \right)^{\frac 12}     +         \left(\det \hess f(m_0 )   \right)^{\frac 12}   \right)^{-1}     &       \text{ if } f(m_0) = f(m_1) ,  \\
     \left(\det \hess f(m_0 )   \right)^{-\frac 12}     &       \text{ if } f(m_0) < f(m_1)       .    
\end{cases}  \]
\end{proposition}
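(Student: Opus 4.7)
The starting point is the algebraic identity
\[
\|\psi_\ve\|^2_{\ell^2(\ve\mb Z^d)} = \tfrac{1}{4}\bigl[\lp(\theta\kappa_\ve)^2, e^{-f/\ve}\rp_{\ell^2(\ve\mb Z^d)} - c_\ve^2\,\|e^{-f/(2\ve)}\|^2_{\ell^2(\ve\mb Z^d)}\bigr],
\]
obtained by expanding $(\theta\kappa_\ve-c_\ve)^2$ and using the orthogonality relation $c_\ve\,\|e^{-f/(2\ve)}\|^2 = \lp\theta\kappa_\ve,e^{-f/\ve}\rp$ built into the definition of the quasimode. This reduces the problem to sharp Laplace asymptotics for the three sums $\ve^d\sum_x g(x)\,e^{-f(x)/\ve}$ with amplitude $g\in\{1,\,\theta\kappa_\ve,\,(\theta\kappa_\ve)^2\}$.

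Next I would decompose $\ve\mb Z^d$ into small balls $B_\delta^\ve(m_j)$ about the two minima ($j=0,1$), the saddle neighborhoods $\mc R_k^\ve$, and the complement. For $\delta$ small enough, each $B_\delta^\ve(m_j)$ lies in $\mc B^{(j)}\cap\mc S_f(h^*+\rho/2)$, where $\theta\equiv 1$ and $\kappa_\ve\equiv\pm 1$; Proposition~\ref{GenPropositionLaplaceF} with $\vp = f-f(m_j)$ and $k=3$ (available since $f\in C^3(\mb R^d)$ under~\ref{H1F}) then yields
\[
\ve^d\!\sum_{x\in B_\delta^\ve(m_j)} e^{-f(x)/\ve} = (2\pi\ve)^{d/2}\,\frac{e^{-f(m_j)/\ve}}{\sqrt{\det\hess f(m_j)}}\,(1+\mc O(\sqrt\ve)),
\]
so that each of the three sums picks up from well $j$ a contribution of magnitude $\mu_j:=(2\pi\ve)^{d/2}e^{-f(m_j)/\ve}/\sqrt{\det\hess f(m_j)}$, with the appropriate sign depending on $g$. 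On each saddle region $\mc R_k^\ve$ the uniform bound $|\theta\kappa_\ve|\le 1$ combined with the quadratic lower bound for $f$ near $s_k$ in the directions transverse to $\tau_k$ and Proposition~\ref{propgaussianlaplace} yields a contribution of order $(2\pi\ve)^{d/2}e^{-h^*/\ve}$, exponentially smaller than $\mu_0+\mu_1$. On the complement of $\mc S_f(h^*+\rho)$ the superlinear growth in Assumption~\ref{H2}(i) provides super-exponential decay of the tails.

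Substituting these asymptotics into the displayed identity produces a rational expression in $\mu_0$ and $\mu_1$ whose simplification yields the claimed prefactor $J$. In the symmetric case $f(m_0)=f(m_1)=h_*$ both wells contribute at the same exponential order and the algebra collapses to the harmonic-mean-type prefactor $(\sqrt{\det\hess f(m_0)}+\sqrt{\det\hess f(m_1)})^{-1}$; in the strictly asymmetric case one of the two $\mu_j$'s dominates the quotient and the simplification reduces to a single-well formula of the form $(\det\hess f(\cdot))^{-1/2}$.

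The main technical obstacle is the cancellation of leading orders in the bracket: both $\lp(\theta\kappa_\ve)^2,e^{-f/\ve}\rp$ and $c_\ve^2\,\|e^{-f/(2\ve)}\|^2$ carry leading Laplace order $\mu_0+\mu_1$, so the $(1+\mc O(\sqrt\ve))$ accuracy demands that the $\mc O(\sqrt\ve)$ sub-leading Laplace corrections in the two sums be precisely matched. This matching is enforced by the identity $(\theta\kappa_\ve)^2\equiv 1$ on each ball $B_\delta^\ve(m_j)$, which makes the two local Laplace expansions coincide coefficient-by-coefficient to the relevant order; the saddle and tail remainders, of order $(2\pi\ve)^{d/2}e^{-h^*/\ve}$, are then absorbed into the global $\mc O(\sqrt\ve)$ error thanks to the spectral gap $h^*-h_*>0$.
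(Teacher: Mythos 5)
Your algebraic starting point --- the variance/polarization identity $\|\psi_\ve\|^2 = \frac{1}{4}\bigl[\lp(\theta\kappa_\ve)^2, e^{-f/\ve}\rp - c_\ve^2\|e^{-f/(2\ve)}\|^2\bigr]$ with $c_\ve:=\lp\theta\kappa_\ve, e^{-f/\ve}\rp/\|e^{-f/(2\ve)}\|^2$ --- is exactly the paper's identity~\eqref{variancedecompositionf}, which the paper itself only deploys in the symmetric case $f(m_0)=f(m_1)$. For $f(m_0)<f(m_1)$ the paper instead computes $c_\ve=-1+\mc O(\sqrt\ve)$, rewrites $\psi_\ve = (\tfrac12\theta\kappa_\ve+\tfrac12+\mc O(\sqrt\ve))e^{-f/(2\ve)}$, and splits the sum over a ball around $m_0$ and its complement. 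So your approach genuinely unifies the two cases, and the symmetric-case algebra does reproduce the harmonic-mean prefactor as in the paper.

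The treatment of the asymmetric case, however, hides the actual difficulty, and the cancellation mechanism you describe is incomplete. When $f(m_0)<f(m_1)$ both $\lp(\theta\kappa_\ve)^2,e^{-f/\ve}\rp$ and $c_\ve^2\|e^{-f/(2\ve)}\|^2$ are of order $\mu_0:=(2\pi\ve)^{d/2}(\det\hess f(m_0))^{-1/2}e^{-f(m_0)/\ve}$, while the bracket itself, $4\mu_0\mu_1/(\mu_0+\mu_1)$, is of the exponentially smaller order $\mu_1$. You attribute the required cancellation to $(\theta\kappa_\ve)^2\equiv 1$ on the balls, but that identity only controls $\lp(\theta\kappa_\ve)^2,e^{-f/\ve}\rp - \|e^{-f/(2\ve)}\|^2$; the delicate term is $\|e^{-f/(2\ve)}\|^2(1-c_\ve^2)$, and applying Proposition~\ref{GenPropositionLaplaceF} separately to $a_\ve:=\lp\theta\kappa_\ve,e^{-f/\ve}\rp$ and $b_\ve:=\|e^{-f/(2\ve)}\|^2$ only gives $c_\ve+1=\mc O(\sqrt\ve)$, which produces an error of size $\mc O(\sqrt\ve\,\mu_0)\gg\mu_1$ and does not close. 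The mechanism that does work is the pointwise identity $\theta\kappa_\ve+1\equiv 0$ in a neighbourhood of $m_0$, so that $a_\ve+b_\ve=\lp\theta\kappa_\ve+1,e^{-f/\ve}\rp$ has no $\mu_0$-order Laplace contribution at all; hence $c_\ve+1=\mc O(\mu_1/\mu_0)$ is exponentially small. The cancellation must be exploited at the level of the summand, before the Laplace asymptotic is applied, and your write-up does not record this.

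Finally, be explicit about which well survives. For $\mu_0\gg\mu_1$ the expression $\mu_0\mu_1/(\mu_0+\mu_1)$ collapses to the smaller term $\mu_1$, i.e.\ to the local minimum with the larger value of $f$. Your phrase ``one of the two $\mu_j$'s dominates the quotient'' leaves the sign of the selection ambiguous; carry the computation to the end and verify carefully that the surviving $(\det\hess f(\cdot))^{-1/2}e^{-f(\cdot)/\ve}$ agrees with the $J$ and $h_*$ in the statement, since a sign error at this stage silently propagates into the Eyring--Kramers prefactor.
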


\begin{proposition}\label{propDirichlet}
Assume ~\ref{H2} and let $\ve\in (0,1]$.
The function $\psi_\ve$ defined in~\eqref{defquasimodeF} satisfies
\[      \lp H_\ve \psi_\ve, \psi_\ve \rp_{\ell^2(\ve\mb Z^d)} 
 =         \ve \sum_{k=1}^n \frac{|\mu(s_k)|}{2\pi}  \frac{(2\pi\ve)^{\frac d2}}{|\det \hess f(s_k)|^{\frac 12}}      e^{-h^*/\ve }   
\left(        1    +      
\mc O(\sqrt\ve)    \right)     , \]
where $\mu(s_k)$ is the only negative eigenvalue of $\hess f(s_k)$ and
$h^*$  is defined 
in~\eqref{heightofenergyF}.
\end{proposition}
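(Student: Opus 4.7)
The starting point is the Witten-type identity~\eqref{Witten*}. Since $e^{f/(2\ve)}\psi_\ve = \tfrac{1}{2}\theta\kappa_\ve + \mathrm{const}$ and $\nabla_\ve$ annihilates additive constants, I obtain
\begin{equation*}
\lp H_\ve\psi_\ve,\psi_\ve\rp_{\ell^2(\ve\mb Z^d)} = \frac{\ve^{d+2}}{8}\sum_{x\in\ve\mb Z^d}\sum_{v\in\mc N}e^{-\frac{f(x)+f(x+\ve v)}{2\ve}}\bigl|\nabla_\ve(\theta\kappa_\ve)(x,v)\bigr|^2.
\end{equation*}
I would split this sum according to whether $x$ lies in one of the saddle neighborhoods $\mc R_k$, in one of the basins $\mc B^{(0)},\mc B^{(1)}$, or in the transition region of $\theta$. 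On each basin $\theta\kappa_\ve$ is constant, so only lattice points within $\ve$-distance of the interface $\partial\mc R_k$ can contribute; but there the cut-off $\chi$ is zero and $\kappa_\ve=\pm 1$ on both sides, so this boundary discrepancy is $O(e^{-\gamma/\ve})$. In the transition region of $\theta$ one has $f\geq h^*+\rho/2$, so the exponential weight is smaller than $e^{-h^*/\ve}$ by a factor $e^{-\rho/(2\ve)}$, producing an exponentially subdominant contribution absorbable into the claimed $O(\sqrt\ve)$ remainder.

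On the dominant region $\mc R_k$ one has $\theta\equiv 1$ and $\kappa_\ve=C_{k,\ve}F_\ve(\xi_k(\cdot))$, with $F_\ve(\xi):=\int_0^\xi\chi(\eta)e^{-|\mu(s_k)|\eta^2/(2\ve)}d\eta$. A direct integral computation gives
\begin{equation*}
\nabla_\ve\kappa_\ve(x,v) = \frac{C_{k,\ve}}{\ve}\int_{\xi_k(x)}^{\xi_k(x)+\ve\tau_k\cdot v}\chi(\eta)e^{-\frac{|\mu(s_k)|\eta^2}{2\ve}}d\eta = (\tau_k\cdot v)C_{k,\ve}\chi(\xi_k)e^{-\frac{|\mu(s_k)|\xi_k^2}{2\ve}} + r_\ve(x,v),
\end{equation*}
where $r_\ve$ is a Taylor remainder controlled after Gaussian integration by the bound $F_\ve''=O(\ve^{-1}|\xi|)\cdot\text{Gaussian}$. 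The central algebraic identity is the sign-flip relation: in the eigenbasis of $\hess f(s_k)$ with $\tau_k$ the negative eigendirection, writing $y=x-s_k$,
\begin{equation*}
\tfrac{1}{2}\hess f(s_k)y\cdot y + |\mu(s_k)|\xi_k(x)^2 = \tfrac{1}{2}|\hess f(s_k)|y\cdot y,
\end{equation*}
where $|\hess f(s_k)|$ denotes the positive definite symmetric matrix obtained by replacing $\mu(s_k)$ with $|\mu(s_k)|$; its determinant is $|\det\hess f(s_k)|$. Combining this with the $C^3$ Taylor expansion of $f$ around $s_k$ and the elementary estimate $\tfrac{f(x)+f(x+\ve v)}{2}=f(x)+O(\ve|y|)+O(\ve^2)$ near $s_k$, one obtains
\begin{equation*}
e^{-\frac{f(x)+f(x+\ve v)}{2\ve}-\frac{|\mu(s_k)|\xi_k^2}{\ve}} = e^{-h^*/\ve}e^{-\frac{|\hess f(s_k)|y\cdot y}{2\ve}}\bigl(1+O(|y|^3/\ve)+O(\ve)\bigr).
\end{equation*}

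Substituting the two expansions into the contribution of $\mc R_k$, the angular sum reduces by symmetry $\mc N=-\mc N$ to $\sum_{v\in\mc N}(\tau_k\cdot v)^2=2$, while odd-in-$v$ cross terms vanish identically. Applying Proposition~\ref{propgaussianlaplace} to the resulting Gaussian lattice sum, and the odd-moment bound~\eqref{gaussasymptoticsFcorF} to the cubic Taylor remainder, the saddle-$s_k$ contribution becomes
\begin{equation*}
\frac{\ve^{d+2}}{8}\cdot 2\cdot(C_{k,\ve})^2\cdot\ve^{-d}\frac{(2\pi\ve)^{d/2}}{|\det\hess f(s_k)|^{1/2}}e^{-h^*/\ve}\bigl(1+O(\sqrt\ve)\bigr).
\end{equation*}
Inserting $(C_{k,\ve})^2=\tfrac{2|\mu(s_k)|}{\pi\ve}(1+O(e^{-\gamma/\ve}))$ from~\eqref{cepsF} collapses this to $\ve\tfrac{|\mu(s_k)|}{2\pi}\tfrac{(2\pi\ve)^{d/2}}{|\det\hess f(s_k)|^{1/2}}e^{-h^*/\ve}(1+O(\sqrt\ve))$, and summing over $k$ yields the stated formula.

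The main technical obstacle is to aggregate three separate error sources into a single $O(\sqrt\ve)$ bound: (i) the discrete-to-continuous approximation of $\nabla_\ve\kappa_\ve$, whose Taylor remainder involves $F_\ve''$ with an apparent $\ve^{-1}$ that is compensated only after Gaussian integration against $|\xi_k|$; (ii) the asymmetric replacement of $\tfrac{1}{2}(f(x)+f(x+\ve v))$ by $f(x)$ in the weight, whose leading $O(\nabla f\cdot v)$ correction is odd in $v$ and vanishes identically upon summing over $\mc N$ against the even factor $(\tau_k\cdot v)^2$; (iii) the $C^3$ cubic remainder of $f$, producing $O(|y|^3/\ve)$ in the exponent, which expanded to first order yields the advertised $O(\sqrt\ve)$ relative error via the odd-moment estimate~\eqref{gaussasymptoticsFcorF}.
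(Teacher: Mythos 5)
Your proposal follows essentially the same route as the paper's proof: the Witten quadratic-form identity~\eqref{Witten*}, localization to the saddle neighborhoods $\mc R_k$ via the support of $\theta$ together with the level condition $f\geq h^*+\rho/4$ away from small balls around the $s_k$, Taylor expansion of $\nabla_\ve\kappa_\ve$, the sign-flip relation $\hess\vp_k(s_k)=|\hess f(s_k)|$, and the discrete Laplace asymptotics of Section~\ref{sectiondiscreteLaplace}. The one slip is in your displayed weight expansion, which omits the $v$-dependent factor $e^{-\nabla f(x)\cdot v/2}$ --- this is $1+O(|x-s_k|)$, not $O(\ve)$, and cannot be swept into the stated remainder; you do recover from it in point (ii) of your closing discussion by invoking the parity cancellation of its odd part against $(\tau_k\cdot v)^2$, which is exactly what the paper makes rigorous through the $\cosh$/$\sinh$ identities~\eqref{identitycosh} and the resulting bound $\alpha_k(x)=1+O(|x-s_k|^2)$, but the display itself should carry that factor explicitly.
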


\begin{proposition}\label{PropCarre}  
Assume ~\ref{H2} and let $\ve\in (0,1]$.
The function $\psi_\ve$ defined in~\eqref{defquasimodeF} satisfies
\[     \|  H_\ve \psi_\ve \|^2_{\ell^2(\ve\mb Z^d)}         =        \mc O(   \ve^3 ) \ e^{-h^*/\ve}        ,    \]
where $h^*$  is defined 
in~\eqref{heightofenergyF}.
\end{proposition}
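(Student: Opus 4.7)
The strategy is to exploit the ground-state transformation from Proposition~\ref{GST} in order to convert the action of $H_\ve$ on $\psi_\ve$ into that of $L_\ve$ on $u_\ve := e^{f/(2\ve)}\psi_\ve = \tfrac12\theta\kappa_\ve - \tfrac12 c_\ve$, where $c_\ve\in\mb R$ is the orthogonality constant appearing in~\eqref{defquasimodeF}. Identity~\eqref{conjugation} yields the pointwise formula $H_\ve\psi_\ve(x) = -\ve\,e^{-f(x)/(2\ve)}\,L_\ve u_\ve(x)$, and hence
\begin{equation*}
\|H_\ve\psi_\ve\|^2_{\ell^2(\ve\mb Z^d)} = \ve^2\,\|L_\ve u_\ve\|^2_{\ell^2(\rho_\ve)}.
\end{equation*}
It therefore suffices to prove $\|L_\ve u_\ve\|^2_{\ell^2(\rho_\ve)} = \mc O(\ve^{1+d/2})e^{-h^*/\ve}$, which (absorbing the usual $\ve^{d/2}$ factor into the $\mc O$, consistently with Propositions~\ref{PropNorm} and \ref{propDirichlet}) gives the claim after multiplication by $\ve^2$.

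Next I would localize. The function $u_\ve$ is locally constant on each component $\mc B^{(i)}$; moreover, since $\supp\chi\subset[-2\rho/3,2\rho/3]$, we have $\kappa_\ve\equiv\pm 1$ already on $\{|\xi_k|\in[2\rho/3,\rho]\}$, so $u_\ve$ matches the value on $\mc B^{(i)}$ throughout the outer part of each saddle slab. Thus $L_\ve u_\ve(x)=0$ at every lattice point whose $\ve$-neighbourhood avoids the ``effective'' support $\{|\xi_k|\leq 2\rho/3\}$. In the annulus where $\theta$ varies one has $f(x)\geq h^*+\rho/2$, and the weight $e^{-f/\ve}$ kills this region by a factor $e^{-\rho/(2\ve)}$. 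Hence, up to an $\mc O(e^{-(h^*+\gamma)/\ve})$ error, only the saddle cores $\{x\in\mc R_k : |\xi_k(x)|\lesssim\sqrt{\ve}\}$ contribute.

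On $\mc R_k$ we have $\theta\equiv 1$ and $\kappa_\ve = \tilde\kappa_\ve\circ\xi_k$ with $\tilde\kappa_\ve'(\eta) = C_{k,\ve}\chi(\eta)e^{-|\mu(s_k)|\eta^2/(2\ve)}$. A Taylor expansion of the finite-difference operator in $\ve$ decomposes
\begin{equation*}
L_\ve u_\ve = -\nabla f\cdot\nabla u_\ve + \ve\Delta u_\ve + E_\ve,
\end{equation*}
where $E_\ve$ collects discretization remainders. Direct computation yields
\begin{equation*}
\nabla u_\ve = \tfrac12 C_{k,\ve}\chi(\xi_k)e^{-|\mu(s_k)|\xi_k^2/(2\ve)}\tau_k, \quad \ve\Delta u_\ve = \tfrac12 C_{k,\ve}e^{-|\mu(s_k)|\xi_k^2/(2\ve)}\bigl[\ve\chi'(\xi_k) - \chi(\xi_k)|\mu(s_k)|\xi_k\bigr].
\end{equation*}
The decisive algebraic input is that $\tau_k$ is an eigenvector of $\hess f(s_k)$ with eigenvalue $-|\mu(s_k)|$, so the Taylor expansion of $\nabla f$ around $s_k$ gives $\nabla f(x)\cdot\tau_k = -|\mu(s_k)|\xi_k(x)+r_k(x)$ with $r_k(x)=\mc O(|x-s_k|^2)$. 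Substituting cancels the leading $\chi(\xi_k)|\mu(s_k)|\xi_k$-terms and leaves
\begin{equation*}
-\nabla f\cdot\nabla u_\ve + \ve\Delta u_\ve = \tfrac12 C_{k,\ve}e^{-|\mu(s_k)|\xi_k^2/(2\ve)}\bigl[\ve\chi'(\xi_k) - \chi(\xi_k)r_k(x)\bigr].
\end{equation*}
The $\ve\chi'(\xi_k)$ term is supported where $|\xi_k|\geq\rho/3$ and is exponentially negligible. The main piece $\chi(\xi_k)r_k(x)=\mc O(|x-s_k|^2)$, squared and weighted by $e^{-f/\ve}$, is dominated by $\ve^{-1}|x-s_k|^4 e^{-\varphi_k(x)/\ve}$ with $\varphi_k = f+|\mu(s_k)|\xi_k^2$ having a nondegenerate minimum at $s_k$ of value $h^*$ and Hessian $|\hess f(s_k)|$; Proposition~\ref{GenPropositionLaplaceF} applied with $m=2$ yields $\mc O(\ve^{1+d/2})e^{-h^*/\ve}$ as required.

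The principal obstacle is controlling the discretization remainder $E_\ve$. Because $\kappa_\ve$ sharpens on the transition scale $|\xi_k|\lesssim\sqrt{\ve}$, its derivatives blow up as $\|D^n u_\ve\|_\infty = \mc O(\ve^{1/2-n})$, and the naive Taylor bound $\ve^n\|D^n u_\ve\|_\infty$ is only $\mc O(\ve^{1/2})$ pointwise, which is far too weak. The cure is to keep the full weighted finite-difference expression $\sum_v e^{-(f(x+\ve v)-f(x))/(2\ve)}[u_\ve(x+\ve v)-u_\ve(x)]$ intact and to exploit the $\pm$-symmetry of $\mc N=\{\pm e_j\}$: pairing the $+e_j$ and $-e_j$ contributions expresses all sums through $\sinh(\partial_j f/2)=\mc O(|\nabla f|)$ and $\cosh(\partial_j f/2)=1+\mc O(|\nabla f|^2)$, so every surviving discretization term carries a factor $|\nabla f|=\mc O(|x-s_k|)$ on top of each power of $\ve$. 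A term-by-term accounting then yields $|E_\ve|^2\leq C\,C_{k,\ve}^2\,e^{-|\mu(s_k)|\xi_k^2/\ve}\bigl(|x-s_k|^6+\ve^2|x-s_k|^2\bigr)$ plus higher-order analogues, each absorbed again by Proposition~\ref{GenPropositionLaplaceF} into an $\mc O(\ve^{1+d/2})e^{-h^*/\ve}$ contribution to $\|L_\ve u_\ve\|^2_{\ell^2(\rho_\ve)}$, so the final bound $\|H_\ve\psi_\ve\|^2 = \mc O(\ve^{3+d/2})e^{-h^*/\ve} = \mc O(\ve^3)e^{-h^*/\ve}$ holds in the paper's convention.
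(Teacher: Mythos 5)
Your proposal is correct and takes essentially the same route as the paper: the ground-state transformation of Proposition~\ref{GST} to pass to $L_\ve$ acting on $u_\ve=\tfrac12\theta\kappa_\ve-\tfrac12 c_\ve$, localization of the resulting weighted sum to the saddle slabs $\mc R_k$ (with the annulus where $\theta$ varies and the flat regions $\mc B^{(i)}$ both exponentially negligible), the decisive cancellation between the $\chi(\xi_k)|\mu(s_k)|\xi_k$-terms coming from the $\tau_k$-eigenvector identity $\nabla f\cdot\tau_k=-|\mu(s_k)|\xi_k+\mc O(|x-s_k|^2)$, the $\pm$-pairing of $\mc N$ into $\sinh(\partial_j f/2)$ and $\cosh(\partial_j f/2)$ to extract the extra $\mc O(|x-s_k|)$ from odd-order discretization remainders, and the final application of Proposition~\ref{GenPropositionLaplaceF} with $m=2$. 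The paper never passes through the formal decomposition $L_\ve u_\ve=-\nabla f\cdot\nabla u_\ve+\ve\Delta u_\ve+E_\ve$ that you introduce at first; it expands the finite difference $e^{-\frac12\nabla_\ve f(x,v)}\,\ve\nabla_\ve\kappa_\ve(x,v)$ directly via the one-variable function $G_{k,x,v,\ve}$ and collects the remainders term by term, which is precisely what you fall back on when you note the naive pointwise bound $\ve^n\|D^n u_\ve\|_\infty$ would fail — so by the time the hard part of the argument begins, the two write-ups coincide. Your observation that the stated $\mc O(\ve^3)e^{-h^*/\ve}$ in the proposition is weaker than the $\mc O(\ve^{3+d/2})e^{-h^*/\ve}$ actually produced (and needed for the spectral-gap lower bound via Proposition~\ref{abstractPROP}) is also correct and matches the paper's proof.
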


\begin{remark}
With the stronger assumption $f\in C^4(\mb R^d)$ the $\mc O(\sqrt{\ve})$ error terms 
appearing in Proposition~\ref{PropNorm} and Proposition~\ref{PropNorm} can be shown to be actually 
$\mc O(\ve)$. Indeed it is enough to apply Proposition~\ref{GenPropositionLaplaceF}
with $k=4$ instead of $k=3$ each time it is used in the proofs given below. 
\end{remark}

\subsection{Proofs of the quasimode estimates}\label{ssectionproofsquasimode}

\begin{proof}[Proof of Proposition~\ref{PropNorm}]
Let $\ve\in (0,1]$.
We first consider the case $f(m_0) <  f(m_1)$. Then there exist $\alpha, \delta>0$ such that
$f \geq f(m_0)+\alpha $ on $[B_\delta(m_0)]^c$ and $\theta\kappa_\ve \equiv -1$ on $B_\delta(m_0)$. 
It follows that, denoting for short by $\Omega_\ve^\delta$ the bounded set
$[B_\delta^\ve(m_0)]^c\cap\supp(\theta)$, it holds
\begin{gather*}
 \|e^{- f/(2\ve)}\|^2_{\ell^2(\ve\mb Z^d)}    =   \\   
   \ve^d\sum_{x\in B_\delta^\ve(m_0)} e^{-f(x)/\ve}   +    \  
  e^{-[f(m_0) +\alpha]/\ve} \ve^d\sum_{x\in \Omega_{\ve}^\delta} e^{-[f(x)-\alpha] /\ve}
  =   \\
  \ve^d\sum_{x\in B_\delta^\ve(m_0)} e^{-f(x)/\ve}    \left( 1 + \mc O(e^{-[f(m_0) +\alpha]/\ve})    \right). 
\end{gather*}
Proposition~\ref{GenPropositionLaplaceF} gives then
\begin{gather}\label{NumFF}
 \|e^{- f/(2\ve)}\|^2_{\ell^2(\ve\mb Z^d)}    =   (2\pi\ve)^{\frac d2} 
     \left(\det \hess f(m_0) \right)^{-\frac 12}   e^{-f(m_0)}    \left( 1 + \mc O(\sqrt{\ve})    \right). 
\end{gather}
The same arguments and the estimate $|\theta\kappa_\ve| \leq 1$ show that
\begin{gather}\label{denomFF}
\lp  \theta    \kappa_{\ve} , e^{- f/\ve}   \rp_{\ell^2(\ve\mb Z^d)} 
=    -    (2\pi\ve)^{\frac d2} 
     \left(\det \hess f(m_0) \right)^{-\frac 12}   e^{-f(m_0)}    \left( 1 + \mc O(\sqrt{\ve})    \right). 
\end{gather}
Taking the quotient between~\eqref{NumFF} and~\eqref{denomFF} it 
follows then from the definition of $\psi_\ve$ that 
\begin{equation*} 
 \psi_{\ve}(x)       =      
\left(  \frac 12 \theta(x)        \, \kappa_\ve(x)      +             \frac 12 
+    \mc O(\sqrt{\ve})     \right)  
  e^{- f/(2\ve)}      . 
\end{equation*}
The norm   $\psi_\ve  \|^2_{\ell^2(\ve\mb Z^d)}$ can be now computed by splitting again the 
sum in two sums, respectively over  $B_\delta^\ve(m_0)$ and $\Omega_\delta^\ve$. The 
conclusion in the case $f(m_0)< f(m_1)$ follows by again using Proposition~\ref{GenPropositionLaplaceF} for the first sum 
and arguing as above  for the second sum. 

\

\noindent
We now consider the case $f(m_0)= f(m_1)$. It follows from the definition of  $\psi_\ve$ that
\begin{equation}\label{variancedecompositionf} \|     \psi_\ve  \|^2_{\ell^2(\ve\mb Z^d)}   
         =       \frac 14  \|    \theta \kappa e^{- f/(2\ve)}\|^2_{\ell^2(\ve\mb Z^d)}           -         
\frac 14   \frac{ \lp  \theta    \kappa , e^{- f/\ve}   \rp^2_{\ell^2(\ve\mb Z^d)}   }
{   \|e^{- f/(2\ve)}\|^2_{\ell^2(\ve\mb Z^d)} }                  .     
\end{equation}
Let  $\alpha, \delta>0$ such that
$f \geq f(m_0)+\alpha $ on $[B_\delta(m_0) \cup
B_\delta(m_1)   ]^c$ and $\theta\kappa_\ve \equiv -1$ on $B_\delta(m_0)$,
$\theta\kappa_\ve \equiv 1$ on $B_\delta(m_1)$. 
With arguments as above one gets 
 \begin{gather*}
 \|e^{- f/(2\ve)}\|^2_{\ell^2(\ve\mb Z^d)}   =  
 \|    \theta \kappa e^{- f/(2\ve)}\|^2_{\ell^2(\ve\mb Z^d)}    \left(    1       +     \mc O(\sqrt{\ve})         \right)           =  \\
 =      
       (2\pi\ve)^{\frac d2} 
     \left[ \left(\det \hess f(m_1) \right)^{-\frac 12}    +       \left(\det \hess f(m_0 )   \right)^{-\frac 12}   \right]     
      e^{- f(m_1)/\ve}    \left(   1  +   \mc O(\sqrt{\ve})        \right)         , 
\end{gather*}
\begin{gather*}
      \lp  \theta    \kappa , e^{- f/\ve}   \rp_{\ell^2(\ve\mb Z^d)}          =  \\
=       
    (2\pi\ve)^{\frac d2} 
     \left[ \left(\det \hess f(m_1) \right)^{-\frac 12} -  \left(\det \hess f(m_0 )   \right)^{-\frac 12}   \right]     
      e^{- f(m_1)/\ve}      \left(   1   +    \mc O(\sqrt{\ve})        \right)   . 
\end{gather*}
Putting these expressions into~\eqref{variancedecompositionf}, the desired result~\eqref{propnorm} follows after some algebraic manipulations. 
\end{proof}

\begin{proof}[Proof of Proposition~\ref{propDirichlet}]
Let $\ve\in (0,1]$. Using~\eqref{Witten*} and the notation $F_\ve(x,v) =
\frac 12 [ f(x) + f(x+\ve v) ] $ gives 
\begin{equation*}
  \lp H_\ve \psi_\ve, \psi_\ve \rp_{\ell^2(\ve\mb Z^d)}    =   
       \tfrac{\ve^2}{4} \|e^{-F_\ve/(2\ve)}\nabla_\ve (\theta \kappa_\ve) \|^2_{\ell^2(\ve \mb Z^d; \mb R^\mc N)}    . 
\end{equation*}
Since the function $\theta$ has support in $\mc S_f(  h^* +  \frac 34 \rho)$, we can restrict (for $\ve$ sufficiently small) the sum running  
over $ \ve \mb Z^d$ to the bounded set $\ve \mb Z^d \cap \mc S_f(  h^* +   \rho)$.
Note that $\mc S_f(  h^* +   \rho)$ is the union of the disjoint sets
 $ \mc B$ and $\bigcup_k (\mc R_k  \setminus [\mc S_f(  h^* +   \rho)]^c)$. \footnote{The subtraction of 
 $[\mc S_f(  h^* +   \rho)]^c$ is necessary to have disjoint sets, but is not really relevant, since it concerns only boundary terms which do not matter in the computations given below.} 
  We write in the sequel for short
$ \mc R_{k,\ve} := \ve \mb Z^d    \cap     (\mc R_{ k} 
 \setminus [\mc S_f(  h^* +   \rho)]^c )$      and  $\mc B_\ve :=  \ve \mb Z^d    \cap  \mc B  $
and discuss below separately
the sum over $\cup_{k=1}^n   \mc R_{k, \ve} $, which will give the main contribution,
and the sum over $\mc B_\ve$,  which will give a negligible contribution.

\

\noindent
Below we shall use the Taylor expansion
\begin{equation}\label{TaylorgF}   
e^{-F_\ve(x,v)/\ve}      =      e^{- f(x)/\ve}     e^{- \nabla f(x)\cdot v / 2}  
     \left(       1       +      \mc O(\ve)  \right)  .    
\end{equation}

\

{      \it 1) Analysis on  $\cup_{k=1}^n   \mc R_{k, \ve} $.}

\, 

\noindent
In order to get rid of $\theta$ we take $\delta>0$ small enough such that  for each $k$ 
it holds   $B^\ve_\delta(s_k) \subset  \mc  R_{k,\ve}    \cap \mc S_f(  h^* +  \frac 14 \rho)  \subset \mc R_{k,\ve}$.   Since
$\theta$ and $\kappa_\ve$ are uniformly bounded in $\ve$ and 
$f\geq h^* + \frac 14 \rho$ 
on $\mc R_{k,\ve} \setminus B^\ve_\delta(s_k) $, we get
using~\eqref{TaylorgF} that, for $\ve>0$ sufficiently small (and thus also for 
$\ve\in (0,1]$), 
it holds 
\begin{gather}      \frac{\ve^d}{2}
           \sum_{x\in \mc R_{k,\ve} }  \sum_{v\in \mc N}    \frac 14\left[\theta\kappa_\ve \, (x+\ve v) -  \theta\kappa_\ve(x) \right]^2 e^{-F_\ve(x,v)/\ve}     
               =     \nonumber \\
                  \frac{\ve^d}{2}
           \sum_{x\in B_\delta^\ve(s_k) }  \sum_{v\in \mc N}    \frac 14\left[\kappa_\ve \, (x+\ve v) -  \kappa_\ve(x) \right]^2    e^{-F_\ve(x,v)/\ve}
              +          \mc O(e^{- (h^* + \frac \rho 4)})    ,        \label{fromRtoR'F} 
           \end{gather}
where we have used also that for $\ve$ sufficiently small $\theta(x)=\theta(x+\ve v)=1$
for $x\in B_\delta^\ve(s_k)$.  

\

\noindent
We discuss now in detail the behavior of $x \mapsto \kappa_\ve \, (x+\ve v) -  \kappa_\ve(x) $ near $s_k$. 
For $k =1, \dots, n$ and  $x\in \mc R'_{k}$, $v\in \mc N$ and $\ve\in (0,1]$  consider the function 
$G=G_{k,x,v,\ve}:[0,1] \to \mb R$ defined by 
\begin{equation} \label{defF}    
 G(\delta)      =            C_{k,\ve}^{-1}       \left[\kappa_\ve \, (x+\delta v) -  \kappa_\ve(x) \right]    =   
    \int_{\xi_k(x)}^{\xi_k(x+\delta v)}  \chi( \eta) \, e^{-   |\mu(s_k)| \eta^2/ (2\ve)}   \, d\eta        .     
    \end{equation}
Note that $G(0)=0$ ,  
 $ G'(0)    =         e^{-   |\mu(s_k)| \xi_k^2(x)/(2\ve)}          \chi(\xi_k(x))  \tau_k \cdot v   $,      
\[    G''(0)     =     e^{-  |\mu(s_k)| \xi_k^2(x)/(2\ve) }    |\tau_k \cdot v |^2 
\left[   \chi'(\xi_k(x))  -    |\mu(s_k)|\frac{\xi_k(x)}{\ve}    \chi(\xi_k(x)) \right]  ,   \]
and for every $\delta \in [0,1]$ 
\begin{gather*} \ve^3 G'''(\delta)     =   \\
 \ve e^{  -  |\mu(s_k)| \xi_k^2(x+\delta v )/(2\ve) }   
  (\tau_k \cdot v )^3 \left[  |\mu(s_k)|^2 \xi_k^2(x+\delta v) \chi(\xi_k(x+\delta v)  
  +   \ve R\right] ,  
  \end{gather*}
  where $R$ is not depending on $\ve$ and bounded in $k,x,v$. 
By Taylor expansion it follows that 
\begin{gather} 
  G(\ve)     =       \ve     e^{-   |\mu(s_k)| \xi_k^2(x)/(2\ve)}   \chi(\xi_k(x))     \,  \times    \label{TaylorF} 
     \\
         \left[   \tau_k \cdot v   -  \frac12  |\mu(s_k)| \, \xi_k(x) 
   |\tau_k \cdot v |^2     +     \mc O(|x-s_k|^2) \right]    
       \left(        1      +   
\mc O(\ve)    \right)              .       \nonumber        
\end{gather}
It follows from~\eqref{defF},~\eqref{TaylorF},~\eqref{TaylorgF},~\eqref{cepsF} and the two identities
\begin{equation}  \label{identitycosh}
\sum_{v\in \mc N}    |   \tau_k \cdot v|^2     e^{-\nabla f(x)\cdot v/2}      =      2            
 \sum_{j=1}^d  ( e_j\cdot\tau_k )^2    \cosh \tfrac{\partial_j f(x) }{2},   
\end{equation}
\begin{equation*}   \sum_{v\in \mc N}    (\tau_k \cdot v )^3  \, e^{-\nabla f(x) \cdot v/2}   
    =       -     2\sum_j   (e_j \cdot \tau_k)^3 \sinh \tfrac{\partial_j f(x)}{2}   
\end{equation*}
that for $k =1, \dots, n$ and $x\in \mc R'_{k}$ and $\ve>0$ small enough 
\begin{gather} 
 \frac 12  \sum_{v\in \mc N}    \frac 14\left[\kappa_\ve \, (x+\ve v) -  \kappa_\ve(x) \right]^2 
  e^{-F_\ve(x,v)/\ve}    =    \label{alphaformF}    \\
  \ve     e^{- f(s_k)/\ve}\frac{|\mu(s_k)|}{2\pi}  \  e^{-  \vp_k(x)/\ve}      \alpha_k(x)     \left(        1      +    
\mc O(\ve)       +      \mc O\left(|x-s_k|^2\right)      \right)      ,   \nonumber  
   \end{gather}
  where for shortness   we have set    $ \vp_k(x)      =     f(x)    - f(s_k)  +       |\mu(s_k)| \xi_k^2(x)   $    and
  \begin{gather*}    \alpha_k(x)    =   \\
      \chi^2(\xi_k(x))    \  \sum_{j=1}^d    \left\{
  (e_j\cdot\tau_k)^2    \cosh \frac{\partial_j f(x) }{2}          +       |\mu(s_k)|  \xi_k(x)  
   ( e_j\cdot\tau_k)^3 \sinh \frac{\partial_j f(x)}{2}   \right\}   =   \\
   1 +     \mc O(|x-s_k|^2)      .  
  \end{gather*}
Putting together~\eqref{fromRtoR'F}, \eqref{alphaformF}, using Proposition~\ref{GenPropositionLaplaceF}, summing over $k$ and the fact 
that $f(s_k) = h^*$ for every $k$ finally gives   
 \begin{gather*}
  \frac{\ve^d}{2}
           \sum_{x\in \cup_{k}\mc R_{k,\ve}  }  \sum_{v\in \mc N}    \frac 14\left[\kappa_\ve \, (x+\ve v) -  \kappa_\ve(x) \right]^2    e^{-F_\ve(x,v)/\ve}     =   \\
            \ve \sum_{k=1}^n \frac{|\mu(s_k)|}{2\pi}  \frac{(2\pi\ve)^{\frac d2}}{|\det \hess f(s_k)|^{\frac 12}}    \  e^{-\frac{h^*}{\ve} }    
\left(        1    +   
\mc O(\sqrt{\ve})    \right)   .    
 \end{gather*}

\

{      \it 2) Analysis on  $\mc B_\ve $.}

\, 

\noindent
As in Step 1) we get rid of $\theta$ by considering the set  $\mc B_\ve' =  \mc  B_\ve    \cap \mc S_f(  h^* +  \frac 14 \rho)  \subset \mc B$.   
Arguing as before and now using  that $\kappa_\ve(x)= \kappa_\ve(x+\ve v)$ for every $x\in \mc B_\ve'$, $v\in \mc N$ and $\ve$ sufficiently small, gives
then 
\begin{gather*}      \frac{\ve^d}{2}
           \sum_{x\in \mc B_{\ve} }  \sum_{v\in \mc N}    \frac 14\left[\theta\kappa_\ve \, (x+\ve v) -  \theta\kappa_\ve(x) \right]^2  e^{-F_\ve(x,v)/\ve}
               =     \nonumber \\
                  \frac{\ve^d}{2}
           \sum_{x\in \mc B'_{\ve} }  \sum_{v\in \mc N}    \frac 14\left[\kappa_\ve \, (x+\ve v) -  \kappa_\ve(x) \right]^2  e^{-F_\ve(x,v)/\ve} 
                +          \mc O(e^{- (h^* + \frac \rho 4)})    \\
                   =         \mc O(e^{- (h^* + \frac \rho 4)})  .                 \end{gather*}
\end{proof}

\begin{proof}[Proof of Proposition~\ref{PropCarre}]
The isomporphism~\eqref{conjugation} gives the identity
\begin{gather*} 
  \|  H_\ve \psi_\ve \|^2_{\ell^2(\ve\mb Z^d)}    = 
    \| \ve \Phi_\ve \left[  L_{\ve}   \Phi^{-1}_\ve[\psi]  \right]   \|^2_{\ell^2(\rho_\ve)}      =     \\ 
                   \tfrac{\ve^{d}}{4}
           \sum_{x\in \ve \mb Z^d}   
               \left(       \sum_{v\in \mc N}  e^{-\tfrac 12 \nabla_\ve f(x,v)}
               \ve \nabla_\ve  (\theta\kappa_\ve)  (x, v)          \right)^2
             e^{-\frac{f(x)}{\ve}}     . 
\end{gather*}
Since the function $\theta$ has support in $\mc S_f(  h^* +  \frac 34 \rho)$, we can restrict (for $\ve$ sufficiently small) the sum   
over $ \ve \mb Z^d$ to the bounded set $\ve \mb Z^d \cap \mc S_f(  h^* +   \rho)$.
As in the proof of Proposition~\ref{propDirichlet} we shall split the latter into the
disjoint sets $\cup_k  \mc R_{k,\ve}$,with $ \mc R_{k,\ve} := \ve \mb Z^d    \cap     (\mc R_{ k} 
 \setminus [\mc S_f(  h^* +   \rho)]^c )$, and  $\mc B_\ve :=  \ve \mb Z^d    \cap  \mc B  $.
 
 \

 \noindent
We discuss here in detail only the contribution coming from the sets $ \mc R_{k,\ve}$. Indeed the 
 sum over $\mc B_\ve$ can be neglected arguing exactly as in Step 2) of Proposition~\ref{propDirichlet} and using 
instead of~\eqref{TaylorgF} that by Taylor expansion 
\begin{equation}\label{TaylorrF}  
e^{-\tfrac 12 \nabla_\ve f(x,v) }  =        e^{- \nabla f(x)\cdot v/2}       \left(       1      +      \mc O(\ve)  \right)          .    
\end{equation}

 \

{      \it Analysis on  $\cup_{k=1}^n   \mc R_{k, \ve} $.}

\, 

\noindent
As in the proof of Proposition~\ref{propDirichlet} we first 
get rid of $\theta$ by taking a $\delta>0$ small enough such that  for each $k$ 
it holds   $B^\ve_\delta(s_k) \subset  \mc  R_{k,\ve}    \cap \mc S_f(  h^* +  \frac 14 \rho)  \subset \mc R_{k,\ve}$.   Since
$\theta$ and $\kappa_\ve$ are uniformly bounded in $\ve$ and 
$f\geq h^* + \frac 14 \rho$ 
on $\mc R_{k,\ve} \setminus B^\ve_\delta(s_k) $, we get
using~\eqref{TaylorrF} that, for $\ve>0$ sufficiently small (and thus also for 
$\ve\in (0,1]$), 
it holds 
\begin{gather}
   \tfrac{\ve^{d}}{4}
           \sum_{x\in \mc R_{k,\ve}}   
               \left(       \sum_{v\in \mc N}  e^{-\tfrac 12 \nabla_\ve f(x,v)}
               \ve \nabla_\ve  (\theta\kappa_\ve)  (x, v)          \right)^2
             e^{- f(x)/\ve}  
               =     \label{ridoftheta2}     \\
           \tfrac{\ve^{d}}{4}
           \sum_{x\in B_\delta^\ve(s_k)  }    
              \left(       \sum_{v\in \mc N}  e^{-\tfrac 12 \nabla_\ve f(x,v)}
               \ve \nabla_\ve  \kappa_\ve  (x, v)          \right)^2
             e^{- f(x)/\ve}        +          \mc O(e^{- (h^* + \frac \rho 4)})           .    \nonumber
\end{gather}
A computation already used in the proof of Proposition~\ref{propDirichlet} (see~\eqref{TaylorF})
yields
\begin{gather*}
       e^{   |\mu(s_k)| \xi_k^2(x)/(2\ve)}     C^{-1}_{k,\ve} 
       \, \ve \nabla_\ve \kappa_\ve  (x, v)  =  \\
       \ve      \chi(\xi_k(x))       \left[   \tau_k \cdot v   -  \frac12  |\mu(s_k)| \, \xi_k(x) 
   |\tau_k \cdot v |^2     +     \mc O(|x-s_k|^2) \right]    
       \left(        1      +   
\mc O(\ve)    \right)         . 
\end{gather*}
Hence, using~\eqref{cepsF},~\eqref{TaylorrF}, the identity~\eqref{identitycosh} and the identity 
\begin{equation*}   \sum_{v\in \mc N}    \tau_k \cdot v  
e^{- \nabla f(x) \cdot v/2}      =       -    
 2\sum_j   e_j\cdot\tau_k \sinh \tfrac{\partial_j f(x)}{2}          ,
\end{equation*}
one obtains 
\begin{gather}  
e^{   |\mu(s_k)| \xi_k^2(x)/(2\ve)}  
       \sum_{v\in \mc N}  e^{-\tfrac 12 \nabla_\ve f(x,v)}
               \ve \nabla_\ve  \kappa_\ve  (x, v)      
                 =  
\sqrt \ve   \alpha_k(x)   \left(        1      +   
\mc O(\ve)    \right)        ,        \label{defalpha2F}
\end{gather}
 with
 \begin{gather}    \alpha_k(x)       =    -  \sqrt{  \tfrac{2 |\mu(s_k)|}{\pi}}     \chi(\xi_k(x))      \,  \times         \nonumber \\
          \sum_{j=1}^d      \left[   2  e_j\cdot\tau_k \sinh \tfrac{\partial_j f(x)}{2}         
    +         |\mu(s_k)|  \xi_k(x) ( e_j\cdot\tau_k )^2    \cosh \tfrac{\partial_j f(x) }{2}   + 
      \mc O(|x-s_k|^2)   
        \right]        .     \label{defalpha}
\end{gather}
Observing that 
  \begin{gather*}         \sum_{j=1}^d        2  e_j\cdot\tau_k \sinh \tfrac{\partial_j f(x)}{2}         =       
  \lp \hess f(s_k) \tau_k, x-s_k\rp     +     \mc O(|x-s_k|^2)    =  \\ 
    -|\mu(s_k)|    \, \xi_k (x)      +     \mc O(|x-s_k|^2)            ,     
    \end{gather*}
  and that
      \begin{gather*}         \sum_{j=1}^d   
      ( e_j\cdot\tau_k )^2    \cosh \tfrac{\partial_j f(x) }{2}    =    1 +  \mc O(|x-s_k|^2) 
  \end{gather*}
shows that the first order terms in~\eqref{defalpha} cancel out and thus 
 $ \alpha_k(x)  =   \mc O(|x-s_k|^2)$. It follows then from~\eqref{ridoftheta2},~\eqref{defalpha2F} that
there exists a constant $C>0$ such that for every $\ve\in (0,1]$ and every $k=1, \dots, n$
\begin{gather*}
 \tfrac{\ve^{d}}{4}
           \sum_{x\in \mc R_{k,\ve}}   
               \left(       \sum_{v\in \mc N}  e^{-\tfrac 12 \nabla_\ve f(x,v)}
               \ve \nabla_\ve  (\theta\kappa_\ve)  (x, v)          \right)^2
             e^{- f(x)/\ve}      \leq     \\
             C     \ve^{d+1} e^{-f(s_k)/\ve} \sum_{x\in  B_\delta^\ve(s_k)}   |x-s_k|^4  e^{-\vp_k(x)/\ve}  
             =     \mc O(     \ve^3) e^{- h^*/\ve}     , 
           \end{gather*}
with $ \vp_k(x)      =     f(x)    +       |\mu(s_k)| \xi_k^2(x)  - f(s_k)  $ and 
with the last estimate following from Proposition~\ref{GenPropositionLaplaceF} by taking 
$m=2$.    
\end{proof}

\noindent{\bf Acknowledgements:}  The author gratefully acknowledges the financial support of  HIM Bonn in the framework of the 2019 Junior Trimester Programs ``Kinetic Theory'' and ``Randomness, PDEs and Nonlinear Fluctuations''.

\end{document}